\newtheorem{theorem}{Theorem}
\newtheorem{lemma}[theorem]{Lemma}
\newtheorem{proposition}[theorem]{Proposition}
\theoremstyle{definition}
\newcommand{\ZZ}{{\mathbb{Z}}}
\newcommand{\CC}{{\mathbb{C}}}
\newcommand{\RR}{{\mathbb{R}}}
\newcommand{\id}{{\mathbf{1}}}
\newcommand{\rd}{{\mathrm{d}}}
\newcommand{\calO}{{{\mathcal O}}}
\newcommand{\tOmega}{{\widetilde{\Omega}}}
\newcommand{\tO}{{\widetilde{\calO}}}
\newcommand{\IS}{{\mathcal S}}
\newcommand{\dt}{{\rm d}t}
\newcommand{\odts}{{\calO}(\dt^2)}
\renewcommand{\i}{{\mathbf i}}
\newcommand{\norm}[1]{\left\| {#1}\right\|}
\newcommand{\trnorm}[1]{\left\| {#1}\right\|_{\mathrm{tr}}}
\newcommand{\abs}[1]{\left| {#1}\right|}
\DeclareMathOperator{\polylog}{polylog}
\DeclareMathOperator{\poly}{poly}
\DeclareMathOperator{\diam}{diam}
\DeclareMathOperator{\dist}{dist}
\DeclareMathOperator{\Tr}{Tr}
\renewcommand{\>}{\rangle}
\newcommand{\para}{
	\@startsection{paragraph}{4}
	{\z@}{2ex \@plus 3.3ex \@minus .2ex}{-1em}
	{\normalfont\normalsize\bfseries}
}
\begin{document}

\title{\bfseries Quantum algorithm for simulating real\\ time evolution of lattice Hamiltonians}
\author{Jeongwan Haah\thanks{Microsoft Quantum and Microsoft Research, Redmond, Washington, USA} \quad Matthew B. Hastings$^{*,}$\thanks{Station Q, Microsoft Research, Santa Barbara, California, USA}\quad Robin Kothari$^*$\quad Guang Hao Low$^*$\\[.25em]
\texttt{\{jwhaah,mahastin,Robin.Kothari,GuangHao.Low\}@microsoft.com}}

\date{8 February 2020%
\footnote{A shorter version of this paper has appeared in 
\href{https://doi.org/10.1109/FOCS.2018.00041}{
{\it IEEE 59th Annual Symposium on Foundations of Computer Science} 
(FOCS), Paris, 2018, pp. 350-360}.
}
}

\maketitle

\begin{abstract}
    We study the problem of simulating the time evolution of a lattice Hamiltonian, where the qubits are laid out on a lattice and  the Hamiltonian only includes geometrically local interactions (i.e., a qubit may only interact with qubits in its vicinity). This class of Hamiltonians is very general and is believed to capture fundamental interactions of physics.
    
    Our algorithm simulates the time evolution of such a Hamiltonian on $n$ qubits for time $T$ up to error $\epsilon$ using $\calO(  nT \polylog (nT/\epsilon))$ gates with depth $\calO(T \polylog (nT/\epsilon))$. 
    Our algorithm is the first simulation algorithm that achieves gate cost quasilinear in $nT$ and polylogarithmic in $1/\epsilon$. 
    Our algorithm also readily generalizes to time-dependent Hamiltonians and yields an algorithm with similar gate count for any piecewise slowly varying time-dependent bounded local Hamiltonian. 
    
    We also prove a matching lower bound on the gate count of such a simulation, showing that any quantum algorithm that can simulate a piecewise constant bounded local Hamiltonian in one dimension to constant error requires $\tOmega(nT)$ gates in the worst case. The lower bound holds even if we only require the output state to be correct on local measurements. To our best knowledge, this is the first nontrivial lower bound on the gate complexity of the simulation problem.
    
    Our algorithm is based on a decomposition of the time-evolution unitary into a product of small unitaries using Lieb-Robinson bounds. In the appendix, we prove a Lieb-Robinson bound tailored to Hamiltonians with small commutators between local terms,
    giving zero Lieb-Robinson velocity in the limit of commuting Hamiltonians. This improves the performance of our algorithm when the Hamiltonian is close to commuting.
\end{abstract}

\section{Introduction}
    
\para{Background}

The problem of simulating the time evolution of a quantum system is perhaps the most important application of quantum computers. Indeed, this was the reason Feynman proposed quantum computing~\cite{Fey82}, and it remains an important practical application since a significant fraction of the world's supercomputing power is used to solve instances of this problem that arise in materials science, condensed matter physics, high energy physics, and chemistry~\cite{NERSC16}.
    
    All known classical algorithms (i.e., algorithms that run on traditional non-quantum computers) for this problem run in exponential time. On the other hand, from the early days of quantum computing~\cite{Fey82,Lloyd1996} it was known that quantum computers can solve this problem in polynomial time. More precisely, when formalized as a decision problem, the problem of simulating the time evolution of a quantum system is in the complexity class $\mathsf{BQP}$, the class of problems solved by a quantum computer to bounded error in polynomial time. Furthermore, the problem is complete for $\mathsf{BQP}$~\cite{Feynman1985,Chi04,Nagaj2009},
    which means we do not expect there to be efficient classical algorithms for the problem, since that would imply $\mathsf{BPP}=\mathsf{BQP}$, which in turn would imply polynomial-time algorithms for problems such as integer factorization and discrete log~\cite{Sho97}.

\para{Hamiltonian simulation problem} 

The Hamiltonian simulation problem is a standard formalization of the problem of simulating the time evolution\footnote{This is sometimes referred to as ``real time evolution'', to distinguish it from ``imaginary time evolution'' which we will not talk about in this paper.} of a quantum system. In this problem, we assume the quantum system whose time evolution we wish to simulate consists of $n$ qubits and we want to simulate its time evolution for time $T$, in the sense that we are provided with the initial state $|\psi(0)\>$ and we want to compute the state of the system at time $T$, $|\psi(T)\>$. The goal of an efficient simulation is to solve the problem in time polynomial in $n$ and $T$.
    
    The state of a system of $n$ qubits can be described by a complex vector of dimension $2^n$ of unit norm. Since we are studying quantum algorithms for the problem, we are given the input as an $n$-qubit quantum state, and have to output an $n$-qubit quantum state. The relation between the output state at time $T$ and the initial state at time $0$ is given by the Schr\"odinger equation
    \begin{align}
    \i\frac{\mathrm{d}}{\mathrm{d}t}|\psi(t)\>=H(t)|\psi(t)\>,
    \end{align}
    where the Hamiltonian $H$, a $2^n \times 2^n$ complex Hermitian matrix, has entries which may also be functions of time. The Hamiltonian captures the interaction between the constituents of the system and governs time dynamics. In the special case where the Hamiltonian is independent of time, the  Schr\"odinger equation can be solved to yield  $|\psi(T)\>=e^{-iHT}|\psi(0)\>$.
    
More formally, the input to the Hamiltonian simulation problem consists of a Hamiltonian $H$ 
(or $H(t)$ in the time-dependent case), a time $T$, and an error parameter $\epsilon$. 
The goal is to output a quantum circuit that approximates the unitary matrix 
that performs the time evolution above 
(e.g., for time-independent Hamiltonians, the quantum circuit should approximate the unitary $e^{-iHT}$). 
The notion of approximation used is the spectral norm distance 
between the ideal unitary $U$ and the one $V$ performed by the circuit.
This implies that the circuit as a quantum channel is close to the ideal one
in the completely bounded trace norm distance:
\begin{align*}
\sup_\rho \frac 1 2 \trnorm{ (V\otimes I) \rho (V\otimes I)^\dagger - (U\otimes I) \rho (U\otimes I)^\dagger }
\le \norm{V - U} \label{eq:cbnorm-opnorm}
\end{align*}
where $\trnorm{A} = \Tr \sqrt{A^\dagger A}$ is the trace norm, $\norm{ \cdot }$ is the spectral norm,
$I$ is the identity on an ancilla system, and $\rho$ ranges over all density matrices on the joint system.
The inequality is easily seen by using the triangle inequality of the trace norm 
and the inequality $\trnorm{AB} \le \trnorm{A}\norm{B}$ for any~$A,B$~\cite[Eq.~1.175]{WatrousBook}.
    
    The cost of a quantum circuit is measured by the number of gates used in the circuit, where the gates come from some fixed universal gate set. Note that it is important to describe how the Hamiltonian in the input is specified, since it is a matrix of size $2^n \times 2^n$. This will be made precise when talking about specific classes of Hamiltonians that we would like to simulate.
    
\para{Geometrically local Hamiltonian simulation}
    The most general class of Hamiltonians that is commonly studied in the literature is the class of sparse Hamiltonians~\cite{AT03,Chi04,BAC+07,BerryEtAl2014,TS,BCK15,QSP,LC16, LC17USA}. A Hamiltonian on $n$ qubits is sparse if it has only $\poly(n)$ nonzero entries in any row or column. For such Hamiltonians, we assume we have an efficient algorithm to compute the nonzero entries in each row or column, and the input Hamiltonian is specified by an oracle that can be queried for this information. In this model, recent quantum algorithms have achieved optimal complexity in terms of the queries made to the oracle~\cite{QSP}.
    
    A very important special case of this type of Hamiltonian is a ``local Hamiltonian.'' Confusingly, this term is used to describe two different kinds of Hamiltonians in the literature. 
    We distinguish these two definitions of ``local'' by referring to them as ``non-geometrically local'' and ``geometrically local'' in this introduction. A non-geometrically local Hamiltonian is a Hamiltonian $H(t)$ that can be written as a sum of polynomially many terms $H_j(t)$, each of which acts nontrivially on only $k$ qubits at a time (i.e., the matrix acts as identity on all other qubits). A geometrically local Hamiltonian is similar, except that each term $H_j(t)$ must act on $k$ adjacent qubits. Since we refer to ``adjacent'' qubits, the geometry of how the qubits are laid out in space must be specified. In this paper we will deal with qubits laid out in a $D$-dimensional lattice in Euclidean space. That is, qubits are located at points in $\mathbb{Z}^D$ and are adjacent if they are close in Euclidean distance.
In this paper, we consider $D$ as constant so asymptotic expressions will not show any dependence on $D$.
(However, we briefly discuss a strategy for large $D$ in \cref{app:extensions}.)

Geometrically local Hamiltonians are central objects in physics.
Indeed, fundamental forces of Nature (strong, weak, and electromagnetic interactions)
are modeled by geometrically local Hamiltonians on lattices~\cite{KogutSusskind1975}.
From a practical perspective, geometrically local Hamiltonians capture a large fraction of 
condensed matter systems we are interested in.\footnote{There are some physical situations where we do care about more general Hamiltonians. Even though the system we are given may be described by a geometrically local Hamiltonian, it is sometimes computationally advantageous to represent a given system with a non-geometrically local (or sparse) Hamiltonian.}

From now on, we will exclusively use the term ``local'' to refer to geometrically local Hamiltonians.
    
    \para{Prior best algorithms}
    To describe the known algorithms for this problem, we need to formally specify the problem. Although our results apply to very general time-dependent Hamiltonians, while comparing to previous work we assume the simpler case where the Hamiltonian is time independent.
    
    We assume our $n$ qubits are laid out in a $D$-dimensional lattice $\Lambda$ in $\mathbb{R}^D$, where $D=\calO(1)$, and every unit ball contains $\calO(1)$ qubits. We assume our Hamiltonian $H$ is given as a sum of terms $H = \sum_{X\subseteq\Lambda} h_X$, where each $h_X$ only acts nontrivally on qubits in $X$ (and acts as identity on the qubits in $\Lambda \setminus X$), such that $h_X=0$ if $\diam(X)>1$,
    which enforces geometric locality. 
    (More formally, we rescale the metric
    in such a way that $\diam(X) > 1$ implies $h_X = 0$.)
    We normalize the Hamiltonian by requiring $\norm{h_X}\leq 1$.
    
    We consider a quantum circuit simulating the time evolution due to such a Hamiltonian efficient if it uses $\poly(n,T,1/\epsilon)$ gates. To get some intuition for what we should hope for, notice that in the real world, time evolution takes time $T$ and uses $n$ qubits. Regarding ``Nature'' as a quantum simulator, we might expect that there is a quantum circuit that uses $\calO(n)$ qubits, $\calO(T)$ circuit depth, and $\calO(nT)$ total gates to solve the problem. It is also reasonable to allow logarithmic overhead in the simulation since such overheads are common even when one classical system simulates the time evolution of another (e.g., when one kind of Turing machine simulates another).
    
    However, previous algorithms for this problem fall short of this expectation. The best Hamiltonian simulation algorithms for sparse Hamiltonians~\cite{TS,QSP,LC16} have query complexity $\calO(nT \polylog(nT/\epsilon))$, but the assumed oracle for the entries requires $\calO(n)$ gates to implement, yielding an algorithm that uses $\calO(n^2T \polylog(nT/\epsilon))$ gates. This was also observed in a recent paper of Childs, Maslov, Nam, Ross, and Su~\cite{Childs2017}, who noted that for $T=n$, all the sparse Hamiltonian simulation algorithms had gate cost proportional to $n^3$ (or worse). A standard application of high-order Lie-Trotter-Suzuki expansions~\cite{Trotter1959,Suzuki1991,Lloyd1996,BAC+07} yields gate complexity $\calO(n^2T (nT/\epsilon)^{\delta})$ for any fixed $\delta > 0$. It has been argued~\cite[Sec.~4.3]{JordanLeePreskill2014} that 
    this in fact yields an algorithm with gate complexity $\calO(nT (nT/\epsilon)^{\delta})$ for any fixed $\delta > 0$. We believe this analysis is correct, but perhaps some details need to be filled in to make the analysis rigorous. In any case, this algorithm still performs  worse than desired, and in particular does not have polylogarithmic dependence on $1/\epsilon$.
    
    \subsection{Results} We exhibit a quantum algorithm that simulates the time evolution due to a time-dependent lattice Hamiltonian with a circuit that uses $\calO(nT \polylog(nT/\epsilon))$ geometrically local $2$-qubit gates (i.e., the gates in our circuit also respect the geometry of the qubits), with depth  $\calO(T \polylog(nT/\epsilon))$ using only $\polylog(nT/\epsilon)$ ancilla qubits. We then also prove a matching lower bound, showing that no quantum algorithm can do better (up to logarithmic factors), even if we relax the output requirement significantly. We now describe our results more formally.
    
    \para{Algorithmic results}
    We consider a more general version of the problem with time-dependent Hamiltonians. In this case we will have $H(t) = \sum_{X\subseteq \Lambda} h_X(t)$ with the locality and norm conditions as before. However, now the operators $h_X(t)$ are functions of time and we need to impose some reasonable constraints on the entries to obtain polynomial-time algorithms.
    
    First we need to be able to compute the entries of our Hamiltonian efficiently at a given time $t$.
    We say that a function $\alpha: [0,T] \ni t \mapsto \alpha(t) \in  \RR$ 
    is \emph{efficiently computable}
    if there is an algorithm that outputs $\alpha(t)$ to precision $\epsilon$ 
    for any given input $t$ specified to precision $\epsilon$
    in running time $\polylog(T/\epsilon)$.
    Note that any complex-valued analytic function 
    on a nonzero neighborhood of a closed real interval in the complex plane
    is efficiently computable (see \cref{app:chebyshev}).
    We will assume that each entry in a local term $h_X(t)$ is efficiently computable.
    
    In addition to being able to compute the entries of the Hamiltonian, we require that the entries do not change wildly with time; otherwise, a sample of entries at discrete times may not predict the behavior of the entries at other times.
    We say that a function $\alpha$ on the interval $[0,T]$ ($T \ge 1$) is 
    \emph{piecewise slowly varying}
    if there are $M = \calO(T)$ intervals $[t_{j-1},t_j]$ 
    with $0 = t_0 < t_1 < \cdots < t_M = T$
    such that $\frac{\mathrm{d}}{\mathrm{d}t}\alpha(t)$ exists and is bounded 
    by $1/(t_j - t_{j-1})$ for $t \in (t_{j-1},t_j)$.
    In particular, a function is piecewise slowly varying if it is a sum of $\calO(T)$ pieces, each of which has derivative at most $\calO(1)$. 
    We will assume that each entry in a term $h_X(t)$ is piecewise slowly varying. 
    
    We are now ready to state our main result, which is proved in \Cref{sec:algorithm}
    
    \begin{restatable}{theorem}{main}
        Let $H(t) = \sum_{X \subseteq \Lambda} h_X(t)$ be a time-dependent Hamiltonian
        on a lattice $\Lambda$ of $n$ qubits, embedded in the Euclidean metric space $\mathbb R^D$.
        Assume that every unit ball contains $\calO(1)$ qubits and	$h_X = 0$ if $\diam(X) > 1$.
        Also assume that every local term $h_X(t)$ is efficiently computable (e.g., analytic), piecewise slowly varying on time domain $[0,T]$, and has  $\norm{h_X(t)} \le 1$ for any $X$ and $t$.
        Then, there exists a quantum algorithm that can approximate the time evolution of $H$
        for time $T$ to accuracy $\epsilon$ using $\calO( Tn \polylog(Tn/\epsilon))$ $2$-qubit local gates,
        and has depth $\calO(T \polylog(Tn/\epsilon))$.
        \label{thm:main}
    \end{restatable}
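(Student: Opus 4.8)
The plan is to decompose the time-evolution unitary into a product of unitaries, each supported on a small region of the lattice, using Lieb--Robinson bounds to control the approximation error, and then to simulate each small unitary via an off-the-shelf Hamiltonian simulation subroutine (e.g. the optimal sparse-Hamiltonian methods of \cite{QSP,LC16} or a high-order product formula) applied to a constant number of qubits. Since each small unitary acts on $\calO(1)$ qubits, each can be synthesized with $\polylog(1/\epsilon')$ gates for the relevant per-block error $\epsilon'$, and arranging the blocks so that disjoint blocks are simulated in parallel will give the claimed depth bound.

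The key geometric step is a ``divide-and-conquer'' or ``patching'' decomposition. First I would handle the time-independent case (or a single slowly-varying piece of duration $\calO(1)$), and chop $[0,T]$ into $\calO(T)$ time slices of length $\calO(1)$; for a piecewise slowly varying Hamiltonian, on each such slice $H(t)$ is close to a constant Hamiltonian up to error $\calO(1)$ in a suitable integrated sense, and this can be further refined so that the induced error is controllable. Within one time slice, I would partition the lattice $\Lambda$ into $\calO(n)$ cells of diameter $\calO(1)$, arranged so that cells are grouped into $\calO(1)$ color classes with cells of the same color mutually far apart. The Lieb--Robinson bound says that $e^{-iHt}$ for $t=\calO(1)$ is, up to exponentially small error in the buffer radius, a product of unitaries each supported on a cell together with a buffer of radius $r = \calO(\log(nT/\epsilon))$ around it; choosing $r$ polylogarithmic makes the accumulated error (over $\calO(nT)$ blocks and time slices) at most $\epsilon$. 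Each buffered block still acts on only $\polylog(nT/\epsilon)$ qubits, so simulating its restricted time evolution costs $\polylog(nT/\epsilon)$ gates and $\polylog(nT/\epsilon)$ depth; summing over $\calO(nT)$ blocks gives the total gate count $\calO(nT\polylog(nT/\epsilon))$, and since within a time slice all same-colored blocks act in parallel, the depth per slice is $\polylog(nT/\epsilon)$ and over $\calO(T)$ slices it is $\calO(T\polylog(nT/\epsilon))$.

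The main obstacle is making the Lieb--Robinson decomposition into a \emph{product} of exactly-supported local unitaries precise and quantitative with the right error bookkeeping: the naive statement of a Lieb--Robinson bound controls commutators $\norm{[e^{iHt}Ae^{-iHt}, B]}$, and one must convert this into a statement that $e^{-iHt}$ itself factorizes (up to small error) as a product over a cover of the lattice. The standard tool here is to define, for each block, a ``restricted'' Hamiltonian $H_{\mathrm{block}}$ (the sum of only those $h_X$ supported inside the buffered block) and show $e^{-iHt} \approx \prod_{\mathrm{blocks}} e^{-iH_{\mathrm{block}} t}$ by a telescoping argument in which each step of the telescope is bounded using a Lieb--Robinson-type leakage estimate; controlling the telescoping error requires care because the number of factors grows with $n$, so the per-factor error must be $\epsilon/\poly(nT)$, which is exactly why $r=\polylog$ suffices (exponential decay of Lieb--Robinson with $r$). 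A secondary technical point is the time-dependent case: one must check that the Lieb--Robinson bound and the decomposition survive when $H(t)$ varies, which holds because the Lieb--Robinson velocity depends only on the norm bound $\norm{h_X(t)}\le 1$ and the geometry, both of which are uniform in $t$; the slowly-varying condition is then used only to argue that within each of the $\calO(T)$ pieces the evolution over that piece can be simulated by the subroutine with $\polylog$ cost (using that $\int \norm{H'(t)}\,\rd t$ over a piece is $\calO(1)$). Finally, one must account for the cost of classically computing the entries of each $h_X(t)$ at the required times, which is $\polylog(T/\epsilon)$ per entry by the efficient-computability hypothesis and hence absorbed into the stated bounds.
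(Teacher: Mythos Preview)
Your overall strategy---slice time into $\calO(T)$ pieces, use Lieb--Robinson bounds to localize the evolution within each slice onto blocks of polylogarithmic size, then simulate each block with a subroutine achieving $\polylog(1/\epsilon)$ cost---is exactly the paper's approach, and your error bookkeeping (exponential Lieb--Robinson decay versus $\poly(nT)$ accumulated factors, hence buffer radius $\calO(\log(nT/\epsilon))$) is correct.

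The gap is in the form of the factorization you are aiming for. The target is \emph{not} the simple product $e^{-iHt} \approx \prod_{\text{blocks}} e^{-iH_{\text{block}}t}$ that you wrote down: since buffered blocks necessarily overlap, the Hamiltonian terms in an overlap region get evolved several times in such a product, and the resulting error is $\calO(t)$ rather than exponentially small in the buffer width. The paper's decomposition (due to Osborne) instead reads, for disjoint regions $A,B,C$,
\[
U_t^{H_{A\cup B\cup C}} \;\approx\; U_t^{H_{A\cup B}}\,\bigl(U_t^{H_B}\bigr)^{\dagger}\,U_t^{H_{B\cup C}},
\]
with error $\calO\bigl(e^{-\mu\,\dist(A,C)}\bigr)$ times the norm of the boundary terms between $A\cup B$ and $C$. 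The \emph{backward} evolution $(U_t^{H_B})^{\dagger}$ on the overlap $B$ is the essential correction that undoes the double-counting. This is precisely what your telescoping idea yields once carried out: write $U_t^{H_{ABC}} = U_t^{H_{AB}+H_C}\cdot W_t$, observe that the interaction-picture factor $W_t$ is generated by $(U_t^{H_{AB}+H_C})^\dagger H_{\text{bd}}\,U_t^{H_{AB}+H_C}$, and then use Lieb--Robinson to replace $H_{AB}$ by $H_B$ in that generator at exponentially small cost; the resulting approximate generator integrates to $(U_t^{H_B+H_C})^\dagger U_t^{H_{BC}}$. Iterating this two-region splitting (directly along the chain in 1D, via hyperplanes or the colored tessellation you suggest in higher $D$) produces a product of forward \emph{and backward} block evolutions, not a pure forward product. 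With that correction in place, the remainder of your argument goes through as written.
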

    
Our algorithm uses $\calO(1)$ ancillas per system qubit on which $H$ is defined.
The ancillas are interspersed with the system qubits,
and all the gates respect the locality of the lattice.
    
\para{Lower bounds}

    We also prove a lower bound on the gate complexity of simulating the time evolution of a time-dependent lattice Hamiltonian. This lower bound matches, up to logarithmic factors, the gate complexity of the algorithm presented in \Cref{thm:main}. Note that unlike previous lower bounds on Hamiltonian simulation~\cite{BAC+07,BerryEtAl2014,BCK15}, which prove lower bounds on query complexity, this is a lower bound on the number of gates required to approximately implement the time-evolution unitary. To our best knowledge, this is the first nontrivial lower bound on the gate complexity of the simulation problem. For concreteness, we focus on a 1-dimensional time-dependent local Hamiltonian in this section, although the lower bound extends to other constant dimensions with minor modifications. The lower bounds are proved in \Cref{sec:lowerbound}.
    
Before stating the result formally, let us precisely define the class of Hamiltonians for which we prove the lower bound.
    We say a Hamiltonian $H(t)$ acting on $n$ qubits is a 
    ``piecewise constant 1D Hamiltonian''
    if $H(t)=\sum_{j=1}^{n-1}H_j(t)$, 
    where $H_j(t)$ is only supported on qubits $j$ and $j+1$ with $\max_t \norm{H_j(t)} = \calO(1)$,
    and there is a time slicing $0 = t_0 < t_1 < \cdots < t_M = T$
    where $t_m - t_{m-1} \le 1$ and $M = \calO(T)$ such that 
    $H(t)$ is time-independent within each time slice.
    
    For such Hamiltonians, the time evolution operator for time $T$ can be simulated with error at most $\epsilon$ 
    using \Cref{thm:main} with $\calO(Tn \polylog(Tn/\epsilon))$ $2$-qubit local gates 
    (i.e., the 2-qubit gates only act on adjacent qubits). 
    In particular, for any constant error, the simulation only requires $\tO(Tn)$ 2-qubit local gates. 
    We prove a matching lower bound, where the lower bound even holds against circuits that may use non-geometrically local (i.e., acting on non-adjacent qubits) $2$-qubit gates from a possibly infinite gate set and unlimited ancilla qubits. 
    
    \begin{restatable}{theorem}{lowerboundgeneral}
        \label{thm:lowerboundgeneral}
        For any integers $n$ and $T \leq 4^n$, 
        there exists a piecewise constant bounded 1D Hamil{-}tonian $H(t)$ on $n$ qubits,
        such that any quantum circuit that approximates the time evolution due to $H(t)$ for time $T$ to constant error must use $\tOmega(Tn)$ $2$-qubit gates. 
        The quantum circuit may use unlimited ancilla qubits and the gates may be non-local and come from a possibly infinite gate set.
    \end{restatable}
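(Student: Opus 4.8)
\emph{Sketch of the intended proof.}\quad
The plan is an incompressibility argument. First I would exhibit a family $\{H^{(s)}\}_{s\in\mathcal S}$ of $\abs{\mathcal S}=2^{\Omega(Tn)}$ piecewise constant bounded 1D Hamiltonians such that, writing $U^{(s)}$ for the corresponding time-evolution operators, the output states $U^{(s)}\ket{0^n}$ are pairwise $\Omega(1)$-far; in particular no single unitary is close to two of the $U^{(s)}$. Then I would show that only $2^{\tO(g)}$ essentially distinct $n$-qubit unitaries can be implemented by a circuit with $g$ two-qubit gates, even with an infinite gate set and unlimited ancillas. Since one circuit approximates at most one member of the family, not all $H^{(s)}$ can be cheap, so some piecewise constant bounded 1D Hamiltonian forces $\tOmega(Tn)$ gates.

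For the family I would encode 1D brickwork circuits into Hamiltonians. Any two-qubit gate $G\in U(4)$ equals $e^{-iK}$ with $\norm{K}=\calO(1)$, and the (at most $n/2$) gates of one brickwork layer act on disjoint neighboring pairs, so they can be applied simultaneously by letting $H(t)$ be the sum of the corresponding $K$'s over one unit-length time slice. A depth-$\Theta(T)$ brickwork circuit on the $n$-qubit line thus becomes a piecewise constant bounded 1D Hamiltonian with $M=\calO(T)$ slices; letting its $\Theta(Tn)$ gates vary gives a $\Theta(Tn)$-parameter family. The hypothesis $T\le 4^n$ is what makes it possible for this family to contain $2^{\Omega(Tn)}$ operators whose outputs on $\ket{0^n}$ are pairwise well separated inside the $2^n$-dimensional state space; for much larger $T$ no lower bound of this form can hold, since every $n$-qubit unitary is exactly realizable with $\calO(4^n)$ gates.

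Establishing this separation is the crux, and the step I expect to be the real obstacle. The natural route is a concentration argument over Haar-random gates: using the bounded causal width of a depth-$\Theta(T)$ circuit together with anti-concentration of shallow 1D random circuits, one shows that two independent samples satisfy $\abs{\bra{0^n}(U^{(1)})^\dagger U^{(2)}\ket{0^n}}=2^{-\Omega(n)}$, hence are $\Omega(1)$-far, except with probability exponentially small in $Tn$; a union bound over a random subfamily of size $2^{cTn}$ for a small enough constant $c$ then keeps it pairwise separated. (An alternative is differential: at a generic parameter the Jacobian of $s\mapsto U^{(s)}$ has its perturbation directions in slice $m$ given by the layer-$m$ local terms conjugated by the partial circuit --- operators living on sliding windows, generically independent --- so the Jacobian has rank $\Theta(\min(Tn,4^n))$, and one can extract a $\delta$-packing of size $2^{\Omega(Tn)}$ from the corresponding high-dimensional piece of the image. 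Making the outputs literally orthogonal, by contrast, is only possible in the uninteresting regime $Tn=\calO(n)$.)

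Finally I would count cheap circuits. A $g$-gate circuit touches at most $2g$ qubits, so untouched ancillas may be discarded and we may assume it has at most $2g$ qubits in total --- and $g\ge n/2$, since otherwise the circuit is the identity on some system qubit and cannot approximate any $U^{(s)}$, which by construction acts nontrivially everywhere. There are at most $(2g)^{2g}$ choices of which ordered pair each gate acts on, and replacing every gate by the nearest element of a $(\delta/g)$-net of $U(4)$ --- which has size $(\calO(g/\delta))^{16}$ since $U(4)$ is a compact $16$-dimensional manifold --- moves the implemented unitary by at most $g\cdot(\delta/g)=\delta$. Hence every $g$-gate circuit is $\delta$-close to one of $2^{\calO(g\log(g/\delta))}$ canonical unitaries. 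Taking $\delta$ a small constant compared to the separation above: if every $U^{(s)}$ were $\delta$-approximable by some $g$-gate circuit, each canonical unitary would be within $2\delta$ of at most one $U^{(s)}$, so $2^{\calO(g\log(g/\delta))}\ge\abs{\mathcal S}=2^{\Omega(Tn)}$, giving $g=\Omega(Tn/\log(Tn))=\tOmega(Tn)$. Therefore some piecewise constant bounded 1D Hamiltonian requires $\tOmega(Tn)$ two-qubit gates, as claimed.
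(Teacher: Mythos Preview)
Your overall strategy---an incompressibility argument pairing a large well-separated family of circuit-generated Hamiltonians against an upper bound on the number of unitaries realizable by short circuits---is exactly the paper's approach, and your encoding of brickwork circuits into piecewise constant Hamiltonians is precisely its \Cref{lem:circuittoHamiltonian}. Your counting of cheap circuits via a $(\delta/g)$-net in $U(4)$ is a clean variant of the paper's Solovay--Kitaev-based \Cref{lem:Ucountupper} and gives the same $2^{\tO(g\log g)}$ bound.

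The substantive difference is in the step you flag as the obstacle: producing $2^{\Omega(Tn)}$ pairwise-separated members of the family. You propose anti-concentration for random brickwork circuits (delicate, and the union bound over $2^{cTn}$ pairs demands failure probability $2^{-\Omega(Tn)}$, not just $2^{-\Omega(n)}$) or a Jacobian rank argument; the paper instead uses a much simpler \emph{block construction}. It partitions the $n$ qubits into $n/k$ disjoint blocks of size $k=\log_4 T$, and on each block independently selects a unitary from a $0.1$-packing of $U(2^{k'})$ (with $k'=k-\Theta(\log k)$), which by a volume argument has size $2^{\Omega(4^{k'})}=2^{\tOmega(T)}$. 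Any $k'$-qubit unitary is a product of $\tO(4^{k'})$ two-qubit gates, hence depth at most $T$ after localizing. Tensoring over blocks gives $2^{\tOmega(Tn)}$ unitaries; diamond-norm separation is immediate, since if two tensor products differ in some block one simply feeds in a distinguishing state on that block. This bypasses the random-circuit analysis entirely. Note also that you ask for separation of the \emph{output states} $U^{(s)}\ket{0^n}$, which is stronger than needed and correspondingly harder; diamond-norm separation (equivalently, existence of \emph{some} distinguishing input) suffices for the counting step, and is what the block construction delivers for free.
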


    Note that this lower bound only holds for $T\leq 4^n$, because any unitary on $n$ qubits can be implemented with $\tO(4^n)$ 2-qubit local gates~\cite{BBC+95,Kni95}. 
    
    We can also strengthen our lower bound to work in the situation where we are only interested in measuring a local observable at the end of the simulation. The simulation algorithm presented in \Cref{thm:main} provides a strong guarantee: the output state is $\epsilon$-close to the ideal output state in trace distance. Trace distance captures distinguishability with respect to arbitrary measurements, but for some applications it might be sufficient for the output state to be close to the ideal state with respect to local measurements only. We show that even in this limited measurement setting, it is not possible to speed up our algorithm in general. In fact, our lower bound works even if the only local measurement performed is a computational basis measurement on the first output qubit.
    
    \begin{restatable}{theorem}{lowerboundlocal}
        \label{thm:lowerboundlocal}
        For any integers $n$ and $T$ such that $1 \le n\leq T\leq 2^n$, 
        there exists a piecewise constant bounded 1D Hamiltonian $H(t)$ on $n$ qubits, 
        such that any quantum circuit that approximates the time evolution due to $H(t)$ for time $T$ to constant error on any local observable must use $\tOmega(Tn)$ 2-qubit gates. 
        If $T\leq n$, we have a lower bound of $\tOmega(T^2)$ gates. 
        (The quantum circuit may use unlimited ancilla qubits and the gates may be non-local and come from a possibly infinite gate set.)
    \end{restatable}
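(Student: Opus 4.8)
The plan is to prove the bound by a counting argument over circuits, instantiated on a family of hard Hamiltonians whose time-$T$ evolutions are \emph{permutation} unitaries on the computational basis. The point of using permutations is that operator-norm distance between two such unitaries is automatically witnessed by a single computational-basis measurement of one output qubit, so a bound in the spirit of \Cref{thm:lowerboundgeneral} survives the much weaker ``local observable'' success criterion. Concretely I would (i) upper bound the number of essentially distinct $g$-gate circuits; (ii) construct $2^{\Omega(T\min\{T,n\})}$ permutation instances that are realizable as bounded piecewise-constant $1$D evolutions of length $\O(T)$ and are pairwise distinguishable by a one-qubit measurement; and (iii) compare the two counts.

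For (i), fix a small constant $\epsilon$ and call two circuits equivalent if they induce the same computational-basis statistics on output qubit $1$ for every input state. A circuit with $g$ two-qubit gates touches at most $2g$ qubits, so without loss of generality it uses at most $2g$ ancillas. Replacing each $2$-qubit gate by the nearest point of a $\delta$-net of $\mathrm{U}(4)$ with $\delta = \Theta(\epsilon/g)$ perturbs every output distribution by at most $\epsilon$ in total variation, and such a net has $(1/\delta)^{\O(1)}=(\O(g/\epsilon))^{\O(1)}$ points; choosing the support of each gate among the $\le n+2g$ active wires costs a further factor $\binom{n+2g}{2}$ per gate. Hence the number of distinct output-qubit-$1$ behaviors realizable with $g$ gates is at most $\big(\binom{n+2g}{2}\,(\O(g/\epsilon))^{\O(1)}\big)^{g} = 2^{\,\O(g\log(ng/\epsilon))}$, and this bound is valid for arbitrary, possibly infinite, gate sets and unlimited ancillas.

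For (ii), I would take the hard-instance construction underlying \Cref{thm:lowerboundgeneral}, retain only its permutation instances (this restriction is what confines the present argument to $T\le 2^{n}$, since there are only $(2^{n})! = 2^{\Theta(n2^{n})}$ permutations in total), and place the active part of the Hamiltonian on the $m:=\min\{T,n\}$ qubits closest to qubit $1$, leaving the other $n-m$ qubits inert. Over time $T$ the $1$D geometry offers $\Theta(Tm)$ independent nearest-neighbor ``instruction'' parameters; arranging the dynamics as a reversible pipeline that absorbs a fresh block of instruction bits per unit time and whose action on a bank of basis ``probe'' inputs records those bits yields $2^{\Omega(Tm)}$ distinct permutations, with the number of available probe states again forcing $Tm\lesssim 2^{m}$. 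Since distinct permutations $\pi\ne\pi'$ disagree on some basis string $x$ and hence in the value of some output bit $i$, measuring qubit $i$ of $U_{\pi}|x\rangle$ versus $U_{\pi'}|x\rangle$ distinguishes them with gap $1$; appending an $\O(m)=\O(T)$-time reversible ``read-out sweep'' that transports a disagreeing bit to position $1$ makes the distinguishing qubit be qubit $1$ itself, as in the statement, without changing the asymptotics.

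For (iii): if every $U_{\pi}$ in the family were $\epsilon$-approximable on qubit-$1$ statistics, for some $\epsilon<1/2$, by a $g$-gate circuit, then by (ii) these circuits are pairwise inequivalent, so $2^{\,\O(g\log(ng/\epsilon))}\ge 2^{\Omega(Tm)}$; as any circuit with $g>Tn$ gates already beats the claimed bound we may assume $g\le Tn$, whence $\log(ng/\epsilon)=\O(\log(nT/\epsilon))=\polylog(nT/\epsilon)$ and $g=\tOmega(Tm)$, i.e.\ $\tOmega(Tn)$ for $n\le T\le 2^{n}$ and $\tOmega(T^{2})$ for $1\le T\le n$. (The $\tOmega(T^{2})$ ceiling is intrinsic here: by the Lieb--Robinson bound only a width-$\O(T)$ spatial strip can influence the measured qubit within time $T$, so no argument of this kind can do better than $\tOmega(T\cdot T)$ when $T<n$.) The step I expect to be the main obstacle is (ii): a priori two unitaries can be maximally far apart in operator norm yet induce identical single-qubit statistics on every input (e.g.\ differing only by relative phases or only on entangled inputs), so the operator-norm hard family of \Cref{thm:lowerboundgeneral} is not automatically hard under local measurements. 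Making all disagreements ``classical'' and hence visible in some output bit, while keeping $2^{\Omega(T\min\{T,n\})}$ instances mutually distinguishable, realizable as bounded $1$D piecewise-constant evolutions of length $\O(T)$, and ultimately readable off a single designated qubit, is the technical heart, and it is exactly this fan-in/read-out requirement that caps the argument at $T\lesssim 2^{n}$ and produces the $T^{2}$-versus-$Tn$ dichotomy.
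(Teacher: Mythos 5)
Your overall architecture — count the distinguishable behaviors that piecewise-constant $1$D evolutions of length $T$ can produce on a fixed output qubit, count the behaviors a $g$-gate arbitrary-gate-set circuit can produce, and compare — is exactly the paper's. Your step (i) is the paper's \Cref{lem:countupper} up to cosmetics (you use an explicit $\delta$-net of $\mathrm{U}(4)$ where the paper boosts and then invokes Solovay--Kitaev, both yielding $2^{\tO(g\log n)}$), and your step (iii) is the same arithmetic. The gap is step (ii), and it is a genuine one, not merely a detail: you need a \emph{concrete} family of $2^{\tOmega(Tn)}$ depth-$T$ local circuits that are pairwise distinguishable by measuring the \emph{same, fixed} output qubit in the computational basis, and you do not supply one.

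Two specific problems with the sketch of (ii). First, ``retain only the permutation instances of the hard family underlying \Cref{thm:lowerboundgeneral}'' is not available: that family (\Cref{lem:Ucountlower}) is a $0.1$-separated packing of roughly $2^{\Omega(4^{k'})}$ unitaries in $\mathrm{U}(2^{k'})$, whereas there are only $(2^{k'})!=2^{\Theta(k'2^{k'})}$ permutations on $k'$ qubits, so one cannot hope to find that many permutations inside it, nor is it clear the packing contains any. Second, the ``read-out sweep that transports a disagreeing bit to position $1$'' cannot be a single fixed circuit appended to $U_\pi$: which output bit $i$ and which input $x$ witness a disagreement depends on the pair $(\pi,\pi')$, while the lower bound requires every member of the family to be run through the \emph{same} post-processing and then measured on qubit $1$. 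This is exactly the obstacle the paper's \Cref{lem:countlower} is designed to sidestep: it builds the qubit-$1$ distinguishability directly into the family, by taking, on each $k\approx\log_2 T$-qubit block, all Boolean functions on $k'=k-\Theta(\log k)$ bits realized as XORs of point functions $f_x$ (each $f_x$ costing $\tO(k')$ local gates, hence total depth $\tO(2^{k'})\le T$), restricting to functions with $f(0^{k'})=0$ so that the block functions can be recovered from the parity, and CNOT-aggregating the $n/k$ block outputs onto qubit $1$ in depth $n\le T$. That yields $2^{\tOmega(Tn)}$ circuits whose qubit-$1$ outputs are pairwise distinct Boolean functions, which is what your (iii) consumes. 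Without some such explicit construction your argument does not close, and you have correctly flagged this as the technical heart; the paper's Boolean-function route is the missing piece.
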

    
    Note that the fact that we get a weaker lower bound of $\tOmega(T^2)$ when $T\leq n$ is not a limitation, but reflects the fact that small time evolutions are actually easier to simulate when the measurement is local. To see this, consider first simulating the time evolution using the algorithm in \Cref{thm:main}. This yields a circuit with $\tO(Tn)$ 2-qubit local gates. But if we only want the output of a local measurement after time $T$, qubits that are far away from the measured qubits cannot affect the output, since the circuit only consists of 2-qubit local gates. Hence we can simply remove all gates that are more than distance equal to the depth of the circuit, $\tO(T)$, away from the measured qubits. We are then left with a circuit that uses $\tO(T^2)$ gates, matching the lower bound in \Cref{thm:lowerboundlocal}.
    
\subsection{Techniques}
    
    \para{Algorithm}
    Our algorithm is based on a decomposition of the time evolution unitary using Lieb-Robinson bounds~\cite{LiebRobinson1972,Hastings2004LSM,NachtergaeleSims2006,HastingsKoma2006,Hastings2010},
    that was made explicit by Osborne~\cite{Osborne2006} (see also Michalakis~\cite[Sec.~III]{Michalakis2012}),
    which when combined with recent advances in Hamiltonian simulation~\cite{TS,QSP,LC16},
    yields \Cref{thm:main}. 
    
    Lieb-Robinson bounds are theorems that informally state that information travels at a constant speed in geometrically local Hamiltonians. For intuition, consider a 1-dimensional lattice of qubits and a geometrically local Hamiltonian that is evolved for a short amount of time. If the time is too short, no information about the first qubit can be transmitted to the last qubit. Lieb-Robinson bounds make this intuition precise, and show that the qubit at position $n$ is only affected by the qubits and operators at position 1 after time $\Omega(n)$. Note that if this were a small-depth unitary circuit of geometrically local $2$-qubit gates such a statement would follow using a ``lightcone'' argument. In other words, after one layer of geometrically local $2$-qubit gates, the influence of qubit 1 can only have spread to qubit 2. Similarly, after $k$ layers of $2$-qubit gates, the influence of qubit 1 can only have spread up to qubit $k$. The fact that this extends to geometrically local Hamiltonians is nontrivial, and is only approximately true. See \Cref{lem:LRB} for a formal statement of a Lieb-Robinson bound.
    
    We use these ideas to chop up the large unitary that performs time evolution for the full Hamiltonian $H$ into many smaller unitaries that perform time evolution for a small portion of the Hamiltonian. Quantitatively, we break Hamiltonian simulation for $H$ for time $\calO(1)$ into $\calO(n/\log(nT/\epsilon))$ pieces, each of which is a Hamiltonian simulation problem for a Hamiltonian on an instance of size $\calO(\log(nT/\epsilon))$ to exponentially small error. At this point we can use any Hamiltonian simulation algorithm for the smaller piece as long as it has polynomial gate cost and has exponentially good dependence on $\epsilon$. While Hamiltonian simulation algorithms based on product formulas do not have error dependence that is $\polylog(1/\epsilon)$, recent Hamiltonian simulation algorithms, such as \cite{BerryEtAl2014,TS,BCK15,QSP,LC16} have $\polylog(1/\epsilon)$ scaling. 
Thus our result importantly uses the recent advances in Hamiltonian simulation with improved error scaling.
    
    \para{Lower bound} As noted before, we lower bound the gate complexity (or total number of gates) required for Hamiltonian simulation, which is different from prior work which proved lower bounds on the query complexity of Hamiltonian simulation. As such, our techniques are completely different from those used in prior work. Informally, our lower bounds are based on a refined circuit-size hierarchy theorem for quantum circuits, although we are technically comparing two different resources in two different models, which are simulation time for Hamiltonians versus gate cost for circuits. 
    
    As a simple motivating example, consider circuit-size hierarchy theorems for classical or quantum circuits more generally. Abstractly, a hierarchy theorem generally states that a computational model with $X$ amount of a resource (e.g., time, space, gates) can do more if given more of the same resource. For example, it can be shown that for every $G \ll 2^n/n$, there exists a Boolean function on $n$ bits that cannot be computed by a circuit of size $G$, but can be computed by a circuit of size $G+\calO(n)$. We show similar hierarchy theorems for quantum circuit size, except that we show that the circuit of larger size that computes the function actually comes from a weaker family of circuits. Informally, we are able to show that there are functions that can be computed by a larger circuit that uses only geometrically local $2$-qubit gates from a fixed universal gate set that cannot be a computed by a smaller circuit, even if we allow the smaller circuit access to unlimited ancilla bits and non-geometrically local $2$-qubit from an infinite gate set. We then leverage this asymmetric circuit size hierarchy theorem to show that there is a Hamiltonian whose evolution for time $T$ cannot be simulated by a circuit of size $\ll nT$, by embedding the result of any quantum circuit with geometrically local $2$-qubit gates into a piecewise constant Hamiltonian with time proportional to the depth of the circuit.

\subsection{Organization of the paper}

In \cref{sec:algorithm} we analyze our algorithm and prove \cref{thm:main}.
In \cref{sec:lowerbound} we prove the lower bounds of \cref{thm:lowerboundgeneral,thm:lowerboundlocal}.
We conclude in \cref{sec:discussion} with an extended discussion on fermionic systems;
we remark that there exists an embedding of systems with fermions into systems of qubits
and hence our algorithm is applicable.
This paper has several appendices.
In \cref{app:benchmark} 
we report numerical results for system sizes at which our algorithm becomes competitive in terms of actual quantum gate count.
In \cref{app:extensions} we remark how to tailor our algorithm if there is spatial modulation of interaction strength.
We also remark how to reduce the complexity of our algorithm on spatial dimension $D$.
\cref{app:clr} contains a new Lieb-Robinson bound for Hamiltonians that are close to commuting.
This appendix can be read independently and provides a self-contained proof of a Lieb-Robinson bound 
that we use in the main theorem.
\cref{app:chebyshev} explains why analytic functions are efficiently computable;
they often arise when time-dependent Hamiltonians are considered.
\cref{app:qspalg} summarizes what quantum signal processing is 
and contains certain optimization techniques that are used in our numerical benchmark of \cref{app:benchmark}.

\section{Algorithm and analysis}   \label{sec:algorithm}
    
    In this section we establish our main algorithmic result, restated below for convenience:
    
    \begin{figure*}[t]
        \centering
        \includegraphics[width=0.85\textwidth]{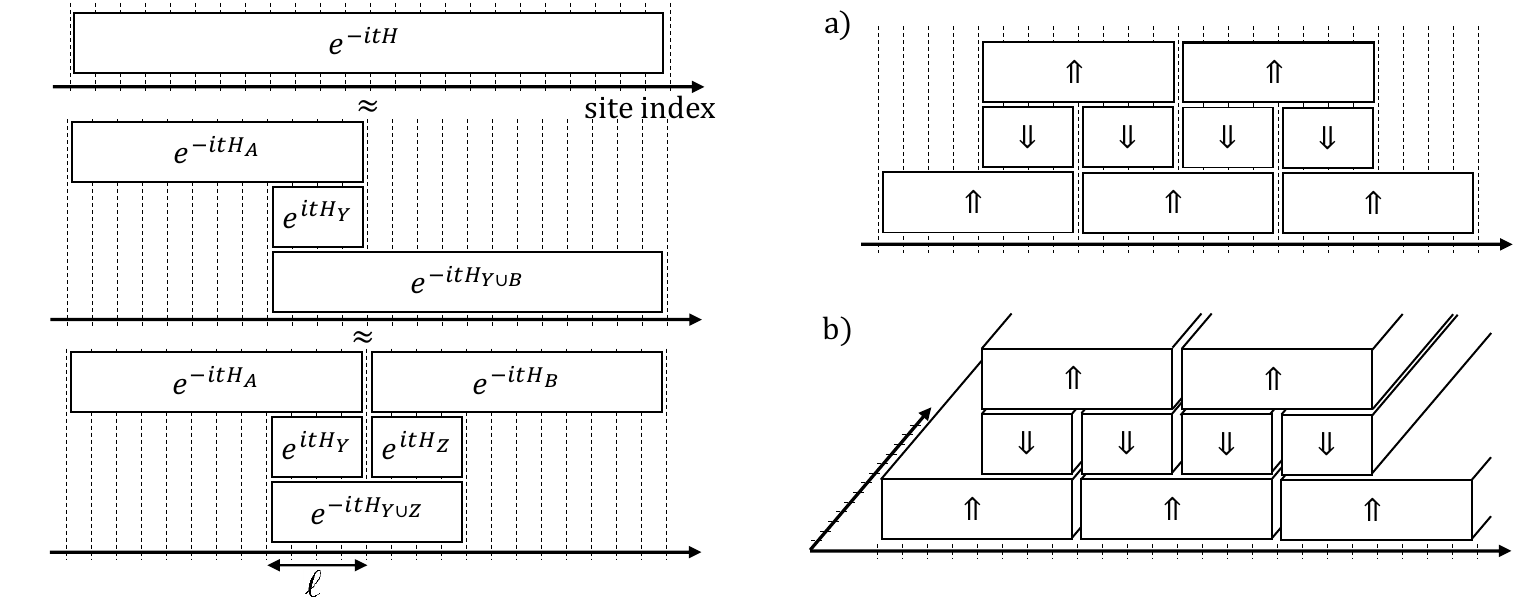}
        \caption{Decomposition of time evolution operator for time $t = \calO(1)$.
            The time is going upwards.
            Each block $\square$ represents the forward time evolution, $e^{-it H_\square}$,
            if the arrow is upward,
            and backward time evolution, $e^{+it H_\square}$, if the arrow is downward.
            Here, $H_\square$ is the sum of local terms in the Hamiltonian 
            supported completely within the block.
            The overlap has size $\ell$.
            (a) shows a one-dimensional setting, 
            but a generalization to higher $D$ dimensions is readily achieved
            by regarding each block as a $(D-1)$-dimensional hyperplane
            so that the problem reduces to lower dimensions.
            (b) shows a two-dimensional setting.
            The approximation error from the depicted decomposition is $\epsilon = \calO(e^{-\mu \ell} L^D / \ell)$
            where $L$ is the linear system size, $\ell$ is the width of the overlap between blocks,
            and $\mu > 0$ is a constant that depends only on the locality of the Hamiltonian.
            One can use any algorithm to further 
            decompose the resulting ``small'' unitaries on $\calO(\log(L/\epsilon))$ qubits
            into elementary gates.
            To achieve gate count that is linear (up to logarithmic factors) in spacetime volume,
            the algorithm for simulating the blocks needs to be polynomial in the block size and polylogarithmic in accuracy.
        }
        \label{fig:algorithm}
    \end{figure*}

    \main*
    
    The algorithm is depicted in \Cref{fig:algorithm}.
    Before showing why this algorithm works, we provide a high-level overview of the algorithm
    and the structure of the proof.
    Since a time evolution unitary $U(T;0)$ is equal to 
    $U(T=t_M; t_{M-1})U(t_{M-1},t_{M-2}) \cdots U(t_2;t_1)U(t_1; t_0 = 0)$,
    we will focus on a time evolution operator $U(t;0)$ where $t = \calO(1)$,
    generated by a slowly varying bounded Hamiltonian.
    The number $M$ of time slices is chosen to be $\calO(T)$.
    The key idea, as shown in \Cref{fig:algorithm}, is that the time-evolution operator, 
    $e^{-itH}$ due to the full Hamiltonian $\sum_{X\subseteq \Lambda} h_X$ 
    can be approximately written as a product 
    \begin{align}
    e^{-itH} \approx 
    \left( e^{-itH_A} \right) \left(   e^{+itH_Y}  \right) \left(  e^{-itH_{Y\cup B}} \right).
    \end{align}
    Here $A\cup B = \Lambda$ and 
    we think of $A$ and $B$ as large regions, but $Y$ as a small subset of $A$.
    The error in the approximation is exponentially small in the diameter of $Y$.
    This is formally proved in \Cref{lem:decomposition}, which is supported by 
    \Cref{lem:ode} and \Cref{lem:LRB}. Applying this twice, using 
    \begin{align}
    e^{-itH_{Y\cup B}} 
    \approx 
    \left( e^{-itH_B} \right)
    \left( e^{+itH_Z}  \right)
    \left( e^{-itH_{Y \cup Z}} \right)
    \end{align}
    leads to a symmetric approximation
    as depicted at the bottom left of \Cref{fig:algorithm}.
    This procedure can then be repeated
    for the large operators supported on $A$ and $B$ to reduce the size of all the operators
    involved, leading to the pattern in \Cref{fig:algorithm} (a).
    This reduces the problem of
    implementing the time-evolution operator for $H$ into the problem of implementing smaller
    time-evolution operators, which can be implemented using known quantum algorithms.%
	\footnote{
	The condition on the Hamiltonian that each term be slowly varying with respect to time,
	is only needed to use Hamiltonian simulation algorithms of \cite{BerryEtAl2014,TS,Low2018,Kieferova2018}.
	If there is a Hamiltonian simulation algorithm whose gate count is polylogarithmic in accuracy
	and polynomial in system size and evolution time regardless of the time derivative,
	then our results will give an algorithm that is independent of the time derivative as well.
	}

    We now establish the lemmas needed to prove the result.
    
    \begin{lemma}
    Let $A_t$ and $B_t$ be continuous time-dependent Hermitian operators,
    and let $U^A_t$ and $U^B_t$ with $U^A_0 = U^B_0 = \id$ 
    be the corresponding time evolution unitaries.
    Then the following hold:
    \begin{enumerate}
    \item[(i)] $W_t=(U^B_t)^\dagger U^A_t$ is the unique solution of 
    $\i\partial_t W_t = \left((U^B_t)^\dagger (A_t-B_t) U^B_t \right) W_t$ 
    and $W_0 = \id$.
    \item[(ii)] If $\norm{A_s-B_s} \le \delta$ for all $s \in [0,t]$,
     then $\norm{U^A_t - U^B_t} \le t \delta$.
    \end{enumerate}
    \label{lem:ode}
    \end{lemma}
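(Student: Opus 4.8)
The plan is to establish (i) by directly differentiating $W_t = (U^B_t)^\dagger U^A_t$ using the two defining Schr\"odinger equations, and then to obtain (ii) from (i) by integrating the resulting ODE and exploiting unitarity to control the norms; this is essentially a Duhamel / variation-of-parameters argument.

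For part (i), I would start from $i\partial_t U^A_t = A_t U^A_t$ and $i\partial_t U^B_t = B_t U^B_t$ with $U^A_0 = U^B_0 = \id$. Taking the adjoint of the second equation and using $B_t^\dagger = B_t$ gives $\partial_t (U^B_t)^\dagger = i (U^B_t)^\dagger B_t$. The product rule applied to $W_t$ then yields
\begin{align}
\partial_t W_t = i (U^B_t)^\dagger B_t U^A_t - i (U^B_t)^\dagger A_t U^A_t = -i (U^B_t)^\dagger (A_t - B_t) U^A_t .
\end{align}
Inserting $\id = U^B_t (U^B_t)^\dagger$ between $(A_t - B_t)$ and $U^A_t$ rewrites the right-hand side as $-i\left((U^B_t)^\dagger (A_t - B_t) U^B_t\right) W_t$, so $i\partial_t W_t = \left((U^B_t)^\dagger (A_t - B_t) U^B_t\right) W_t$ with $W_0 = \id$. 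The generator $\widetilde H_t := (U^B_t)^\dagger (A_t - B_t) U^B_t$ is continuous in $t$ (the propagators are $C^1$ because $A_t, B_t$ are continuous), so standard linear ODE theory (Picard--Lindel\"of) gives a unique solution of this initial value problem; since $W_t$ solves it, it is \emph{the} solution.

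For part (ii), I would note $U^A_t - U^B_t = U^B_t (W_t - \id)$, and since $U^B_t$ is unitary, $\norm{U^A_t - U^B_t} = \norm{W_t - \id}$. Integrating the equation from (i) gives $W_t - \id = -i\int_0^t \widetilde H_s W_s \,\mathrm{d}s$, hence $\norm{W_t - \id} \le \int_0^t \norm{\widetilde H_s}\,\norm{W_s}\,\mathrm{d}s$. Unitarity of $U^B_s$ gives $\norm{\widetilde H_s} = \norm{A_s - B_s} \le \delta$, and unitarity of $W_s$ gives $\norm{W_s} = 1$, so $\norm{W_t - \id} \le \delta t$, as claimed.

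I do not anticipate a genuine obstacle, as this is a routine computation; the only mild points requiring care are (a) justifying the product rule and differentiation under the integral, which follow from continuity of $A_t$ and $B_t$ and the resulting $C^1$ regularity of the propagators, and (b) the bookkeeping of where to insert $U^B_t (U^B_t)^\dagger$, so that the generator of the $W_t$-equation is conjugated by $U^B_t$ rather than by $U^A_t$ — it is precisely this choice that makes the norm bound in (ii) collapse to $t\delta$ with no extra factors.
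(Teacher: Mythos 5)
Your proof is correct and follows the same route as the paper: differentiate $W_t$ and invoke ODE uniqueness for (i), then integrate the ODE for $W_t$, pass the norm inside the integral, and use unitary invariance and $\norm{A_s - B_s}\le\delta$ for (ii). The paper's proof is just a terse version of exactly this argument.
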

    \begin{proof} 
    (i) Differentiate.
    The solution to the ordinary differential equation is unique.
    (ii) Apply Jensen's inequality for $\norm{\cdot}$ 
    (implied by the triangle inequality for $\norm{\cdot}$) 
    to the equation $W_t - W_0 = \int_0^t \rd s \partial_s W_s$.
    Then, invoke (i) and the unitary invariance of $\norm{\cdot}$.
    \end{proof}

For any two sites $x,y$ we denote by $\dist(x,y)$ the distance between two sites $x,y$,
and for any two sets $X,Y$ of sites we write $\dist(X,Y) = \min_{x \in X, y \in Y} \dist(x,y)$.
The diameter of a set $X$ is $\diam(X) = \max_{x, x' \in X} \dist(x,x')$.
\begin{lemma}[Lieb-Robinson 
    bound~\cite{LiebRobinson1972,Hastings2004LSM,NachtergaeleSims2006,HastingsKoma2006}]
\label{lem:LRB}
    Let $H=\sum_X h_X$ be a local Hamiltonian and 
    $O_X$ be any operator supported on $X$,
    and put $\ell = \lfloor \dist(X,\Lambda\setminus \Omega) \rfloor$.
    Then
\begin{align}
    \norm{(U^{H}_t)^\dagger O_X U^{H}_t
        - (U^{H_{ \Omega}}_t)^\dagger O_X U^{H_{ \Omega}}_t}
    \le 
    |X|\norm{O_X}
    \frac{(2\zeta_0 |t|)^\ell}{\ell!},
\label{eq:LRB}
\end{align}
    where $\zeta_0 = \max_{p \in \Lambda} \sum_{Z \ni p} |Z|\norm{h_Z} = \calO(1)$.
    In particular, there are constants $v_{LR} > 0$, called the Lieb-Robinson velocity,%
    \footnote{
    Strictly speaking, the Lieb-Robinson velocity 
    is defined to be the infimum of any $v_{LR}$ such that \cref{eq:exteriorLightcone} holds.
    }
    and $\mu>0$, 
    such that for $\ell \geq v_{LR} |t|$, we have
\begin{align}
\norm{(U^{H}_t)^\dagger O_X U^{H}_t
      - (U^{H_{ \Omega}}_t)^\dagger O_X U^{H_{ \Omega}}_t}
\le
\calO(|X| \norm{O_X} \exp(- \mu \ell)).
\label{eq:exteriorLightcone}
\end{align}
\end{lemma}
\begin{proof}
        See \ref{app:pfLRB}.
\end{proof}

    We are considering strictly local interactions (as in \Cref{thm:main}),
    where $h_X = 0$ if $\diam(X)>1$,
    but similar results hold with milder locality conditions such as $\norm{h_X} \le e^{-\diam(X)}$~\cite{LiebRobinson1972,Hastings2004LSM,NachtergaeleSims2006,HastingsKoma2006,Hastings2010};
    see \cref{app:clr} for a detailed proof.
    Below we will only use the result that 
    the error is at most $\calO(e^{ - \mu \ell})$ for some $\mu > 0$ and fixed $t$.
    For slower decaying interactions, the bound is weaker and
    the overlap size $\ell$ in \Cref{fig:algorithm} will have to be larger.
    
    The Lieb-Robinson bound implies the following decomposition.
    \begin{lemma}
    Let $H = \sum_X h_X$ be a local Hamiltonian (as in \Cref{thm:main}, or a more general definition
    for which \Cref{lem:LRB} still holds).
    Then there is a constant $\mu >0$ such that
    for any disjoint regions $A,B,C$, and for constant $t$, we have
\begin{align}
      \norm{U_t^{H_{A\cup B}} (U_{t}^{H_{B}})^\dagger U_t^{H_{B \cup C}} 
        - U_t^{H_{A\cup B \cup C}}} \le 
      \calO( e^{- \mu \dist(A,C)} ) \sum_{X: \text{bd}(AB,C)}\norm{h_X},
\end{align}
    where $X:\text{bd}(AB,C)$ means that $X \subseteq A \cup B \cup C$ 
    and $X \not\subseteq A \cup B$  and $X \not\subseteq C$.
    \label{lem:decomposition}
    \end{lemma}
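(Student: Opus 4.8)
The plan is to reduce the estimate to the Lieb--Robinson bound \Cref{lem:LRB} and the ODE comparison \Cref{lem:ode}. Set $H_1 := H_{A\cup B}$, $H_2 := H_{A\cup B\cup C}$, $\tilde H_1 := H_B$, $\tilde H_2 := H_{B\cup C}$. Since left multiplication by the unitary $U_t^{H_1}$ preserves the norm, the quantity in question equals $\norm{W_t - \tilde W_t}$ with $W_t := (U_t^{H_1})^\dagger U_t^{H_2}$ and $\tilde W_t := (U_t^{\tilde H_1})^\dagger U_t^{\tilde H_2}$. By \Cref{lem:ode}(i), $W_t$ and $\tilde W_t$ are the time-evolution unitaries generated, respectively, by the time-dependent Hermitian operators $K_s := (U_s^{H_1})^\dagger (H_2 - H_1) U_s^{H_1}$ and $\tilde K_s := (U_s^{\tilde H_1})^\dagger (\tilde H_2 - \tilde H_1) U_s^{\tilde H_1}$, so \Cref{lem:ode}(ii) gives $\norm{W_t - \tilde W_t} \le t \sup_{s \in [0,t]} \norm{K_s - \tilde K_s}$ (the extra factor $t$ is harmless and will be absorbed into the constant $v$). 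It therefore suffices to show $\norm{K_s - \tilde K_s} \le \calO(e^{vs - \mu\dist(A,C)}) \sum_{X : \text{bd}(AB,C)} \norm{h_X}$ for every $s \in [0,t]$.

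To that end I would sort the terms. One has $H_2 - H_1 = \sum_{X \subseteq A\cup B\cup C,\ X \not\subseteq A\cup B} h_X$ and $\tilde H_2 - \tilde H_1 = \sum_{X \subseteq B\cup C,\ X\not\subseteq B} h_X$, and every set $X$ occurring meets $C$. Split these sets into three types: (a) $X \subseteq C$; (b) $X$ meets both $B$ and $C$ but not $A$; (c) $X$ meets both $A$ and $C$. Type (a) sets occur in both differences and are supported outside $A\cup B \supseteq \mathrm{supp}(H_1), \mathrm{supp}(\tilde H_1)$, so their terms commute with $U_s^{H_1}$ and with $U_s^{\tilde H_1}$ and drop out of $K_s - \tilde K_s$. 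Type (c) sets occur only in $H_2 - H_1$; since $H$ is strictly local they have $\diam(X) \ge \dist(A,C)$ and hence vanish unless $\dist(A,C) \le 1$, in which case I bound their contribution to $\norm{K_s - \tilde K_s}$ crudely by $\sum_{X \text{ type (c)}} \norm{h_X}$ (conjugation by a unitary preserves the norm), which is acceptable because $e^{-\mu\dist(A,C)} = \Omega(1)$ there and all type (c) sets lie in $\text{bd}(AB,C)$. This leaves the type (b) contribution $\sum_{X \text{ type (b)}} \bigl[ (U_s^{H_1})^\dagger h_X U_s^{H_1} - (U_s^{\tilde H_1})^\dagger h_X U_s^{\tilde H_1} \bigr]$.

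For a type (b) set $X$, assuming $\dist(A,C)$ exceeds a fixed constant (so that $\dist(X,A) \ge \dist(A,C) - \diam(X) > 1$; the remaining small-distance cases fall under the crude bound of the previous paragraph), I would apply \Cref{lem:LRB} with the region $\Omega_X := \{ p \in \Lambda : \dist(p,X) \le \dist(X,A) - 1 \}$. Since $\Omega_X \cap A = \emptyset$, the terms of $H_1 = H_{A\cup B}$ supported inside $\Omega_X$ coincide with those of $\tilde H_1 = H_B$ supported inside $\Omega_X$, i.e.\ $(H_1)_{\Omega_X} = (\tilde H_1)_{\Omega_X}$. Using \Cref{lem:LRB} once with $H = H_1$ and once with $H = \tilde H_1$ (both with $O_X = h_X$ and region $\Omega_X$), together with the triangle inequality and $(H_1)_{\Omega_X} = (\tilde H_1)_{\Omega_X}$, yields $\norm{(U_s^{H_1})^\dagger h_X U_s^{H_1} - (U_s^{\tilde H_1})^\dagger h_X U_s^{\tilde H_1}} \le 2|X|\norm{h_X}(2\zeta_0 s)^{\ell_X}/\ell_X!$, where $\ell_X = \lfloor \dist(X, \Lambda\setminus\Omega_X)\rfloor \ge \dist(X,A) - 2 \ge \dist(A,C) - \calO(1)$. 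To put this in exponential form, fix any $\mu > 0$: from $\sum_{\ell \ge 0} (2\zeta_0 e^\mu s)^\ell / \ell! = e^{2\zeta_0 e^\mu s}$ one gets $(2\zeta_0 s)^\ell/\ell! \le e^{vs - \mu\ell}$ with $v := 2\zeta_0 e^\mu$; since $|X| = \calO(1)$ and $\ell_X \ge \dist(A,C) - \calO(1)$, each type (b) term is $\le \calO(e^{vs - \mu\dist(A,C)})\norm{h_X}$. Summing the type (b) terms (and, when $\dist(A,C) \le 1$, adding the type (c) bound), all sets involved lying in $\text{bd}(AB,C)$, gives the desired pointwise bound on $\norm{K_s - \tilde K_s}$, and the lemma follows. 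I expect the main obstacle to be the bookkeeping: getting exactly right which $h_X$ enters $H_2 - H_1$ versus $\tilde H_2 - \tilde H_1$ so that type (a) cancels and type (c) is negligible, and choosing $\Omega_X$ so that it avoids $A$ while still keeping $\ell_X$ as large as $\dist(A,C) - \calO(1)$; the factorial-to-exponential conversion and the small-$\dist(A,C)$ cases are routine.
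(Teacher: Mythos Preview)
Your proof is correct and follows the same strategy as the paper: reduce to the Lieb--Robinson bound via \Cref{lem:ode} by comparing interaction-picture generators. The one organizational difference is that the paper takes $H_{AB}+H_C$ and $H_B+H_C$ (rather than your $H_{AB}$ and $H_B$) as reference Hamiltonians; then the generator $H_{ABC}-(H_{AB}+H_C)=H_{\mathrm{bd}}$ already consists only of boundary terms, so your type-(a) cancellation is automatic and a single application of \Cref{lem:LRB} (with $H=H_{AB}+H_C$ truncated to $\Omega=B\cup C$, giving $H_B+H_C$) replaces your two applications through the term-dependent region $\Omega_X$. This buys a shorter proof with no case analysis, but the content is the same.
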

A similar technique of the proof below has appeared in \cite{Osborne2006,Michalakis2012}.
In these references one approximates $U_t^{H_{A\cup B}}$ by $U_t^{H_A}U_t^{H_B} V_t$
where $V_t$ is generated%
\footnote{
    We say a unitary $U_t$ is generated by $H_t$ 
    if $U_t$ is the solution to $\i\partial_t U_t = H_t U_t$ with $U_0 = \id$.
}
by some time dependent Hermitian operator of small support.
    \begin{proof}
    We omit ``$\cup$'' for the union of disjoint sets. 
    The following identity is trivial but important:
    \begin{align}
        U^{H_{ABC}}_t =  
        U^{H_{AB} + H_{C}}_t \underbrace{(U^{H_{AB} + H_{C}}_t )^\dagger  U^{H_{ABC}}_t}_{=W_t}.
    \end{align}
    By \Cref{lem:ode} (i), $W_t$ is generated by the Hamiltonian
    \begin{align}
    &(U^{H_{AB} + H_{C}}_t)^\dagger 
    (\underbrace{H_{ABC}-H_{AB}-H_C}_{H_\text{bd}}) 
    U^{H_{AB} + H_{C}}_t  = 
    (U^{H_{AB} + H_{C}}_t)^\dagger 
    H_\text{bd} 
    U^{H_{AB} + H_{C}}_t. \nonumber 
    \end{align}
    Applying \Cref{lem:LRB} (and \cref{eq:exteriorLightcone} in particular) with $O_X = H_\text{bd}$, we have
    \begin{align}
    \norm{(U^{H_{AB} + H_{C}}_t)^\dagger H_\text{bd} U^{H_{AB} + H_{C}}_t -
    (U^{H_{B} + H_{C}}_t)^\dagger H_\text{bd} U^{H_{B} + H_{C}}_t} \leq 
    \underbrace{\calO(\norm{H_\text{bd}} \exp(-\mu \ell))}_{=\delta}, \label{eq:truncation}
    \end{align}
    for some $\mu>0$ where $\ell$ is the distance between the support of the boundary terms 
    $H_\text{bd}$ and $A$.
    Since $H_\text{bd}$ contains terms that cross between $AB$ and $C$,
    the distance $\ell$ is at least $\dist(A,C)$ minus~2. 
    Using this and $\norm{H_\text{bd}}\leq \sum_{X: \text{bd}(AB,C)}\norm{h_X}$, we get $\delta = \calO(e^{-\mu \dist(A,C)} \sum_{X: \text{bd}(AB,C)}\norm{h_X})$.

    We can now use \Cref{lem:ode} (i) again to compute the unitary generated by the second term in \eqref{eq:truncation}, $(U^{H_{B} + H_{C}}_t)^\dagger H_\text{bd} U^{H_{B} + H_{C}}_t$.
    The unitary generated is $(U^{H_{B} + H_{C}}_t)^\dagger U^{H_\text{bd}+H_B+H_C} = (U^{H_{B} + H_{C}}_t)^\dagger U^{H_{BC}}_t$,
    where we used the fact that $H_{ABC}-H_{AB}+H_B = H_{BC}$. 
    This operator can be thought of as the ``interaction picture'' time-evolution operator
    of the second term in \eqref{eq:truncation}.
    This is our simplification of the ``patching'' unitary.
    Applying \Cref{lem:ode} (ii), we get 
    \begin{align*}
    \norm{W_t - (U^{H_{B} + H_{C}}_t)^\dagger U^{H_{BC}}_t} \leq t\delta = O(\delta)
    \end{align*}
    The left-hand side can be further simplified as
    \begin{align*}
        \norm{W_t - (U^{H_{B} + H_{C}}_t)^\dagger U^{H_{BC}}_t} 
        &= \norm{(U^{H_{AB} + H_{C}}_t )^\dagger  U^{H_{ABC}}_t - (U^{H_{B} + H_{C}}_t)^\dagger U^{H_{BC}}_t} \\
        &= \norm{ U^{H_{ABC}}_t - (U^{H_{AB} + H_{C}}_t )(U^{H_{B} + H_{C}}_t)^\dagger U^{H_{BC}}_t} \\
        &= \norm{ U^{H_{ABC}}_t - (U^{H_{AB}}_t U^{H_{C}}_t )(U^{H_{B}}_t U^{H_{C}}_t)^\dagger U^{H_{BC}}_t}\\
        &= \norm{ U^{H_{ABC}}_t - (U^{H_{AB}}_t)(U^{H_{B}}_t)^\dagger U^{H_{BC}}_t}.        
        \end{align*}
    This yields the desired inequality in the statement of the lemma.
    \end{proof}
    
\begin{proof}[Proof of \protect{\Cref{thm:main}}]
    The circuit for simulating the Hamiltonian is described in \Cref{fig:algorithm}.
    The decomposition of the time evolution unitary in \Cref{fig:algorithm}
    is the result of iterated applications of \Cref{lem:decomposition}.
    For a one-dimensional chain with open boundary condition,
    let $L$ be the length of the chain so there are $\calO(L)$ qubits.
    Take two blocks $A$ and $Y\cup B$ of the chain 
    such that their intersection $Y$ has length $\ell \ll L$
    and their union is the whole chain.
(See \cref{fig:algorithm}.)
    Applying \Cref{lem:decomposition},
    we approximate the full unitary by the composition of forward time evolution on $Y \cup B$,
    backward time evolution on $Y$,
    and forward time evolution on $A$.
    This incurs approximation error $\delta$ in spectral norm. 
    Every block unitary in this decomposition is a time evolution operator
    with respect to the sum of Hamitonian terms within the block,
    and we can recursively apply the decomposition for large blocks (size $\gg \ell$).
    We end up with a layout of small unitaries as shown in \Cref{fig:algorithm} (a).
    The error from this decomposition is $\calO(\delta L/ \ell)$,
    which is exponentially small in $\ell$ for $t = \calO(1)$.

    Going to higher dimensions $D > 1$,
    we first decompose the full time evolution 
    into unitaries on $\calO(L/\ell)$ slabs of size $L^{D-1} \times 2\ell$.
    (Since $L \gg \ell$, each slab looks like a hyperplane of codimension~$1$.)
    This entails error $\calO(e^{-\mu \ell} L^D / \ell)$ 
    since the boundary term has norm at most $\calO(L^{D-1})$.
    For each slab the decomposition into $\calO(L/\ell)$ blocks of size $L^{D-2} \times 2\ell \times 2\ell$,
    which look like hyperplanes of codimension~$2$,
    gives error $\calO(e^{-\mu \ell} (\ell L^{D-2}) (L/\ell))$.
    Summing up all the (thickened) hyperplanes,
    we get $\calO(e^{-\mu \ell} L^{D}/\ell)$ for the second round of decomposition.
    After $D$ rounds of the decomposition the total error is $\calO(e^{-\mu \ell} D L^{D}/\ell)$,
    and we are left with $\calO((L/\ell)^D)$ blocks of unitaries for $t = \calO(1)$.

    For longer times, apply the decomposition to each factor of 
\begin{align*}    
    U(T=t_M; t_{M-1})U(t_{M-1},t_{M-2}) \cdots U(t_2;t_1)U(t_1; t_0 = 0).
\end{align*}
    
    It remains to implement the unitaries on $m = \calO(T L^D/\ell^D)$ blocks $\square$ of $\calO(\ell^D)$ qubits
    where $\ell = \calO(\log (T L/\epsilon))$ to accuracy $\epsilon / m$.
    All blocks have the form $U^{H_{ \square}}_t$, and can be implemented using any known
    Hamiltonian simulation algorithm.
    For a time-independent Hamiltonian, 
    if we use an algorithm that is polynomial in the spacetime volume
    and polylogarithmic in the accuracy 
    such as those based on signal processing~\cite{QSP,LC16} or 
    linear combination of unitaries~\cite{BerryEtAl2014,TS,BCK15},
    then the overall gate complexity is 
    $\calO(T L^D \polylog(T L / \epsilon)) = \calO(Tn \polylog(Tn/\epsilon))$,
    where the exponent in the $\polylog$ factor depends on the choice of the algorithm.%
    \footnote{
    If we use the quantum signal processing based algorithms~\cite{QSP,LC16}
    to implement the blocks of size $\calO(\ell^D)$,
    then we need $\calO(\log \ell)$ ancilla qubits for a block.
    Thus, if we do not mind implementing them all in serial,
    then it follows that the number of ancillas needed is 
    $\calO(\log \log (T n / \epsilon))$,
    which is much smaller than what would be needed if
    the quantum signal processing algorithm was 
    directly used to simulate the full system.
    }
    Similarly for a slowly varying time-dependent Hamiltonian, 
    we achieve the same gate complexity by using any time-dependent Hamiltonian simulation algorithm
    that is polynomial in the spacetime volume and polylogarithmic in the accuracy.
    For example, the fractional queries algorithm~\cite{BerryEtAl2014} or the Taylor series approach~\cite{TS,Low2018,Kieferova2018} possess these properties.
    \end{proof}
    For not too large system sizes $L$,
    it may be reasonable to use a bruteforce method to decompose the block unitaries 
    into elementary gates~\cite[Chap. 8]{KitaevBook}.

\section{Optimality}    \label{sec:lowerbound}
    
In this section we prove a lower bound on the gate complexity 
of the problem of simulating the time evolution of a time-dependent local Hamiltonian. 
(Recall that throughout this paper we use \emph{local} to mean geometrically local.) 
    
\subsection{Lower bound proofs}
    
We now prove \Cref{thm:lowerboundgeneral} and \Cref{thm:lowerboundlocal}, 
starting with \Cref{thm:lowerboundlocal}. 
This lower bound follows from the following three steps. 
First, in \Cref{lem:circuittoHamiltonian}, 
we observe that for every depth-$T$ quantum circuit on $n$ qubits that uses  local $2$-qubit gates, 
there exists a piecewise constant bounded Hamiltonian $H(t)$
such that time evolution due to $H(t)$ for time $T$ is equal to applying the quantum circuit. 
Then, in \Cref{lem:countlower} we show that the number of distinct Boolean functions on $n$ bits 
computed by such quantum circuits is at least exponential in $\tOmega(Tn)$, 
where we say a quantum circuit has computed a Boolean function 
if its first output qubit is equal to the value of the Boolean function with high probability. 
Finally, in \Cref{lem:countupper} we observe that the maximum number of Boolean functions 
that can be computed (to constant error) by the class of quantum circuits 
with $G$ arbitrary non-local $2$-qubit gates from any (possibly infinite) gate set 
is exponential in $\tO(G\log n)$. 
Since we want this class of circuits to be able to simulate all piecewise constant bounded 1D Hamiltonians for time $T$, 
we must have $G = \tOmega(Tn)$.
    
\begin{lemma}\label{lem:circuittoHamiltonian}
Let $U$ be a depth-$T$ quantum circuit on $n$ qubits that uses local $2$-qubit gates from any gate set. 
Then there exists a piecewise constant bounded 1D Hamiltonian $H(t)$ 
such that the time evolution due to $H(t)$ for time $T$ exactly equals $U$. 
\end{lemma}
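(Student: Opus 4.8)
The plan is to realize each layer of the circuit by a constant Hamiltonian that is switched on for one unit of time. Recall that a depth-$T$ circuit of $2$-qubit gates decomposes into $T$ layers $L_1,\dots,L_T$ (pad with identity layers if the depth is strictly less than $T$), where each layer $L_m$ is a product of gates acting on pairwise disjoint pairs of qubits, and since the gates are geometrically local, each such gate acts on a pair of adjacent qubits $(j,j+1)$. For a gate $V$ appearing in layer $L_m$ and acting on $(j,j+1)$, the principal matrix logarithm yields a Hermitian $K$ with $V = e^{-iK}$ and $\norm{K}\le\pi$. Define $H_j(t) := K$ for $t \in (m-1,m)$ if layer $L_m$ contains a gate on the pair $(j,j+1)$, and $H_j(t) := 0$ otherwise; then set $H(t) = \sum_{j=1}^{n-1} H_j(t)$.

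First I would check the structural conditions. By construction $H(t)$ is supported on nearest-neighbor pairs, is time-independent on each of the $M = T$ intervals $(m-1,m)$ (so the time slicing $t_m = m$ satisfies $t_m - t_{m-1} = 1 \le 1$ and $M = \calO(T)$), and $\max_t \norm{H_j(t)} \le \pi = \calO(1)$. Hence $H(t)$ is a piecewise constant bounded 1D Hamiltonian in the sense defined above.

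Next I would verify that the induced evolution equals $U$. Within the $m$-th time slice, $H(t)$ is the constant operator obtained by summing the Hermitian logarithms $K$ over all gates of layer $L_m$; these terms have pairwise disjoint supports (because the gates of a single layer do), so they commute and $U(m;m-1) = e^{-iH\cdot 1}$ factorizes as the product of the $e^{-iK}$ over the gates of $L_m$, which is exactly the layer unitary $L_m$ (pairs carrying no gate contribute $e^{-i\cdot 0} = \id$). Composing over $m = 1,\dots,T$ gives $U(T;0) = L_T L_{T-1}\cdots L_1 = U$.

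I do not anticipate a serious obstacle here: the only points requiring care are (i) invoking the \emph{principal} logarithm so that $\norm{K}\le\pi = \calO(1)$ holds even for gates drawn from an arbitrary, possibly infinite, gate set, and (ii) using disjointness of supports within each layer to turn the time-ordered evolution over a slice into an ordinary product of commuting exponentials. If one prefers to avoid explicitly invoking a notion of ``circuit depth,'' the layering $L_1,\dots,L_T$ can instead be produced by a greedy argument that repeatedly removes a maximal set of gates with pairwise disjoint supports from the front of the circuit.
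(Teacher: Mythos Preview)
Your proposal is correct and follows essentially the same approach as the paper: take the matrix logarithm of each $2$-qubit gate to obtain a bounded Hermitian generator, sum the (commuting, disjointly supported) generators within each layer to get a constant Hamiltonian for that unit time slice, and concatenate the $T$ slices. Your write-up is in fact slightly more explicit than the paper's in invoking the principal logarithm for the norm bound and in spelling out why the exponential factorizes over a layer.
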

\begin{proof}
We first prove the claim for a depth-$1$ quantum circuit. 
This yields a Hamiltonian $H(t)$ that is defined for $t\in[0,1]$, 
whose time evolution for unit time equals the given depth-$1$ circuit. Then we can apply the same argument to each layer of the circuit, obtaining Hamiltonians valid for times $t\in[1,2]$, and so on, until $t\in[T-1,T]$. This yields a Hamiltonian $H(t)$ defined for all time $t\in[0,T]$ whose time evolution for time $T$ equals the given unitary. If the individual terms in a given time interval have bounded spectral norm, then so will the Hamiltonian defined for the full time duration. 
        
For a depth $1$ circuit with local $2$-qubit gates, since the gates act on disjoint qubits we only need to solve the problem for one $2$-qubit unitary and sum the resulting Hamiltonians. Consider a unitary $U_j$ that acts on qubits $j$ and $j+1$. By choosing $H_j = \i \log U_j$, we can ensure that $e^{-iH_j} = U_j$ and $\norm{H_j} = \calO(1)$. 
        
The overall Hamiltonian is now piecewise constant with $T$ time slices.
\end{proof}
    
Note that it also possible to use a similar construction to obtain a Hamiltonian 
that is continuous (instead of being piecewise constant) 
with a constant upper bound on the norm of the first derivative of the Hamiltonian.
One way to achieve this is 
to make the Hamiltonian to be zero for integer values of time;
that is, $H(t)$ is not identically zero, but is zero for $t \in \ZZ \subset \RR$.
Concretely, let $g(\tau) = 6\tau(1-\tau)$ be a real function
which satisfies $g(0) = g(1) = 0$ and $\int_0^1 g(\tau) \rd \tau = 1$.
We then let $H_j(t) = \i g(t-v+1) \log U_j$ for a single two-qubit unitary $U_j$
that is in the $v$-th layer of the circuit where $v = 1,2,\ldots,T$.

    \begin{lemma}
        \label{lem:countlower}
        For any integers $n$ and $T$ such that $2 \le n \leq T \leq 2^n$, the number of distinct Boolean functions $f:\{0,1\}^n\to \{0,1\}$ that can be computed by depth-$T$ quantum circuits on $n$ qubits that use local $2$-qubit gates from a finite gate set is at least $2^{\tOmega(Tn)}$.
    \end{lemma}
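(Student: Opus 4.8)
The plan is to exhibit, for a suitable $m = \tOmega(Tn)$, an explicit family $\{f_P : P \in \{0,1\}^m\}$ of Boolean functions on $n$ bits such that $P \mapsto f_P$ is injective and each $f_P$ is computed \emph{exactly} by a depth-$T$, $n$-qubit circuit of local $2$-qubit gates from a fixed universal set (say $\{\mathrm{CNOT}, H, T, X\}$, decomposed into $2$-qubit local gates with $\calO(1)$ overhead); this yields $2^m = 2^{\tOmega(Tn)}$ distinct functions. Since the only input is the $n$-qubit computational-basis state, it suffices to build for each $P$ a \emph{reversible} circuit $V_P$ on the $n$ wires --- which is automatically a legal quantum circuit --- whose first output bit equals $f_P(x)$ on every basis input $x$.

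\emph{The family.} Let $\beta = \lceil \log_2 T\rceil + 1$, so $2^\beta > T$ and, using $T \le 2^n$, $\beta \le n$. Fix a large constant $C$, reserve wire $1$ as an accumulator, and partition $\{2,\dots,n\}$ into $Q = \Theta(n/\log(Tn))$ disjoint length-$C\beta$ intervals ("lanes") $L_1,\dots,L_Q$; let $B_q$ be the first $\beta$ positions of $L_q$ and write $a_q(x) \in \{0,1\}^{\beta}$ for the restriction of $x$ to $B_q$. A parameter $P$ specifies, for each lane $q$, a table $P_q \in \{0,1\}^{T}$, so $m = QT = \tOmega(Tn)$; set $(P_q)[a]$ to be the $a$-th bit of $P_q$ when the integer $a < T$ and $0$ otherwise, and define $f_P(x) = x_1 \oplus \bigoplus_{q=1}^{Q} (P_q)[a_q(x)]$. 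Injectivity of $P \mapsto f_P$ is immediate: the $q$-th summand depends only on the coordinates in $B_q$, the blocks are pairwise disjoint, so fixing all coordinates outside a single $B_q$ exhibits $a \mapsto (P_q)[a]$ up to one additive constant, and the "spare" addresses $a \ge T$ (forced to $0$) pin down that constant; hence $f_P$ determines $P$. (Consistency needs $m \le 2^n$, which holds for $m = \tOmega(Tn)$ with $T \le 2^n$ once the polylog factor hidden in $\tOmega$ is fixed.)

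\emph{The circuit.} For each lane $q$ the map $a \mapsto (P_q)[a]$ is a $\Theta(T)$-entry lookup on $\beta = \Theta(\log(Tn))$ address bits, and its value can be XORed into a designated scratch wire of lane $q$ by a "uniformly controlled $X$" whose standard ancilla-free decomposition (Gray-code order, $\le 2^{\beta+1}=\calO(T)$ elementary gates, with $P$ entering only as the choice of $X$ versus $\id$ at $\calO(T)$ gate slots, the remaining gates fixed CNOTs) is routed inside the width-$C\beta$ lane with $\calO(\log(Tn))$ overhead per gate; this is a depth-$\calO(T\polylog(Tn))$ subcircuit on the lane's wires only. Taking the per-lane table size to be $\Theta(T/\polylog(Tn))$ instead of $T$ brings the lane depth to $\calO(T)$ while keeping $m = \tOmega(Tn)$. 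Running all $Q$ lane subcircuits in parallel (disjoint lanes of total width $\le n$) uses depth $\calO(T)$ and $\calO(nT\polylog(Tn))$ gates; a final left-to-right "sweep" of local CNOTs across the $n$ wires then collects the $Q$ lane outputs into wire $1$ by XOR, in depth $\calO(n)$, giving wire-$1$ output $x_1 \oplus \bigoplus_q (P_q)[a_q(x)] = f_P(x)$. As a composition of reversible $1$D-local pieces, $V_P$ is a depth-$\calO(T)$, $n$-qubit, local circuit over the fixed finite gate set.

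\emph{The main obstacle} is making the total depth at most $T$ exactly: the lanes contribute $\calO(T)$ depth and the sweep contributes $\calO(n)$ depth, and since we only assume $n \le T$ these sum to more than $T$ in the regime $T = \Theta(n)$, where the sweep alone nearly saturates the budget. I would resolve this by \emph{pipelining} the sweep with the lane computations: order the lanes so that a lane the sweep reaches at relative depth $\tau$ need only have finished a table of size $\calO(\tau)$, and discard the $\calO(1)$-fraction of lanes whose allowed table is too small. When $T \gtrsim n$ all lanes keep table size $\Theta(T)$; when $T = \Theta(n)$, a constant fraction of the $Q = \Theta(n/\log(Tn))$ lanes keep table size $\Theta(n) = \Theta(T)$; in both cases the surviving tables still total $\tOmega(Tn)$ bits, so the function family has size $2^{\tOmega(Tn)}$, and the overall depth is $\calO(n) \le T$ after rescaling constants. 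The one place where genuine care is needed is the ancilla-free, depth-$\calO(T)$, width-$\calO(\log(Tn))$ reversible realization of a single lane's lookup; this is exactly the uniformly-controlled-gate construction above, equivalently a systolic layout that sweeps the $2^\beta$ table addresses in Gray-code order as time advances.
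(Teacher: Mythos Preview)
Your proof is correct and takes essentially the same approach as the paper: partition the $n$ qubits into $\Theta(n/\log T)$ blocks of width $\Theta(\log T)$, realize in each block an arbitrary $\tilde\Theta(T)$-entry lookup table in depth $\tO(T)$, and XOR the block outputs into a single wire, with injectivity secured by pinning a designated input to~$0$ (your ``spare addresses $a\ge T$ forced to $0$'' play exactly the role of the paper's restriction to functions with $f(0)=0$). Your pipelining concern is unnecessary---since $n\le T$, the lane depth $\calO(T)$ plus the sweep depth $\calO(n)$ is already $\calO(T)$, and the paper simply absorbs this constant by rescaling~$T$.
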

    \begin{proof}
        We first divide the $n$ qubits into groups of $k=\lfloor \log_2 T \rfloor$ qubits,
         which is possible since $T \leq 2^n$. 
        On these $k$ qubits, we will show that it is possible to compute $2^{\tOmega(T)}$ distinct Boolean functions 
        with a depth $T$ circuit that uses local $2$-qubit gates. 
        One way to do this is to consider all Boolean functions on $k'<k$ bits. 
        Any Boolean function $f_x$ that evaluates to $f_x(x)=1$ on exactly one input $x$ of size $k'$ 
        can be computed with a circuit of $\tO(k')$ gates and $\tO(k')$ depth 
        using only $2$-qubit gates and $1$ ancilla qubit in addition to one output qubit~\cite[Corollary 7.4]{BBC+95}.
        An arbitrary Boolean function $f:\{0,1\}^{k'} \to \{ 0,1\}$ is a sum of such functions: 
        $f = \sum_{x \in f^{-1}(1)} f_x = \bigoplus_{x \in f^{-1}(1)} f_x$.
        Implementing all $f_x$ for $x \in f^{-1}(1)$ in serial using a common output qubit,
        we obtain a circuit for the full function $f$.
        Since $f^{-1}(1)$ consists of at most $2^{k'}$ bit strings, 
        this will yield a circuit of size $\tO(2^{k'})$ and depth $\tO(2^{k'})$.
        Note that each of the $2$-qubit gates may be made local without changing these expressions by more than a log factor in the exponent --  $\mathcal{O}(k')$ local pairwise SWAP gates suffice to move any two target qubits next two each other.
        By choosing $k' = k - \Theta(\log k)$,
        we can compute all Boolean functions on $k'$ bits with depth at most $T$. 
        Since there are $2^{2^{k'}} = 2^{\tOmega(T)}$ distinct Boolean functions on $k'$ bits, 
        we have shown that circuits with depth $T$ using $k=\lfloor \log_2 T \rfloor$ qubits can compute at least $2^{\tOmega(T)}$ distinct Boolean functions.
        
        We can compute $2^{\tOmega(T)}$ distinct Boolean functions on each of the $n/k$ blocks of $k$ qubits to obtain $(2^{\tOmega(T)})^{n/k} = 2^{\tOmega(Tn)}$ distinct Boolean functions with $n/k$ outputs. I.e., we have computed a function $\{0,1\}^n \to \{0,1\}^{n/k}$. Since we want to obtain a single-output Boolean function, as the overall goal is to prove lower bounds against simulation algorithms correct on local measurements, we combine these Boolean functions into one. We do this by computing the parity of all the outputs of these $n/k$ Boolean functions using CNOT gates. Computing the parity uses at most $n$ 2-qubit local gates and has depth $n$. The circuit now has depth $T + n \leq 2T$ and by rescaling $T$ we can make this circuit have depth $T$, while retaining the lower bound of $2^{\tOmega(Tn)}$ distinct Boolean functions.
        
        Unfortunately, after taking the parity of these $n/k$ functions, it is not true that the resulting functions are all distinct. For example, the parity of functions $f(x)$ and $g(y)$ is a new function $f(x) \oplus g(y)$, which also happens to be the parity of the functions $\neg f(x)$ and $\neg g(y)$. To avoid this overcounting of functions, we do not use all possible functions on $k'$ bits in the argument above, but only all those functions that map the all-zeros input to 0. This only halves the total number of functions, which does not change the asymptotic expressions above. With this additional constraint, it is easy to see that if $f(x) \oplus g(y) = f'(x) \oplus g'(y)$, this implies that $f$ and $f'$ are the same, by fixing $y$ to be the all-zeros input, and similarly that $g$ and $g'$ are the same.
    \end{proof}

    We say a quantum circuit $U$ computes a Boolean function $f:\{0,1\}^n\to\{0,1\}$ with high probability if measuring the first output qubit of $U|x_1x_2\cdots x_n0\cdots 0\rangle$ yields $f(x)$ with probability at least $2/3$.
    \begin{lemma}
        \label{lem:countupper}
        The number of Boolean functions $f:\{0,1\}^n\to \{0,1\}$ that can be computed with high probability by quantum circuits with unlimited ancilla qubits using $G$ non-local $2$-qubit gates from any gate set is at most $2^{\tO(G\log n)}$.
    \end{lemma}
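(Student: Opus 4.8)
The plan is a discretization-and-counting argument. A circuit with only $G$ two-qubit gates acts nontrivially on at most $2G$ qubits, so every ancilla, and every input qubit, that no gate ever touches simply stays in its initial state and cannot affect the measured output qubit; hence the computed function depends on at most $2G$ of the $n$ input bits, the degenerate case in which the measured qubit is itself untouched contributing only the trivial functions $f\equiv 0$ and $f(x)=x_1$. First I would choose which $\le 2G$ input variables are relevant, which costs at most $2^{\calO(G\log n)}$ possibilities (uniformly in whether $n\le 2G$ or $n>2G$) and lets me assume the circuit lives on at most $4G$ wires. The remaining combinatorial data of the circuit --- the ordered list of which pair of these wires each of the $\le G$ gates acts on, which wires carry input bits, and which wire is measured --- then ranges over a set of size $(4G)^{\calO(G)}=2^{\calO(G\log G)}$.

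Next I would handle the (possibly infinite) set of allowed gates by a net. Fix a $\delta$-net $\mathcal N_\delta$ of the unitary group $U(4)$ of two-qubit gates in the operator norm; since $U(4)$ is a compact manifold of real dimension $16$ with $\calO(1)$ diameter, a volume argument gives $\abs{\mathcal N_\delta}\le(\calO(1)/\delta)^{16}$. Given any circuit $C$ that computes $f$ with probability $\ge 2/3$, round each of its gates to the nearest net element to obtain a circuit $\tilde C$ with the same combinatorial layout; a telescoping hybrid argument together with the unitary invariance of $\norm{\cdot}$ gives $\norm{C-\tilde C}\le G\delta$, so every measurement outcome probability shifts by at most $2G\delta$. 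Choosing $\delta$ to be a small constant times $1/G$ (e.g.\ $\delta=1/(24G)$) then guarantees that $\tilde C$ still outputs $f(x)$ with probability $>1/2$ on every $x$. Since a fixed circuit computes at most one Boolean function with probability $>1/2$ on all inputs --- for each $x$ at most one bit value is attained with probability exceeding $1/2$ --- the map $f\mapsto\tilde C$ is injective.

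Finally I would multiply the counts: the number of Boolean functions computable with high probability is at most the number of possible witnesses $\tilde C$, namely at most $2^{\calO(G\log n)}$ choices of relevant inputs, times $2^{\calO(G\log G)}$ combinatorial layouts, times $\abs{\mathcal N_{1/24G}}^{G}=(\calO(G))^{16G}=2^{\calO(G\log G)}$ assignments of net gates --- altogether $2^{\calO(G\log(nG))}=2^{\tO(G\log n)}$.

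The one step requiring genuine care is the net argument: one must verify that rounding all $G$ gates perturbs every outcome probability by only $\calO(G\delta)$, so that the computed function is \emph{exactly} preserved once $\delta$ is a sufficiently small constant over $G$, and that the rounded circuit --- now a single fixed circuit --- therefore cannot be a witness for two different Boolean functions. Everything else (the volumetric bound on $\abs{\mathcal N_\delta}$, the reduction to $\calO(G)$ relevant wires, and the degenerate cases such as $n=1$ or an untouched output qubit) is routine bookkeeping.
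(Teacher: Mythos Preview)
Your proof is correct and follows the same discretize-then-count strategy as the paper, but with a different discretization step. The paper invokes the Solovay--Kitaev theorem to replace each arbitrary two-qubit gate by a $\polylog(G)$-length word over a fixed finite gate set (inflating the gate count to $\tO(G)$) and then counts circuits over that finite alphabet; you instead round each gate to a single element of a $\Theta(1/G)$-net of $U(4)$ and count circuits over the net. Your route is arguably more elementary --- it needs only a volume bound on the compact $16$-dimensional manifold $U(4)$, not the nontrivial recursive construction behind Solovay--Kitaev --- and it makes the error control (the telescoping $\norm{C-\tilde C}\le G\delta$, hence each outcome probability shifts by at most $2G\delta$) fully explicit. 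The combinatorial counting is also organized slightly differently: the paper labels each gate by its depth and its two qubit indices among all $n+\tO(G)$ wires, whereas you first select which $\le 2G$ of the $n$ input bits can possibly matter (at cost $2^{\calO(G\log n)}$) and then work on $\calO(G)$ wires. Both arrive at $2^{\calO(G(\log n+\log G))}=2^{\tO(G\log n)}$, with the absorption $G\log G=\tO(G\log n)$ (valid for $n\ge 2$) used identically in both arguments.
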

    \begin{proof}
    First we note that if a circuit $U$ with $G$ arbitrary $2$-qubit gates from any gate set computes a Boolean function with probability at least $2/3$, then there is another circuit with $\tO(G)$ gates from a finite 2-qubit non-local gate set that computes the same function with probability at least $2/3$. We do this by first boosting the success probability of the original circuit using an ancilla qubit to a constant larger than $2/3$ and then invoking the Solovay--Kitaev theorem \cite{NC00} to approximate each gate in this circuit to error $\calO(1/G)$ with a circuit from a finite gate set of $2$-qubit gates. This increases the circuit size to $\tO(G)$ gates. Since each gate has error $\calO(1/G)$, the overall error is only a constant, and the new circuit computes the Boolean function $f$ with high probability.
        
        We now have to show that the number of Boolean functions on $n$ bits computed by a circuit with $\tO(G)$ non-local 2-qubit gates from a finite gate set is at most $2^{\tO(G\log n)}$. To do so, we simply show that the total number of distinct circuits with $\tO(G)$ non-local 2-qubit gates from a finite gate set is at most $2^{\tO(G \log n)}$.
        
        First observe that a circuit with $\tO(G)$ gates can only use $\tO(G)$ ancilla qubits, since each $2$-qubit gate can interact with at most 2 new ancilla qubits. Furthermore, the depth of a circuit cannot be larger than the number of gates in the circuit. Let us now upper bound the total number of quantum circuits of this form. Each such circuit can be specified by listing the location of each gate and which gate it is from the finite gate set. Specifiying the latter only needs a constant number of bits since the gate set is finite, and the location can be specified using the gate's depth, and the labels of the two qubits it acts on. The depth only requires $\calO(\log G)$ bits to specify, and since there are at most $n+\tO(G)$ qubits, this only needs $\calO(\log n + \log G)$ bits to specify. In total, since there are $\tO(G)$ gates, the entire circuit can be specified with $\tO(G\log n)$ bits. Finally, since any such circuit can be specified with $\tO(G\log n)$ bits, there can only be $2^{\tO(G \log n)}$ such circuits.
    \end{proof}
    
    \begin{proof}[Proof of \protect{\Cref{thm:lowerboundlocal}}]
        Suppose that any piecewise constant bounded 1D Hamiltonian on $n$ qubits 
        can be simulated for time $T$ using $G$ 2-qubit non-local gates from any (possibly infinite) gate set. 
        Then using \Cref{lem:circuittoHamiltonian} and \Cref{lem:countlower}, 
        we can compute at least $2^{\tOmega(Tn)}$ distinct Boolean functions using such Hamiltonians. 
        By assumption, each of these Boolean functions can be approximately computed by a circuit with $G$ gates. 
        Now invoking \Cref{lem:countupper}, 
        we know that such circuits can compute at most $2^{\tO(G\log n)}$ $n$-bit Boolean functions. 
        Hence we must have $G\log n = \tOmega(Tn)$, which yields $G = \tOmega(Tn)$.
        
        The proof for $T\leq n$ follows in a black-box manner from the first part of the theorem statement by only using $T$ out of the $n$ available qubits. In this case the first part of the theorem applies and yields a lower bound of $\tOmega(T^2)$.
    \end{proof}
    
We now prove \Cref{thm:lowerboundgeneral}, which follows a similar outline.
The first step is identical, and we can reuse \Cref{lem:circuittoHamiltonian}.
For the next step, instead of counting distinct Boolean functions, 
we count the total number of ``distinct'' unitaries.
Unlike Boolean functions on $n$ bits, 
there are infinitely many  unitaries on $n$ qubits. 
Hence we count unitaries that are ``distinguishable.'' 
Formally, we say $U$ and $V$ are distinguishable 
if there is a state $\ket \psi$ 
such that $U\ket \psi$ and $V \ket \psi$ have trace distance, say,~$0.1$.
In \Cref{lem:Ucountlower} we show that the number of distinguishable unitaries 
computed by quantum circuits on $n$ qubits with depth $T$ is exponential in $\tOmega(Tn)$.
As before, we then show that the maximum number of distinguishable unitaries 
that can be computed (to constant error) by the class of quantum circuits 
with $G$ arbitrary non-local $2$-qubit gates from any (possibly infinite) gate set 
is exponential in $\tO(G\log n)$.
    
\begin{lemma} \label{lem:Ucountlower}
For any integers $n,T$ such that $4 \le n \le T \le 4^n$, 
there exists a set of unitaries of cardinality $2^{\tOmega(Tn)}$
such that every unitary in the set can be computed by a depth-$T$ quantum circuit on $n$ qubits 
that uses local $2$-qubit gates from a finite gate set 
and any $U \neq V$ from this set are distinguishable. 
\end{lemma}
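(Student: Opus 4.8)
The plan is to mirror the proof of \Cref{lem:countlower}, replacing ``distinct Boolean functions'' by ``pairwise distinguishable unitaries'' and replacing the indicator-function gadget by one that synthesizes an arbitrary unitary on a small block. As there, I would partition the $n$ qubits into $m=\lfloor n/k\rfloor$ consecutive blocks of $k$ qubits (leaving a few qubits idle), with $k=\log_4 T-\Theta(\log\log T)$; the hypothesis $T\le 4^n$ gives $k\le n$ so the blocks fit, and the $\Theta(\log\log T)$ slack is what makes the per-block circuits fit in depth $T$. On each block I will produce $2^{\tOmega(T)}$ unitaries that are (i) computable by a depth-$\le T$ local circuit over a fixed finite gate set and (ii) pairwise a constant apart in the distinguishability metric $\Delta(U,V):=\min_{\phi\in\RR}\norm{U-e^{i\phi}V}$. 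Tensoring over the $m=\Theta(n/\log T)$ blocks then gives $2^{\tOmega(T)\, m}=2^{\tOmega(Tn)}$ unitaries on $n$ qubits, using $(\log T)^{O(1)}\le(\log(Tn))^{O(1)}$.

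For a single block I would invoke three standard facts: an arbitrary unitary on $k$ qubits is a product of $\calO(4^k)$ two-qubit gates (via the cosine--sine decomposition; cf.\ the count $\tO(4^k)$ in~\cite{BBC+95,Kni95}); it can be made geometrically local at the cost of a $\polylog$ factor of SWAPs; and it can be compiled over a fixed finite universal gate set at the cost of a further $\polylog$ factor by Solovay--Kitaev~\cite{NC00}. With the above choice of $k$, the resulting depth $\calO(4^k\polylog(4^k))$ is at most $T$, while $4^k=\widetilde{\Theta}(T)$. So a depth-$\le T$ local finite-gate-set circuit can approximate \emph{any} $k$-qubit unitary to any fixed constant accuracy. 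I would then take $\mathcal N_k$ to be a maximal $\tfrac13$-separated subset of the projective unitary group $PU(2^k)$ in the metric $\Delta$; since $PU(2^k)$ is a manifold of real dimension $4^k-1$, the standard lower bound on the covering number of the unitary group gives $|\mathcal N_k|\ge 2^{\Omega(4^k)}=2^{\tOmega(T)}$. Replacing each element of $\mathcal N_k$ by a depth-$\le T$ local finite-gate-set circuit that approximates it (in operator norm, hence in $\Delta$) to error $\tfrac1{100}$ keeps all pairwise $\Delta$-distances at least $\tfrac13-\tfrac1{50}>\tfrac14$, by the triangle inequality.

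Tensoring the blocks is immediate: the $m$ block circuits act on disjoint sets of qubits, so their tensor product is again a depth-$\le T$ local finite-gate-set circuit on $\le n$ qubits, and the number of such products is $|\mathcal N_k|^m=2^{\tOmega(Tn)}$. The only genuinely new ingredient is that two such products that differ in at least one block are $\ge 0.1$ apart in diamond norm, which I would phrase as: \emph{if $U=\bigotimes_i U_i$ and $V=\bigotimes_i V_i$ with $\Delta(U_{i_0},V_{i_0})\ge\gamma$ for some $i_0$, then $\norm{U-V}_\diamond\ge\gamma$.} To see this, let $W:=U_{i_0}^\dagger V_{i_0}$ have eigenvalues $e^{i\theta_j}$; choosing $\phi=\theta_1$ in the definition of $\Delta$ gives $\gamma\le\Delta(U_{i_0},V_{i_0})\le\max_j\abs{e^{i\theta_j}-e^{i\theta_1}}$, so $W$ has two eigenvalues $e^{i\theta_a},e^{i\theta_b}$ with $\abs{e^{i\theta_a}-e^{i\theta_b}}\ge\gamma$. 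Let $\ket{a},\ket{b}$ be corresponding orthonormal eigenvectors and let $R$ be a one-qubit reference; feeding the state $\tfrac1{\sqrt2}\big(\ket{a}\ket{0}_R+\ket{b}\ket{1}_R\big)\otimes\bigotimes_{i\ne i_0}\ket{0}$ through $U$ and through $V$ (both acting as the identity on $R$) produces outputs that overlap with magnitude at most $\tfrac12\abs{e^{i\theta_a}+e^{i\theta_b}}=\abs{\cos\tfrac{\theta_a-\theta_b}{2}}\le\sqrt{1-\gamma^2/4}$, since the blocks $i\ne i_0$ contribute only fixed unit vectors. Hence the outputs are at trace distance $\ge\gamma/2$ from each other, whence $\norm{U-V}_\diamond\ge\gamma$. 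Applying this with $\gamma\ge\tfrac14$ yields the required $0.1$ separation.

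The main obstacle, and the reason this is not a verbatim copy of \Cref{lem:countlower}, is the global-phase ambiguity of tensor products: $U_1\otimes U_2=(e^{i\phi}U_1)\otimes(e^{-i\phi}U_2)$, so naively counting tuples of block unitaries overcounts, in the same way that taking parities overcounted Boolean functions in \Cref{lem:countlower}. Working in the quotient group $PU(2^k)$ with the metric $\Delta$, together with the tensor-distinguishability statement above, resolves this cleanly. The remaining steps — pinning down $k$ so that $4^k=\widetilde{\Theta}(T)$ with per-block depth at most $T$, and the arithmetic giving $2^{\tOmega(Tn)}$ — are routine and parallel to \Cref{lem:countlower}.
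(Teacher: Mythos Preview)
Your proposal is correct and follows essentially the same approach as the paper: partition into blocks of $\Theta(\log T)$ qubits, use a volume/packing argument on $U(2^{k})$ to get $2^{\tOmega(T)}$ well-separated unitaries per block implementable in depth $T$ with local finite-gate-set circuits, and tensor over the $\Theta(n/\log T)$ blocks. Your treatment is in fact more careful than the paper's in two places---you work in $PU(2^k)$ with the phase-invariant metric $\Delta$ (the paper cites a packing in operator norm but needs diamond-norm separation) and you give an explicit eigenvector witness for the tensor-product distinguishability step, where the paper simply says ``the distinguisher can ignore the second register.''
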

\begin{proof}
We divide the $n$ qubits into groups of $k=\lfloor \log_4 T \rfloor$ qubits, 
which is possible since $T \leq 4^n$. On these $k$ qubits, we will 
compute $2^{\tOmega(T)}$ distinguishable unitaries 
with a depth $T$ circuit that uses local $2$-qubit gates. 
We can do this by considering a maximal set of unitaries on $k'$ qubits that is distinguishable. 
More precisely, on $k'$ qubits there exist $2^{\Omega(4^{k'})}$ unitaries 
such that each pair of unitaries is at least distance $0.1$ apart in spectral norm; see e.g.~\cite{Szarek1997}. 
(This follows from the fact that in the group of $d \times d$ unitaries 
with metric induced by operator norm, a ball of radius~$0.1$ has volume exponentially small in $d^2$.)
If $\norm{ U_1 \ket \psi - U_2 \ket \psi }_2 \ge 0.1$,
then the trace distance between the two states 
$\frac{1}{\sqrt{2}}(\ket0 \ket 0 + \ket 1 U_j \ket \psi)$ ($j=1,2$)
is $\ge 0.1$,
and hence the controlled unitaries $\ket 0 \bra 0 \otimes I + \ket 1 \bra 1 \otimes U_j$ ($j=1,2$) 
are distinguishable.
Therefore, on $k'$ qubits there exist $2^{\Omega(4^{k'})}$ unitaries that are pairwise distinguishable.
We know that any unitary on $k'$ qubits can be exactly written as a product of 
$\tO(4^{k'})$ arbitrary 2-qubit gates~\cite{BBC+95,Kni95}.
As described in the proof of~\cref{lem:countlower}, 
making these gates local and from a finite gate set only adds polynomial factors in $k'$.
By choosing $k' = k - \Theta(\log k)$, 
we can compute this set of $2^{\Omega(4^{k'})} = 2^{\tOmega(T)}$ distinguishable unitaries 
with depth at most $T$. 

If $U$ and $V$ are distinguishable, 
then so are $U \otimes X$ and $V \otimes Y$ for any unitary $X,Y$,
since the distinguisher can simply ignore the second register.
Hence, if we have two sets $\{U_i\}_{i=1}^q, \{V_j\}_{j=1}^q$ of distinguishable unitaries,
the set $\{U_i \otimes V_j \}_{i,j=1}^q$ consists of $q^2$ distinguishable unitaries.
Since we can compute $2^{\tOmega(T)}$ distinguishable unitaries on each of the $n/k$ blocks of $k$ qubits,
we can compute $(2^{\tOmega(T)})^{n/k} = 2^{\tOmega(Tn)}$ unitaries on all $n$ qubits.
\end{proof}
    
\begin{lemma}\label{lem:Ucountupper}
Let $S$ be a set of pairwise distinguishable unitaries. 
If any unitary in $S$ can be computed by a quantum circuit 
with $G$ non-local $2$-qubit gates from any gate set, then $|S| = 2^{\tO(G\log n)}$.
\end{lemma}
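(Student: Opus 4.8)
The plan is to run exactly the counting argument used for \Cref{lem:countupper}, with ``computes a Boolean function'' replaced by ``is close in diamond norm to a unitary in $S$'', and with ``number of Boolean functions'' replaced by ``number of distinct finite\nobreakdash-gate\nobreakdash-set circuits''. So first I would reduce to a finite gate set: given $U \in S$ computed to some fixed small constant diamond\nobreakdash-norm error by a circuit $C_U$ with $G$ two\nobreakdash-qubit gates from an arbitrary (possibly infinite) gate set on $n$ system qubits plus ancillas, apply the Solovay--Kitaev theorem to replace each gate by a product of $\tO(1)$ gates from a fixed finite universal two\nobreakdash-qubit gate set, incurring error $\calO(1/G)$ per gate. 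By subadditivity of the diamond norm under composition, the errors add, so the resulting circuit $\widetilde C_U$ has $\tO(G)$ gates from the finite set and still approximates $U$ to within, say, $0.01$ in diamond norm, provided the Solovay--Kitaev precision is chosen to be a sufficiently small constant times $1/G$.

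The key step is then to observe that $U \mapsto \widetilde C_U$ is injective on $S$: if $\widetilde C_U = \widetilde C_V$ for some $U \ne V$ in $S$, the triangle inequality gives $\norm{U - V}_\diamond \le 0.01 + 0.01 < 0.1$, contradicting the hypothesis that the elements of $S$ are pairwise $0.1$\nobreakdash-separated in diamond norm. (Here I only need the combined approximation error to stay below $0.05$ on each side, which can always be arranged by shrinking the Solovay--Kitaev constant; this also absorbs whatever fixed constant error is hidden in ``$U$ can be computed by a circuit with $G$ gates''.) Hence $|S|$ is at most the number of distinct circuits of the form $\widetilde C_U$.

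Finally I would bound that number exactly as in \Cref{lem:countupper}: a circuit with $\tO(G)$ two\nobreakdash-qubit gates uses at most $\tO(G)$ ancilla qubits (each gate touches at most two fresh ancillas) and has depth at most $\tO(G)$, so it is fully specified by listing, for each of its $\tO(G)$ gates, which element of the finite gate set it is ($\calO(1)$ bits), its depth layer ($\calO(\log G)$ bits), and the labels of its two qubits among the $n + \tO(G)$ available qubits ($\calO(\log n + \log G)$ bits). This is $\tO(G\log n)$ bits total, so there are at most $2^{\tO(G\log n)}$ such circuits, giving $|S| \le 2^{\tO(G\log n)}$.

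The only real obstacle beyond the Boolean\nobreakdash-function case is the bookkeeping around what ``computes a unitary'' means in the presence of ancillas and in diamond norm, and making sure the two\nobreakdash-sided approximation (infinite\nobreakdash-gate circuit $\to$ finite\nobreakdash-gate circuit, and finite\nobreakdash-gate circuit $\to$ at most one element of $S$) keeps all error constants safely below the $0.1$ separation threshold; once the constants are fixed this is routine, and the counting itself is a verbatim translation of the argument in \Cref{lem:countupper}.
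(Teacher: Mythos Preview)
Your proposal is correct and follows essentially the same approach as the paper: reduce to a fixed finite gate set via Solovay--Kitaev with $\tO(G)$ gates and error below $0.05$, observe by the triangle inequality that two $0.1$-separated unitaries cannot be approximated by the same finite-gate circuit, and then count circuits exactly as in \Cref{lem:countupper}. The paper's proof is just a terser version of what you wrote, with the approximation threshold chosen as $0.04$ rather than $0.01$.
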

\begin{proof}
This proof is essentially the same as that of \Cref{lem:countupper}.
We first observe that if a circuit over an arbitrary gate set computes a unitary $U$, 
we can approximate it to error less than $0.04$ using the Solovay--Kitaev theorem 
and increase the circuit size to $\tO(G)$ gates. 
Importantly, since the unitaries are a distance $0.1$ apart (see \cref{eq:cbnorm-opnorm}),
one circuit cannot simultaneously approximate two different unitaries to error $0.04$.
Then exactly the same counting argument as in \Cref{lem:countupper} 
shows there can only be $2^{\tO(G \log n)}$ such circuits.
\end{proof}
    
\begin{proof}[Proof of \protect{\Cref{thm:lowerboundgeneral}}]
        Suppose that any piecewise constant bounded local Hamiltonian on $n$-qubits could be simulated for time $T$ using $G$ 2-qubit non-local gates from any (possibly infinite) gate set. Then using \Cref{lem:circuittoHamiltonian} and \Cref{lem:Ucountlower}, we can produce a set $S$ of distinguishable unitaries of size $2^{\tOmega(Tn)}$. By assumption, each of these unitaries can be approximately computed by a circuit with $G$ non-local 4-qubit gates. Now invoking \Cref{lem:countupper}, we know that such circuits can approximate at most $2^{\tO(G\log n)}$ distinguishable unitaries on $n$ qubits. Hence we must have $G\log n = \tOmega(Tn)$, which yields $G = \tOmega(Tn)$.
\end{proof}
    
\section{Discussion}\label{sec:discussion}
    
    We have only analyzed local Hamiltonians on (hyper)cubic lattices embedded in some Euclidean space,
    but Lieb-Robinson bounds with exponential dependence on the separation distance
    hold more generally.  However, on more general graphs, it may be more difficult to find an appropriate decomposition that gives a small error; this must be analyzed for each graph.
    One advantage of the method here is that the accuracy improves for smaller Lieb-Robinson velocity.
    This can occur if the terms in the Hamiltonian have a small commutator (see \cref{app:clr}).
    
    The decomposition based on Lieb-Robinson bounds looks 
    very similar to higher order Lie-Trotter-Suzuki formulas.
    The difference is in the fact that the overlap 
    is chosen to be larger and larger
    (though very slowly) as the simulated spacetime volume increases.
    If we want an algorithm that does not use any ancilla qubits,
    similar to algorithms based on Lie-Trotter-Suzuki formulas,
    then we can simulate the small blocks from Lieb-Robinson bounds
    by high order Suzuki formulas~\cite{Suzuki1991,BAC+07} 
    where the accuracy dependence is polynomial (power-law) of arbitrarily small exponent $a >0$.
    This combination results in an algorithm of total gate complexity $e^{\calO(1/a)}\cdot \calO(Tn(Tn/\epsilon)^a)$,
    similar to what is claimed to be achievable in Ref.~\cite[Sec.~4.3]{JordanLeePreskill2014}.
    However, a poly-logarithmic dependence on the factor $Tn/\epsilon$ is not possible in this approach
    by any choice of $a$ due to the exponential prefactor.
    
Application to fermions, which represent common physical particles such as electrons, 
is straightforward but worth mentioning.
While we have previously focused on Hamiltonians that act on qubits, 
we now consider Hamiltonians on $n$ sites that are occupied by fermions, 
and describe their well-known reduction to qubit Hamiltonians.
Each term in a fermion Hamiltonian is a product of some number of fermion operators $c_{j}$
and their Hermitian conjugates, indexed by the site $j\in\{0,1,\cdots,n-1\}$, e.g. $c_{0}c^\dagger_{1}$.
These fermion operators are defined by the anti-commutation relations
\begin{align*}
	\{c_j,c_k\}:= c_j c_k + c_k c_j = 0,\quad 	\{c^\dagger_j,c^\dagger_k\}=0, \quad 		\{c_j,c^\dagger_k\}=\delta_{jk}\id.
\end{align*}
Since Hamiltonian terms always have fermion parity even
(this is a basic assumption on any physical Hamiltonian),
Lieb-Robinson bounds hold without any change.
It is often convenient to additionally represent fermion operators 
by (real) Majorana fermion operators $\gamma_{j}$ where $j\in\{0,1,\cdots 2n-1\}$, 
defined by the linear relations
\begin{align*}
    c_{p} = \frac{\gamma_{2p} + \i\gamma_{2p+1}}{2}, \quad 
    c_{p}^\dagger = \frac{\gamma_{2p} - \i\gamma_{2p+1}}{2},
\end{align*}
from which we may infer that Majorana operators are self-inverse and satisfy the anti-commutation relations
\begin{align*}
    \{\gamma_j,\gamma_k\}=2\delta_{jk}\id.
\end{align*} 
    
    Given the block decomposition based on the Lieb-Robinson bound,
    we should implement each small block using $\polylog(Tn/\epsilon)$ 2-qubit gates.
    The Jordan-Wigner transformation, a representation of Majorana operators in terms of the Clifford algebra,
    is a first method one may consider:
    \begin{align}
    \gamma_{2j-1} &\mapsto \sigma^z_1 \otimes \cdots \otimes \sigma^z_{j-1} \otimes \sigma^x_j,\\
    \gamma_{2j}   &\mapsto \sigma^z_1 \otimes \cdots \otimes \sigma^z_{j-1} \otimes \sigma^y_j,
    \end{align}
    and the right-hand side is a tensor product of the $2\times 2$ Pauli matrices that satisfy
    \begin{align*}
	[\sigma_j,\sigma_k]=i2\epsilon_{jkl}\sigma_l,\quad \{\sigma_j, \sigma_k\}=\delta_{jk}\id,
    \end{align*}
    where $\epsilon_{jkl}$ is the anti-symmetric Levi-Civita symbol.
    Often, the tensor factor of $\sigma^z$ preceding $\sigma^x$ or $\sigma^y$ is called 
    a {\it Jordan-Wigner string}.
    In one spatial dimension, the above representation 
    where the ordering of $\gamma$ is the same as the chain's direction
    gives a local qubit Hamiltonian, since in any term Jordan-Wigner strings cancel.
    The ordering of fermions is thus very important.
    (Under periodic boundary conditions, at most one block may be nonlocal; 
    however, we can circumvent the problem 
    by regarding the periodic chain, a circle, 
    as a double line of finite length whose end points are glued:
    $ \big( [-1,+1] \times \{\uparrow,\downarrow\} \big) / \{ (-1,\uparrow) \equiv (-1,\downarrow), (+1,\uparrow) \equiv (+1,\downarrow)\} $.
    This trick doubles the density of qubits in the system,
    but is applicable in any dimensions for periodic boundary conditions.)
    
    In higher dimensions with fermions,
    a naive ordering of fermion operators
    turns most of the small blocks into nonlocal operators
    under the Jordan-Wigner transformation.
    However, fortunately, there is a way to get around this, at a modest cost,
    by introducing auxiliary fermions and 
    letting them mediate the interaction of a target Hamiltonian~\cite{Verstraete2005}.
    The auxiliary fermions are ``frozen,'' during the entire simulation, 
    by an auxiliary Hamiltonian that commutes with the target Hamiltonian.
    With a specific ordering of the fermions,
    one can represent all the new interaction terms as local qubit operators.
    The key is that if $c_j c_k$ is a fermion coupling whose Jordan-Wigner strings do not cancel,
    we instead simulate $c_j c_k \gamma_1 \gamma_2$, where $\gamma_{1,2}$ are auxiliary,
    such that the Jordan-Wigner strings of $\gamma_1,\gamma_2$ cancel those of $c_j,c_k$, respectively.
    The auxiliary $\gamma$'s may be ``reused'' for other interaction terms 
    if the interaction term involves fermions that are close in a given ordering of fermions.
    (Ref.~\cite{Verstraete2005} explains this manipulation for quadratic Hamiltonian terms
    but it is straightforward that any higher order terms can be treated similarly.
    They also manipulate the Hamiltonian for the auxiliary $\gamma$ to make it local after Jordan-Wigner transformation,
    but for our simulation purposes it suffices to initialize the corresponding auxiliary qubits.)
    In this approach, if we insert $\calO(1)$ auxiliary fermions per fermion in the target system,
    the gate and depth complexity is the same as if there were no fermions.
    Note that we can make the density of auxiliary fermions arbitrarily small
    by increasing the simulation complexity as follows.
    Divide the system with non-overlapping blocks of diameter $\ell$, which is e.g. $\calO(\log n)$,
    that form a hypercubic lattice.
    (These blocks have nothing to do with our decomposition by Lieb-Robinson bounds.)
    Put $\calO(1)$ auxiliary fermions per block,
    and order all the fermions lexicographically so that all the fermions in a block
    be within a consecutive segment of length $\calO(\ell^D)$ in the ordering.
    Interaction terms within a block have Jordan-Wigner string of length at most $\calO(\ell^D)$,
    and so do the inter-block terms using the prescription of \cite{Verstraete2005}.
    The gate and depth complexity of this modified approach has $\mathrm{poly}(\ell)$ overhead.

\appendix

\section{Heisenberg model benchmark}
\label{app:benchmark}
We may gain some intuition in applying Lieb-Robinson bounds to quantum simulation with a concrete example. The Heisenberg model offers one useful benchmark~\cite{Childs2017} for the performance of various quantum simulation algorithms. In the case of $1$D nearest-neighbor interactions with open boundary conditions, and where each spin is subject to an inhomogeneous magnetic field, its Hamiltonian is
\begin{align}
H=\sum_{j=0}^{n-1} \underbrace{(X_j X_{j+1} + Y_j Y_{j+1} + Z_j Z_{j+1} + z_j Z_j)}_{h_j},
\label{eq:Hamiltonian}
\end{align}
where $\{X,Y,Z\}$ are the single-qubit Pauli operators. 
Though this Hamiltonian may be diagonalized in a closed form  without the field term ($z_j=0$), 
in the presence of non-uniform $z_j$ this model can only be treated numerically in general.

To recap, there are two sources of error in the entire quantum simulation algorithm for implementing time-evolution $e^{-iTH}$.
One is from the decomposition of the full time-evolution operator $e^{-iTH}$ using Lieb-Robinson bounds
into $m = \calO(T n / \ell)$ blocks, 
and is bounded from above by $m \epsilon_\text{LR} = \calO( m e^{-\mu \ell})$ for some $\mu > 0$.
The other is from approximate simulations of the block unitary using known algorithms
 such as \cite{TS,LC16}.
If each block is simulated up to error $\epsilon_\square$,
then the total error $\epsilon$ of the algorithm is at most $m(\epsilon_{LR} + \epsilon_\square)$.
Thus, we need $\ell = \calO( \log(T n / \epsilon ))$. 

\begin{figure}
	\centering
	\includegraphics[width=0.75\textwidth]{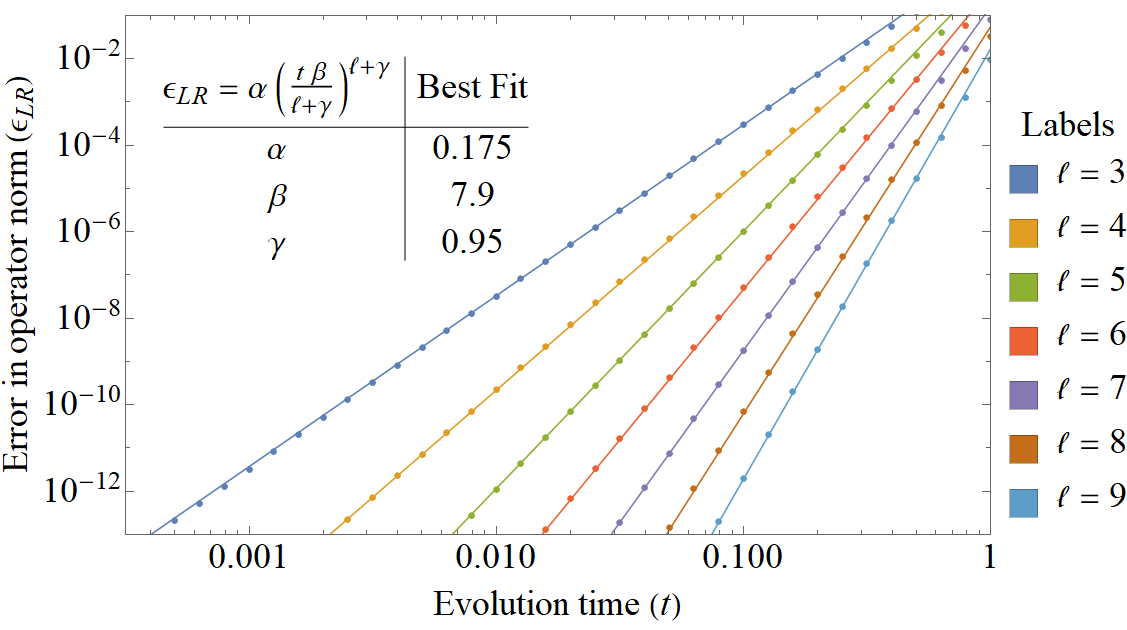}
	\caption{Numerical test of $m=1$ decomposition of the real-time evolution operator
		based on Lieb-Robinson bounds.
		The Hamiltonian is the antiferromagnetic one-dimensional Heisenberg model up to $11$ spins in \cref{eq:Hamiltonian}
		with open boundary condition.
		The error of the decomposition in \cref{eq:stairDecomposition}
		is almost independent of the position $a$ of the overlap within the system and is exponentially small in the overlap size $\ell$. All lines plotted are the best-fit to $\epsilon_{LR}=\alpha\left(\frac{t\beta}{\ell+\gamma}\right)^{\ell +\gamma}$.
	}
	\label{fig:numerics}
\end{figure}
\begin{figure}
	\centering
	\includegraphics[width=0.75\textwidth]{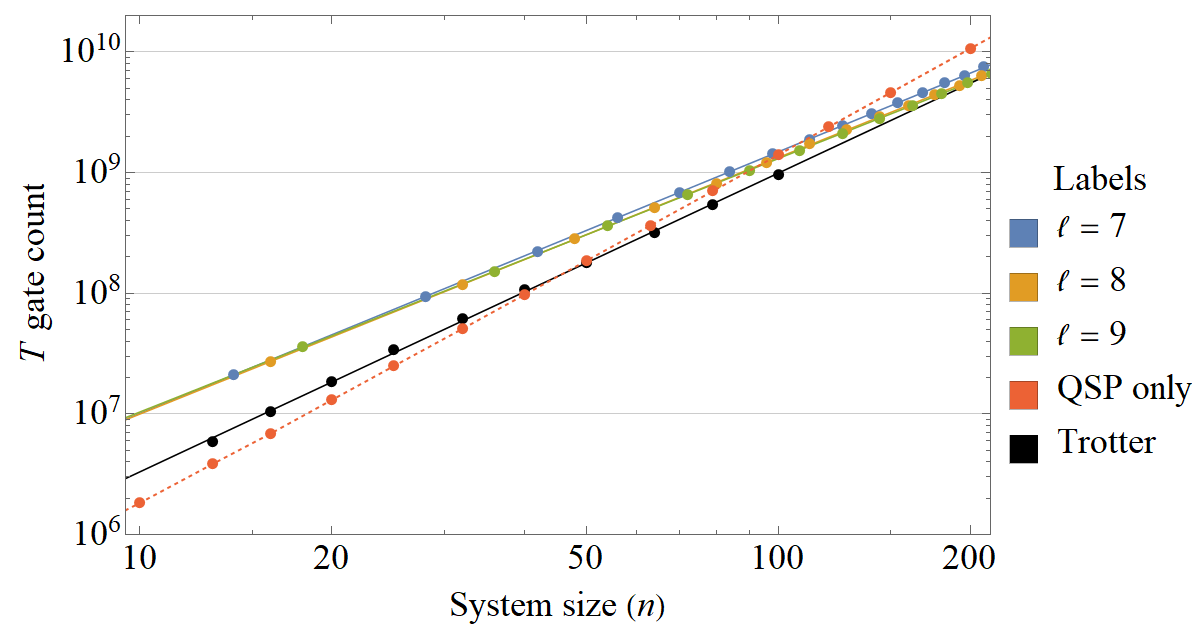}
	\caption{$T$ gate counts of simulating the Heisenberg model of~\cref{eq:Hamiltonian} for time of $n$, error $\epsilon=10^{-3}$, and $h_j\in[-1,1]$ chosen uniformly at random, using the Lieb-Robinson decomposition for overlap sizes of $\ell=7,8,9$. Plotted for reference is the complexity $\tilde{\calO}{(n^3)}$ of simulating the entire $n$-site system using Quantum Signal Processing (QSP)~\cite{QSP} without decomposing into blocks. Also plotted with data from~\cite{Childs2017} are the gate counts optimized over a numerical simulation of Lie-Trotter-Suzuki product formulas of various orders and step-sizes.
	}
	\label{fig:TGateCount}
\end{figure}

Before proceeding, we require estimates of the Lieb-Robinson contribution to error $\epsilon_\text{LR}$. By rescaling $H$ and $1/T$ by the same constant factor to ensure that $\|h_j\|\le 1$, this may be rigorously upper-bounded through \Cref{lem:LRB}. However,  it is also reasonable to numerically obtain better constant factors in the scaling of $\epsilon_\text{LR}$. Though simulation of the entire system of size $n$ is classically intractable, $\epsilon_\text{LR}$ can be obtained by classically simulating small blocks of size $\calO(\log{(n)})$, which is within the realm of feasibility. The decomposition ($m=1$) is 
\begin{align}
\exp( -\i t H ) 
    \simeq 
\exp\left( -\i t \sum_{j<b} h_j \right)
\exp\left( +\i t \sum_{j=a}^{b-1} h_j \right)
\exp\left( -\i t \sum_{j \ge a} h_j \right)
\label{eq:stairDecomposition}
\end{align}
so there are $\ell = b-a+1$ spins in the overlap. We computed the error for a wide range of $t$ up to $\ell=9$, and observed that the error is almost independent of the position of the overlap, 
and is also exponentially small in $\ell$. Note that the best fit $\epsilon_{LR}=\alpha\left(\frac{t\beta}{\ell+\gamma}\right)^{\ell +\gamma}$ in \Cref{fig:numerics} may be solved for $\ell = \calO{(t+\log{(1/\epsilon_{LR})})}$, and is consistent with \Cref{lem:decomposition}. 

Using the recursive decomposition into blocks shown in \Cref{fig:algorithm}, we now simulate $m/2$ blocks of size $\ell$ and $m/2$ blocks of size $2\ell$, both for time $t$, and each with error $\epsilon_\square = \frac{\epsilon}{3m}$. Holding $\ell$ constant, we may use fit approximation of $\epsilon_{LR}$ to simultaneously solve for the number of blocks $m=\frac{2T n}{t \ell}$ and the evolution time of each block $t$ such that the Lieb-Robinson error contribution $\epsilon_{LR} = \frac{\epsilon}{3m}$. Note that we may also invert the ordering of sequential stacks in \cref{eq:stairDecomposition} to merge blocks of size $2\ell$. For instance,
\begin{align}\nonumber
e^{ -\i t H} e^{ -\i t H }
&\simeq 
e^{\left( -\i t \sum_{j<b} h_j \right)}
e^{\left( +\i t \sum_{j=a}^{b-1} h_j \right)}
e^{\left( -\i t \sum_{j \ge a} h_j \right)}
e^{\left( -\i t \sum_{j \ge a} h_j \right)}
e^{\left( +\i t \sum_{j=a}^{b-1} h_j \right)}
e^{\left( -\i t \sum_{j<b} h_j \right)}
\\ 
&=e^{\left( -\i t \sum_{j<b} h_j \right)}
e^{\left( +\i t \sum_{j=a}^{b-1} h_j \right)}
e^{\left( -\i 2 t \sum_{j \ge a} h_j \right)}
e^{\left( +\i t \sum_{j=a}^{b-1} h_j \right)}
e^{\left( -\i t \sum_{j<b} h_j \right)}.
\label{eq:stairStackDecomposition}
\end{align}
Excluding boundary cases, this leads to fewer blocks $m=\frac{3T n}{2t \ell}$. Specifically, we may alternatively simulate $2m/3$ blocks of size $\ell$ for time $t$ and $m/3$ blocks of size $2\ell$ for time $2t$, and each with error $\epsilon_\square = \frac{\epsilon}{3m}$. Depending on the simulation algorithm used for each block, this may be slightly more efficient.

Similar to the benchmark in~\cite{Childs2017}, we obtain explicit gate counts in the Clifford+$T$ basis in~\Cref{fig:TGateCount} for simulating $e^{-iTH}$ with $T=n$, error $\epsilon=10^{-3}$, and $h_j\in[-1,1]$ chosen uniformly at random. We implement each block with the combined quantum signal processing~\cite{QSP} and qubitization~\cite{LC16} simulation algorithm. An outline of this algorithm together with certain minor circuit optimizations is discussed in \Cref{app:qspalg}.
Furthermore, the remaining error budget of $\epsilon/3$ is reserved for approximating arbitrary single-qubit rotations in the algorithm with Clifford+$T$ gates~\cite{Kliuchnikov2013Synthesis}.

\section{Further algorithmic improvements}
\label{app:extensions}

\subsection{Inhomogeneous interaction strength}

We can adapt the decomposition of the time-evolution unitary based on Lieb-Robinson bounds
when there is inhomogeneity in interaction strength across the lattice.
For this section, we do not assume that $\norm{h_X} \le 1$ for all $X \subseteq \Lambda$.
Instead, suppose there is one term $h_{X_0}$ in the Hamiltonian with $\norm{h_{X_0}} = J \gg 1$
while all the other terms $h_X$ have $\norm{h_X} \le 1$,
the prescription above says that we would have to divide the time step in $\lceil J \rceil $ pieces,
and simulate each time slice.
However, more careful inspection in the algorithm analysis
tells us that one does not have to subdivide the time step for the entire system.
For clarity in presentation,
let us focus on a one-dimensional chain where the strong term $h_{X_0}$ is at the middle of the chain.
We then introduce a cut as in \Cref{fig:algorithm} (a) at $h_{X_0}$.
The purpose is to put the strong term into $H_\text{bd}$ so that
the truncation error in \cref{eq:truncation} is manifestly at most linear in $J$.
Since the truncation error is exponential in $\ell$, the factor of $J$ in the error
can be suppressed by increasing $\ell$ by $\calO(\log J)$.
After we confine the strong term in a block of size $2\ell_0 = \calO( \log (J L T/\epsilon))$ in the middle of the chain,
the rest of the blocks can be chosen to have size $\calO(\log(LT/\epsilon))$ and do not have any strong term, 
and hence the time step $t$ can be as large as it would have been without the strong term.
For the block with Hamiltonian $H_\square$ that contains the strong term $h_{X_0}$, 
we can simply simulate $H_\square / J$ for time $Jt$.

\subsection{Reducing number of layers in higher dimensions}

Although we have treated the spatial dimension $D$ as a constant,
the number of layers for unit time evolution is $3^D$,
which grows rather quickly in $D$,
if we used the hyperplane decomposition as above.
We can reduce this number by considering a different tessellation of the lattice.

To be concrete, let us explain the idea in two dimensions.
Imagine a tiling of the two-dimensional plane
using hexagons of diameter, say, $10\ell$.
It is important that this hexagonal tiling is 3-colorable;
one can assign one of three colors, red, green, and blue,
to each of hexagons 
such that no two neighboring hexagons have the same color.

Let $R,G,B$ be the unions of red, green, and blue hexagons, respectively.
Each of $R,G,B$ consists of well separated hexagons.
Here, being well separated means that for any color $c$,
the $\ell$-neighborhood of a cell of color $c$
does not intersect with any other cell of color $c$.
Suppose we had implemented the time evolution 
$U({R\cup G})$ for the Hamiltonian $H_{R \cup G}$.
Consider $\ell$-neighborhood $B^+$ of $B$.
$B^+ \cap (R \cup G)$ consists of separated ``rings'' of radius $\simeq 6\ell$
and thickness $\simeq \ell$.
(See \cref{fig:hexagonaltiling}.)
We can now apply \Cref{lem:decomposition} to $R\cup G$ and $B^+$
to complete the unit time evolution for the entire system $R \cup G \cup B$.
The unitaries needed in addition to $U(R \cup G)$ are 
the backward time-evolution operator on $(R \cup G) \cap B^+$,
which is a collection of disjoint unitaries on the rings,
and the forward time-evolution operator on $B^+$,
which is a collection of disjoint unitaries on enlarged hexagons.

\begin{figure}
\centering
\includegraphics[width=0.7\textwidth, trim ={0ex 50ex 120ex 0ex}, clip]{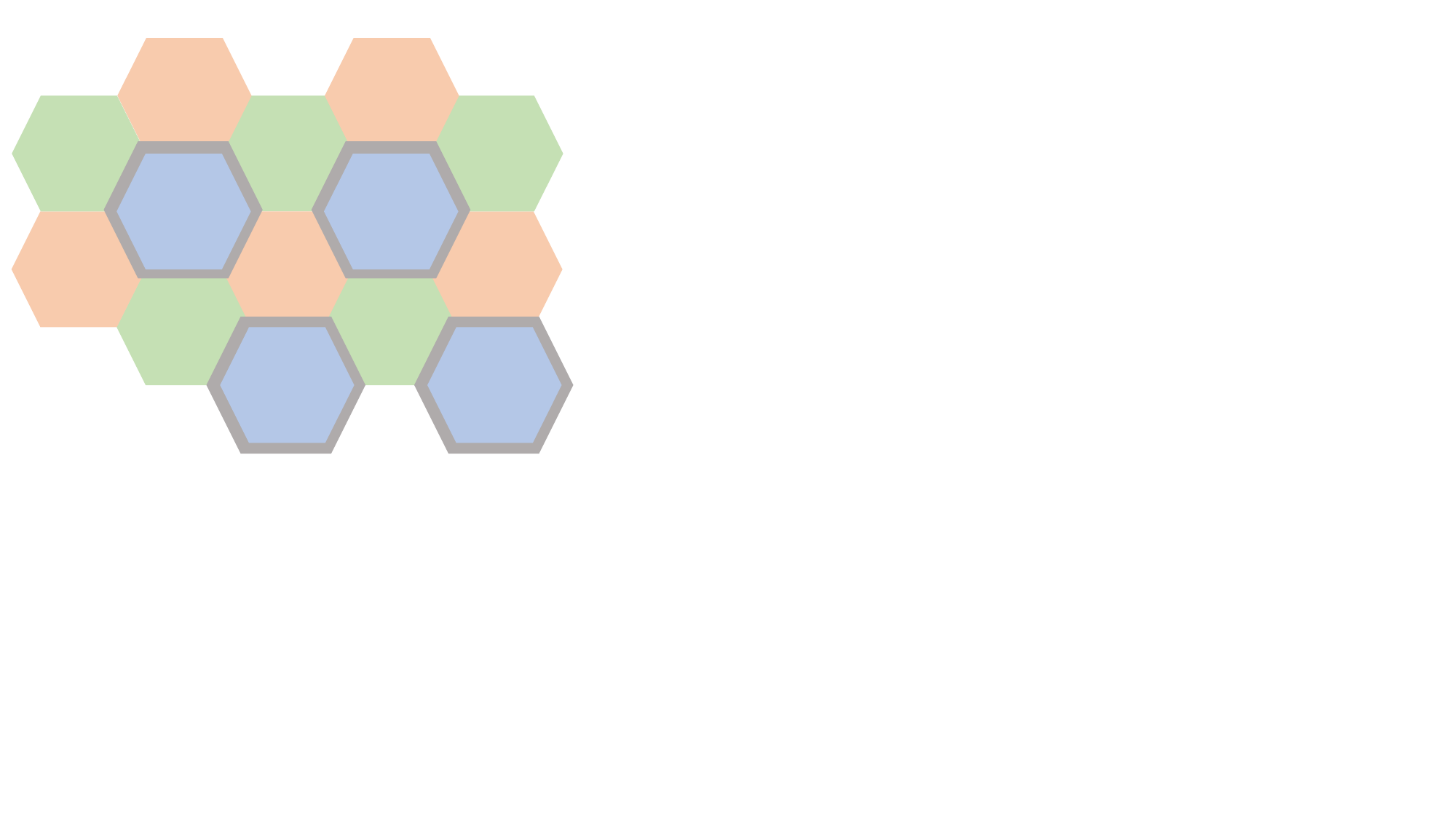}
\caption{Hexagonal tiling of 2D plane with 3 colors.
The grey regions (``rings'') that encircle blue hexagons 
represent the intersection of the $\ell$-neighborhood $B^+$ of $B$ and $R\cup G$.}
\label{fig:hexagonaltiling}
\end{figure}

The time evolution $U(R \cup G)$ is constructed in a similar way.
We consider an $\ell$-neighborhood of $G$ within $R \cup G$.
The enlarged part $G^+ \cap R$ consists of line segments of thickness 
$\ell$, and it is clear that $R \setminus G^+$ is $\ell$-away
from $G$. We can again apply \Cref{lem:decomposition}.

In summary, the algorithm under the 3-colored tesellation is
(i) forward-evolve the blocks in $R$,
(ii) backward-evolve the blocks in $R \cap G^+$,
(iii) forward-evolve the blocks in $G^+ \cap (R \cup G)$,
(iv) backward-evolve the blocks in $(R\cup G)\cap B^+$,
and
(v) forward-evolve the blocks in $B^+$.

In general, if the layout of qubits allows
$\alpha$-colorable tessellation, where $\alpha = 2,3,4,\ldots$,
such that the cells of the same color are well separated,
then we can decompose unit time evolution into $2\alpha -1$ layers
by considering fattened cells of the tessellation.
The proof is by induction. When $\alpha =2$, it is clear.
For larger $\alpha$, we implement the forward-evolution
on the union $A$ of $\alpha -1$ colors using $2\alpha -3$ layers
by the induction hypothesis,
and finish the evolution by backward-evolution on $A \cap B^+$,
where $B$ is the union of the last color and $B^+$ is the 
$\ell$-neighborhood of $B$,
and then forward-evolution on $B^+$.
This results in $2\alpha -1 $ layers in total.

A regular $D$-dimensional lattice can be covered 
with $D+1$ colorable tessellation.
One such coloring scheme is obtained by any triangulation of $\mathbb{R}^D$,
and coloring each 0-cell by color ``0'',
and each 1-cell by color ``1'', and so on,
and finally fattening them.
This is a well known fact; 
see for example \cite{BravyiPoulinTerhal2010Tradeoffs}.

For three dimensions, there exists a more ``uniform'' 4-colorable tessellation.
Consider the body-centered cubic (BCC) lattice,
spanned by basis vectors $(2,0,0)$, $(0,2,0)$, and $(1,1,1)$.
Color each BCC lattice point $p=(x,y,z)$
by the rule $c = x+y+z \mod 4$.
The Voronoi tessellation associated with this colored BCC lattice
is a valid 4-colored tessellation for our purpose,
as seen as follows.
For $c=0,1,2,3$, the sublattice of color $c$ 
is spanned by $(2-c,-2-c,-c), (4-c,4-c,4-c), (-c,-c,4-c)$.
It is easy to check that the shortest distance 
between two distinct lattice points of the same color is $2\sqrt 2 \simeq 2.828$,
by, for example, an exhaustive search in the box $\{-5,\ldots,5\}^3 \subset \ZZ^3$. 
By definition of Voronoi tessellation,
a point $q \in \RR^3$ belongs to a cell $C$ of color $c$
if and only if the closest lattice point to $q$ is unique and has color $c$.
Since $C$ is convex, a point $q'$ on the boundary of $C$ 
is the furthest from the central lattice point of $C$
if $q'$ is equidistant from the four sublattices.
Indeed, $(1,0,\frac 1 2)$ is equidistant by distance $\frac 1 2 \sqrt 5 \simeq 1.118$ 
from distinctly colored points $(0,0,0),(1,-1,1),(2,0,0),(1,1,1)$,
which define a tetrahedron inside which there is no other lattice point.
Hence, each cell is contained in a ball of radius $\frac 1 2 \sqrt 5$,
which is smaller than $\sqrt 2$, the half of the minimum distance between distinct points of the same color.
Therefore, the cells of the same color are separated.

\section{Lieb-Robinson Bounds with Bounded Commutators}
\label{app:clr}

\subsection{Introduction and Assumptions}

One advantage of the method described in this paper is that if the Lieb-Robinson velocity is small, then the method becomes more accurate.  One case in which this occurs is if the terms in the Hamiltonian have a small commutator with each other.  This section considers the Lieb-Robinson velocity under such a small commutator assumption.
Note that Trotter-Suzuki methods also improve in accuracy if the Hamiltonian terms have a small commutator~\cite{Childs2017}, but many other time-simulation methods do not.

Bounds on the Lieb-Robinson velocity with bounded commutators 
were first considered in Ref.~\cite{premont2010lieb}.  
Our results generalize their work in several ways.  
The previous work considered strictly local Hamiltonians with a bound on the commutator of two terms 
(but without any bound on the norm of a term),
but it was under the further assumption that the Hamiltonian 
was a sum of two different strictly local Hamiltonians $H=H_0+H_1$, 
such that $H_0$ and $H_1$ were each a sum of exactly commuting terms,
with the commutator bound applying to the commutator of a term in $H_0$ with a term in $H_1$.
We consider strictly local interactions with a bound on the commutator 
without any bound on the norm of a term, 
but without requiring the further assumption that the Hamiltonian can be decomposed 
as a sum of two exactly commuting Hamiltonians.
Also, we do not require any bound on higher order commutators as was used in \cite{premont2010lieb};
however, we do find in some cases tighter bounds when we assume such higher order bounds.

Additionally, we consider exponentially decaying interactions, 
rather than strictly local interactions.  
In this case, instead of giving a bound on the commutator of two terms, 
we find it necessary to give a bound on the norm of terms as well as the bound commutator.
This is required to express the exponential decay appropriately.
The exponential decay and commutator bounds that we consider are as follows:
consider a Hamiltonian $H=\sum_X h_X$, 
where each $X$ represents some set of sites and each $h_X$ a Hermitian operator supported on $X$.
The sum is over all possible subsets of the lattice.
The terms $h_X$ obey a commutator condition that there exists $0 \le \eta \le 1$ such that
for all $X$ and $Y$
\begin{align}
\label{commcond}
\Vert [h_X,h_Y] \Vert \leq 2 \eta \norm{h_X} \norm{h_Y}.
\end{align}
The exponential decay is imposed as follows.
Introduce a metric $\dist(x,y)$ between pairs of sites~$x,y$.
As before, for any two sets $X,Y$ of sites, we write $\dist(X,Y) = \min_{x \in X, y \in Y} \dist(x,y)$.
For any set $X$ of sites, its diameter $\diam(X)$ is defined to be $\max_{x,y \in X} \dist(x,y)$.
Assume that there are constants $\zeta, \mu > 0$ such that for any site $x$,
\begin{align}
\label{locassum}
\sum_{X\ni x} \norm{h_X} |X|^2 \exp(\mu \,\diam(X)) \leq \zeta < \infty
\end{align}
where $|X|$ denotes the cardinality of set $X$.
The assumption \eqref{locassum} will not be used until \cref{lrv};
we will indicate where it is used as many of the early bounds do not use this assumption and only use \cref{commcond}.
This assumption is slightly stronger than previous exponential decay assumptions 
such as in Ref.~\cite{HastingsKoma2006} as we have $|X|^2$ rather than $|X|$.
The reason for this will be clear later.
Note that we do not impose $\norm{h_X} \le 1$ in this appendix;
the strength of interaction is bounded only through \cref{locassum}.

Throughout, we write
\begin{align*}
O(t;J) := \exp(\i J t) O \exp(-\i J t)
\end{align*}
for any operator $O$ and any hermitian operator $J$.
If $J$ is the full Hamiltonian $H$ of the system,
we omit $H$ and write 
\begin{align*}
O(t) = O(t;H).
\end{align*}

Most of the appendix is devoted to the exponential decay case.
In \cref{sl}, we consider the strictly local case.
The main result that we will prove under the exponential decay assumptions is:
\begin{lemma}
\label{clrlemma}
Assume that assumptions (\ref{commcond},\ref{locassum}) hold.
Then, for any operator $A$ supported on a set $X$ and operator $B$ supported on a set $Y$ we have
\begin{align}
\Vert [A(t),B] \Vert \leq \frac{2}{\sqrt{\eta}}\norm{A} \cdot \norm{B} 
\left( \exp\left( \zeta |t| \sqrt{8\eta} \right) - 1 \right)
\sum_{x\in X} \exp(-\mu\,\dist(x,Y)).
\end{align}
\end{lemma}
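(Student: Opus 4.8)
The plan is to follow the standard Lieb-Robinson strategy of deriving an integro-differential inequality for the quantity $C_B(X,t) := \sup_{A \text{ on } X} \norm{[A(t),B]}/\norm{A}$, but to exploit the commutator condition \eqref{commcond} to gain a factor of $\eta$ in the "source term" of that inequality, which is what ultimately produces the velocity $\propto \sqrt{\eta}$ rather than an $O(1)$ velocity. First I would fix $B$ supported on $Y$ and, for an operator $A$ supported on $X$, write $f(t) = [A(t),B]$. Differentiating, $\partial_t f(t) = i[[H,A(t)],B] = i\sum_{Z} [[h_Z, A(t)], B]$. Only those $Z$ with $Z \cap X \neq \emptyset$ contribute at $t=0$, but as $t$ grows the operator $A(t)$ spreads; the point is to control this spread. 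I would apply the Jacobi identity to rewrite $[[h_Z,A(t)],B] = [[h_Z,B],A(t)] + [h_Z,[A(t),B]]$. The second term, when we pass to norms and integrate, gives a term proportional to $\norm{h_Z} \cdot \norm{f(t)}$ which is standard and gets absorbed; the first term is the crucial one: here we do \emph{not} bound $\norm{[h_Z,B]}$ by $2\norm{h_Z}\norm{B}$ as in the usual proof, but instead we only get a nonzero contribution if $h_Z$ actually fails to commute with $B$, and when it does we use \eqref{commcond} to bound $\norm{[h_Z,B]} \le \eta\norm{h_Z}\norm{B}$ (applied to local terms of $B$ if $B$ is itself a sum of local terms — but here $B$ is a single operator on $Y$, so one keeps $\norm{[h_Z,B]} \le \min(2\norm{h_Z}\norm{B}, \ldots)$; this is where care is needed).

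The key steps, in order, are: (1) set up the differential inequality $\norm{f(t)} \le \norm{f(0)} + \int_0^{|t|} \sum_Z \big( \norm{[h_Z,B]} \cdot 2\norm{A} \cdot \mathbf{1}[Z\cap\mathrm{supp}(A(s))] \;+\; 2\norm{h_Z} \norm{f(s)} \big)\, ds$, being careful that "$Z \cap \mathrm{supp}(A(s))$" has to be made rigorous by a recursive/iterative argument rather than a literal support statement; (2) iterate this inequality, obtaining a series $\sum_{k\ge 0}$ over paths $X = Z_0, Z_1, \ldots, Z_k$ of sets, where consecutive sets overlap, each step after the first contributes a factor carrying one power of $\eta$ (from the commutator bound at the $B$-end) — actually the cleanest bookkeeping is to let each "reflection off $B$" cost $\eta$ and use \eqref{locassum} to sum over the geometry; (3) recognize the resulting sum as (something like) $\sum_k \frac{(c\zeta|t|)^k}{k!} \eta^{k/2} \cdot (\text{geometric decay factor})$ — the half-power of $\eta$ arising because in the bounded-commutator setting one effectively needs \emph{two} interaction steps (one out, one back, or a commutator on each end) to transmit a unit of "signal", a point already present in \cite{premont2010lieb}; (4) resum to get $\exp(\zeta|t|\sqrt{8\eta}) - 1$ (the "$-1$" because the $k=0$ term, i.e. $[A,B]$ with disjoint supports, vanishes), with the overall prefactor $2/\sqrt{\eta}$ coming from balancing the per-step $\sqrt{\eta}$ factors against the normalization, and the spatial factor $\sum_{x\in X}\exp(-\mu\,\dist(x,Y))$ coming from the exponential-decay assumption \eqref{locassum} telescoping along the path from $X$ to $Y$.

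I expect the main obstacle to be step (2)–(3): getting the combinatorics of the iterated series to honestly produce the half-power $\eta^{k/2}$ rather than $\eta^k$ or $\eta^{\lceil k/2 \rceil}$, and correctly tracking where the extra $|X|^2$ in assumption \eqref{locassum} (as opposed to $|X|$) gets consumed — presumably one factor of $|X|$ is the usual one needed to sum a term over its constituent sites, and the second factor is needed because the commutator bound \eqref{commcond} is stated per-pair-of-terms $\norm{[h_X,h_Y]} \le \eta\norm{h_X}\norm{h_Y}$, so decomposing a region's worth of terms introduces an extra cardinality factor. A secondary subtlety is making the informal "support of $A(s)$" argument rigorous: the standard fix is to never literally talk about the support but instead to bound $\norm{[h_Z, A(s)]}$ by $\min(2\norm{h_Z}\norm{A}, \text{terms involving } C_{A(s)})$ and set up the inequality for the sup-over-$A$ quantity $C_B$ directly, so that the recursion closes; I would do the same here, with the extra $\eta$ entering whenever the "reflection" happens at the $B$ end. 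Once the series is correctly assembled, the resummation and the passage from the strictly-local to the exponentially-decaying case (step handled by \eqref{locassum}) should be routine, and the strictly local specialization in subsection \ref{sl} follows by taking $h_X = 0$ for $\diam(X) > 1$ and optimizing $\mu$.
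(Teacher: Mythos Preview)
Your proposal has the right scaffolding (iterated integral inequality, path sum, resummation to the exponential), but the mechanism by which you extract the factor of $\eta$ is wrong, and this is the heart of the argument. You propose to apply the commutator hypothesis \eqref{commcond} to $\norm{[h_Z,B]}$, writing $\norm{[h_Z,B]} \le \eta\norm{h_Z}\norm{B}$. But \eqref{commcond} is stated only for pairs of \emph{Hamiltonian terms} $h_X,h_Y$; the observable $B$ is arbitrary and there is no reason it should nearly commute with any $h_Z$. So your ``reflection off $B$ costs $\eta$'' picture does not hold, and the Jacobi-identity split $[[h_Z,A(t)],B] = [[h_Z,B],A(t)] + [h_Z,[A(t),B]]$ does not by itself produce any $\eta$.

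The paper's actual device is different: introduce, alongside $C_B(X,t)$, the auxiliary quantity $D_B(X,t) = \norm{[h_X(t),B]}$ and derive a \emph{coupled} pair of inequalities. The $C_B$-inequality (your step (1)) feeds $C_B$ into a sum of $D_B$'s with no $\eta$; the $D_B$-inequality comes from differentiating $\norm{[h_X(t),B]}$, which produces $\norm{[[h_Z,h_X],B(-t)]}$, and \emph{here} \eqref{commcond} legitimately applies since both $h_Z$ and $h_X$ are Hamiltonian terms: $\norm{[h_Z,h_X]} \le \eta\norm{h_Z}\norm{h_X}$, and the resulting operator is supported on $Z\cup X$, so $\norm{[[h_Z,h_X],B(-t)]} \le \eta\norm{h_Z}\norm{h_X}\,C_B(Z\cup X,t)$. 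Substituting one recursion into the other gives a two-step recursion for $C_B$ in which one factor of $\eta$ appears per \emph{pair} of interaction steps, hence $\eta^{\lfloor k/2\rfloor}$ at order $k$. This is also why the path constraint in the iterated sum reads $(Z_{j-2}\cup Z_{j-1})\sim Z_j$ for odd $j$ (the union coming from the support of $[h_{Z_{j-1}},h_{Z_{j-2}}]$), and it is precisely this branching that consumes the extra factor of $|X|$ in assumption \eqref{locassum}: one needs $|Q\cup R| \le |Q|\cdot|R|$ when summing $\sum_{Q,R: P\sim Q\sim R}|Q\cup R|\norm{h_Q}\norm{h_R}e^{-\mu\,\dist(Q\cup R,S)}$. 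Your guess about where $|X|^2$ goes was in the right neighborhood but not tied to the correct step.
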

Let $v_{LR}$ be chosen greater than
$\zeta \sqrt{8\eta} / \mu$.
Then, for large $t$, at fixed $\dist(X,Y)/t \leq v_{LR}$, 
for bounded $|X|$, the above bound tends to zero,
giving a Lieb-Robinson velocity proportional to $\sqrt{\eta}$.

A Lieb-Robinson velocity proportional to $\sqrt{\eta}$ might initially be surprising: one might hope to have a bound proportional to $\eta$.  However, one can see that this is the best possible under these assumptions.  Consider any local Hamiltonian $H=\sum_X h^0_X$ (without a commutator condition) with $\Vert h^0_X\Vert$ of order unity and Lieb-Robinson velocity $v^0_{LR}$ also of order unity.  Now, consider a new Hamiltonian $H=\sum_X h_X$ with $h_X=1+\sqrt{\eta} h^0_X$; here $1$ simply denotes the identity operator.  Then, the commutator of any two terms is proportional to $\eta$, while the norm of the terms is still of order $1$ and the Lieb-Robinson velocity is proportional to $\sqrt{\eta}$.  If the reader does not like adding the identity to define $h_X$ as a way of ensuring $\norm{h_X} \sim 1$, one could instead add some other exactly commuting terms of norm $1$ which act on some additional degrees of freedom.

In the proof below, we use notations as if the Hamiltonian was time-independent.
However, \Cref{clrlemma} is valid (replacing the exponential with a time-ordered exponential) if the Hamiltonian is a piecewise continuous function of time,
provided that assumptions (\ref{commcond},\ref{locassum}) hold for all time.

\subsection{Bound on Commutator Assuming \cref{commcond}}
\label{bc}

We wish to bound
\begin{align}
\left\Vert \left[\exp(\i H t) A_X \exp(-\i H t),B_Y	\right] \right\Vert.
\end{align}
where $A_X$ is supported on $X$ and $B_Y$ is supported on $Y$.
In what follows, we will drop the subscripts $X,Y$ from $A_X$ and $B_Y$ to avoid overly cluttering the notation.

Define
\begin{align}
C_B(X,t) &= \sup_{A \in {\cal A}_X : \norm{A} \le 1} \Vert [A(t),B] \Vert,\\
D_B(X,t) &= \Vert [h_X(t),B] \Vert
\end{align}
where ${\cal A}_X$ is the algebra of operators supported on $X$.
For any two sets $X$ and $Y$ of sites we will write $X \sim Y$ in place of $X \cap Y \neq \emptyset$.
Also, the notation $Z_1 \sim Z_2 \sim \cdots \sim Z_m$ will mean that 
$Z_j \sim Z_{j+1}$ for every $j = 1,\ldots, m-1$.
This does \emph{not} necessarily mean that $Z_j \sim Z_k$ for $j + 1 < k$.
For any set $X$, define $I_X$ as
\begin{align}
I_X=\sum_{Z: Z \sim X} h_Z .
\end{align}

Let $\dt$ be a small positive real number.  We consider a finite system so that $H$ has finite operator norm.
The quantities $\odts$ and $\calO(1/n)$ in \cref{lreq,firsteq,o1n}
will have a hidden dependence on system size, but after taking a limit ${\rm d}t \rightarrow 0$ (equivalent to $n\rightarrow \infty$) the bounds in the subsequent equations will not depend on system size and as a result the Lieb-Robinson velocity will not depend on system size.

We have
\begin{align}\label{lreq}
&
\Vert [A(t+\dt),B] \Vert
\\ \nonumber 
&= 
\Vert [A(\dt),B(-t)] \Vert \\ \nonumber
&= 
\Vert[\exp(\i I_X \dt) A \exp(-\i I_X \dt),B(-t)] \Vert + \odts \\ \nonumber
&=
\Vert[A,\exp(-\i I_X \dt) B(-t) \exp(\i I_X\dt)]\Vert+\odts\\ \nonumber
&=
\Vert[A,B(-t)-\i\dt [I_X,B(-t)]\Vert+\odts \\ \nonumber
&\leq \norm{[A,B(-t)]} + 2 \dt \norm{A} \cdot \norm{[I_X,B(-t)]} +\odts\\ \nonumber
&\leq \norm{[A,B(-t)]} + 2 \dt \norm{A} \sum_{Z: Z \sim X} \norm{[h_Z,B(-t)]} + \odts
\end{align}
By definitions of $C_B(X,t)$ and $D_B(X,t)$, it follows that
\begin{align}
\label{firsteq}
C_B(X,t+\dt)\leq C_B(X,t)+ 2 \dt \sum_{Z: Z \sim X} D_B(Z,t)+\odts.
\end{align}
Hence,
for any positive integer $n$,
\begin{align}
\label{o1n}
C_B(X,t)\leq C_B(X,0) + \frac{1}{n} \sum_{j=0}^n \sum_{Z: Z \sim X} 2D_B(Z,tj/n) + \calO(1/n).
\end{align}
For finite operator norm of $H$, the above expression converges to an integral as $n\rightarrow \infty$, so
\begin{align}
\label{int1}
C_B(X,t) \leq C_B(X,0)+\sum_{Z: Z \sim X} 2 \int_0^{|t|} D_B(Z,s) \rd s.
\end{align}
Also, we have
\begin{align}
&
\Vert [h_X(t+\dt),B] \Vert
\\ \nonumber 
&= 
\Vert [h_X(\dt),B(-t)] \Vert \\ \nonumber
&=
\Vert [h_X,B(-t)] + \i \dt [[I_X,h_X],B(-t)] \Vert + \odts \\ \nonumber
&\leq 
\Vert [h_X,B(-t)] \Vert + \dt \sum_{Z: Z\sim X} \Vert [[h_Z,h_X],B(-t)] \Vert.
\end{align}
By definitions of $C_B(X,t)$ and $D_B(X,t)$ and assumption (\ref{commcond}), it follows that
\begin{align}
D_B(X,t+\dt) \leq D_B(X,t)+ 2 \dt \sum_{Z: Z\sim X} \eta  \norm{h_Z} \cdot \norm{h_X} C_B(Z \cup X,t).
\end{align}
Hence,
\begin{align}
\label{int2}
D_B(X,t) 
\leq 
D_B(X,0) +  
\sum_{Z: Z\sim X} 2\eta \int_0^{|t|} \norm{h_Z} \cdot \norm{h_X} C_B(Z \cup X,s) {\rm d}s.
\end{align}
Note that for an arbitrary set $Z$ of sites and any $t$
\begin{align}
	C_B(Z,0) &\le 2 \norm{B} \delta_{Z \sim Y},           & C_B(Z,t) &\le 2 \norm{B}, \label{eq:CBZ0} \\
	D_B(Z,0) &\le 2 \norm{B} \norm{h_Z} \delta_{Z \sim Y}, & D_B(Z,t) &\le 2 \norm{B} \norm{h_Z} \nonumber
\end{align}
where $\delta_{\mathbb P}=1$ if the predicate $\mathbb P$ is true and $\delta_{\mathbb P} = 0$ otherwise.
Thus we may rewrite \cref{int1,int2} as
\begin{align}
C_B(X,t) &\leq 
2\norm{B} \delta_{X \sim Y} + \sum_{Z: X \sim Z} 2 \int_0^{|t|}\rd s\, D_B(Z,s), \label{recurCB}\\
D_B(X,t) &\leq
2 \norm{B} \norm{h_X} \delta_{X \sim Y}
+
\sum_{Z: X \sim Z} 2\eta \norm{h_X} \cdot \norm{h_Z} \int_0^{|t|} \rd s\, C_B(X \cup Z,s) \delta_{X \not\sim Y} . \label{recurDB}
\end{align}
Since $X$ and $Y$ are arbitrary, we may use \cref{recurDB} in \cref{recurCB}; 
for any sets $Z_{j-2},Z_{j-1}$ we have:
\begin{align}
&\frac{C_B(Z_{j-2} \cup Z_{j-1}, s_{j-1})}{2\norm{B}} \delta_{Z_{j-2} \not\sim Y} \nonumber \\
&\le \delta_{Z_{j-1} \sim Y} \label{eq:recursion} \\
&\quad +
(2 s_{j-1}) \sum_{\begin{subarray}{c} Z_j :\\ (Z_{j-2} \cup Z_{j-1}) \sim Z_j \sim Y \end{subarray}} \norm{h_{Z_j}} \nonumber \\
&\quad +
\eta \cdot 2^2 \int_0^{s_{j-1}} \rd s_j \int_0^{s_j} \rd s_{j+1} 
\sum_{\begin{subarray}{c}Z_j,Z_{j+1}:\\ (Z_{j-2}\cup Z_{j-1}) \sim Z_j \sim Z_{j+1}\end{subarray}}
 \norm{h_{Z_j}}\norm{h_{Z_{j+1}}}
\frac{C_B( Z_j \cup Z_{j+1}, s_{j+1} )}{2\norm{B}}\delta_{Z_j \not\sim Y}  \nonumber
\end{align}
where $s_{j-1} \ge 0$.
This is our fundamental recursive inequality.
Note that we have only the constraint that $Z_{j-1} \sim Y$ in the second line
rather than the slightly looser constraint that $(Z_{j-2} \cup Z_{j-1}) \sim Y$.
This is due to the constraint $\delta_{Z_{j-2} \not\sim Y}$ in the first line.
Assuming $X \cap Y=\emptyset$ and  setting $Z_{-1} = \emptyset, Z_0 = X$, we have
\begin{align}
\label{sumbound}
\frac{C_B(X,t)}{2\norm{B}} 
&\leq
 (2|t|)  \sum_{Z_1: X \sim Z_1 \sim Y} \norm{h_{Z_1}}
\nonumber \\ 
&\quad + \eta \frac{(2|t|)^2}{2!} \sum_{Z_1, Z_2: X \sim Z_1 \sim Z_2 \sim Y} \norm{h_{Z_1}} \norm{h_{Z_2}} \\ \nonumber
&\quad + \eta \frac{(2|t|)^3}{3!} \sum_{Z_1, Z_2:X \sim Z_1 \sim Z_2} \norm{h_{Z_1}} \norm{h_{Z_2}} \sum_{Z_3: (Z_1 \cup Z_2) \sim Z_3 \sim Y} \norm{h_{Z_3}}
\\ \nonumber
&\quad + \eta^2 \frac{(2|t|)^4}{4!} \sum_{Z_1,Z_2:X \sim Z_1 \sim Z_2} \norm{h_{Z_1}}\norm{h_{Z_2}} \sum_{Z_3, Z_4: (Z_1 \cup Z_2)\sim Z_3 \sim Z_4 \sim Y} \norm{h_{Z_3}} \norm{h_{Z_4}}
\\ \nonumber
&\quad +\cdots 
\end{align}
That is,
\begin{align*}
\frac{C_B(X,t)}{2\norm{B}}  \leq \sum_{k\geq 1}\frac{(2|t|)^k}{k!} \eta^{\lfloor k/2 \rfloor} \sum_{Z_1,Z_2,\ldots,Z_k}\,
\Bigl(\prod_{j=1}^k \norm{h_{Z_j}} \Bigr) \Bigl(\prod_{j \, {\rm odd}} \delta_{(Z_{j-2} \cup Z_{j-1}) \sim Z_j}\Bigr)
\Bigl(\prod_{j \, {\rm even}} \delta_{Z_{j-1} \sim Z_j}\Bigr)
\delta_{Z_k \sim Y},
\end{align*}
where the notation $\delta_{S\sim T}$ for 
two sets $S,T$ is an indicator function that is $1$ if $S \cap T \neq \emptyset$ and $0$ otherwise.

\subsection{Lieb-Robinson Velocity}
\label{lrv}

We now use assumption \eqref{locassum}, especially the following consequences.
\begin{proposition}\label{prop:setsums}
For arbitrary sets $P,Q,R,S$ of sites
\begin{align}	
\sum_{Q: P \sim Q \sim S} \norm{h_Q} &\le \zeta \sum_{p \in P} e^{ - \mu \, \dist(p,S) }, \label{eq:middleSum}\\
\sum_{Q: P \sim Q } \norm{h_Q} |Q|^2 e^{-\mu\,\dist(Q,S)} & \le \zeta \sum_{p \in P} e^{- \mu \, \dist(p,S) }, \label{eq:rightSum}\\
\sum_{Q,R: P \sim Q \sim R \sim S} \norm{h_Q} \norm{h_R} &\le \zeta^2 \sum_{p \in P} e^{-\mu\,\dist(p,S)}, \label{eq:middle2Sum}\\
\sum_{Q,R: P \sim Q \sim R} |Q \cup R| \norm{h_Q} \norm{h_R} e^{-\mu\,\dist(Q\cup R, S)} &\le 2\zeta^2 \sum_{p\in P} e^{-\mu\,\dist(p,S)}. \label{eq:rightBranchSum}
\end{align}
Here, $\sum_{p \in P} e^{-\mu\,\dist(p,S)} \le |P| e^{-\mu\,\dist(P,S)}$. 
\end{proposition}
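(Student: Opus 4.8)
The plan is to deduce all four bounds from a single geometric ``transfer'' estimate combined with the decay hypothesis \eqref{locassum}. The transfer estimate is just the triangle inequality: if $P\sim Q$ and $p\in P\cap Q$, then $\dist(q,S)\ge\dist(p,S)-\diam(Q)$ for every $q\in Q$, hence $e^{-\mu\dist(Q,S)}\le e^{\mu\diam(Q)}e^{-\mu\dist(p,S)}$; and if $P\sim Q\sim R$ with $p\in P\cap Q$ and $c\in Q\cap R$, then $\dist(u,S)\ge\dist(p,S)-\diam(Q)-\diam(R)$ for every $u\in Q\cup R$, hence $e^{-\mu\dist(Q\cup R,S)}\le e^{\mu\diam(Q)}e^{\mu\diam(R)}e^{-\mu\dist(p,S)}$. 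In words: replacing the distance from an intersecting cluster to $S$ by the distance from a single site of $P$ to $S$ costs only factors $e^{\mu\diam}$, and those are exactly the factors \eqref{locassum} is built to absorb. For every cluster $Q$ meeting $P$ I would fix once and for all a representative site $p(Q)\in P\cap Q$.

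For \eqref{eq:middleSum}, a cluster $Q$ with $P\sim Q\sim S$ satisfies $\diam(Q)\ge\dist(p(Q),S)$, so $\norm{h_Q}\le\norm{h_Q}\,|Q|^2 e^{\mu\diam(Q)}e^{-\mu\dist(p(Q),S)}$; summing over $Q$ and regrouping the $Q$-sum as $\sum_{p\in P}\sum_{Q\ni p}$ (which can only over-count, since every summand is nonnegative), the inner sum is $\le\zeta$ by \eqref{locassum}, giving the claim. The bound \eqref{eq:rightSum} follows the same way, except that $\diam(Q)\ge\dist(p(Q),S)$ is replaced by the transfer estimate $e^{-\mu\dist(Q,S)}\le e^{\mu\diam(Q)}e^{-\mu\dist(p(Q),S)}$.

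For the two-level sums I would peel off the inner cluster first. In \eqref{eq:middle2Sum}, fix $Q$ and apply \eqref{eq:middleSum} with $P\mapsto Q$ to the sum over $R$, obtaining $\zeta\sum_{q\in Q}e^{-\mu\dist(q,S)}$; then bound $\sum_{q\in Q}e^{-\mu\dist(q,S)}\le|Q|\,e^{\mu\diam(Q)}e^{-\mu\dist(p(Q),S)}$ by the transfer estimate, absorb the extra $|Q|$ into $|Q|^2$, and close the $Q$-sum exactly as above; the two $\zeta$'s combine to $\zeta^2$. In \eqref{eq:rightBranchSum}, use $|Q\cup R|\le|Q|+|R|$, replace $e^{-\mu\dist(Q\cup R,S)}$ by $e^{\mu\diam(Q)}e^{\mu\diam(R)}e^{-\mu\dist(p(Q),S)}$, and split into a $|Q|$-part and an $|R|$-part. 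In each part, for fixed $Q$ the sum over $R\sim Q$ is bounded via $\sum_{R:R\sim Q}|R|^a\norm{h_R}e^{\mu\diam(R)}\le\sum_{c\in Q}\sum_{R\ni c}|R|^2\norm{h_R}e^{\mu\diam(R)}\le|Q|\zeta$ for $a\in\{1,2\}$; this contributes one more power of $|Q|$, which with the $|Q|$ (resp. $|R|$) already present makes $|Q|^2$, and regrouping by $p\in P$ finishes via \eqref{locassum}, yielding $2\zeta^2\sum_{p\in P}e^{-\mu\dist(p,S)}$. The closing remark $\sum_{p\in P}e^{-\mu\dist(p,S)}\le|P|\,e^{-\mu\dist(P,S)}$ is immediate from $\dist(p,S)\ge\dist(P,S)$.

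The only point requiring care is the accounting of cardinality factors: every time a sum over a cluster is collapsed onto one of its sites one loses a factor equal to the cluster's size, and in the branch sum \eqref{eq:rightBranchSum} this collapse occurs in the presence of the $|Q\cup R|$ already there, so two powers of $|Q|$ (respectively $|R|$) accumulate. This is precisely why \eqref{locassum} carries $|X|^2$ rather than $|X|$, and verifying that a single $|X|^2$ always suffices — never three or more powers — is the main thing to check; everything else is the triangle inequality and nonnegativity of the summands.
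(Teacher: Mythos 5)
Your proof is correct and follows essentially the same route as the paper: fix a representative $p\in P\cap Q$ for each cluster, factor out $e^{-\mu\,\dist(p,S)}$ at the cost of $e^{\mu\,\diam}$ factors via the triangle inequality, over-count by regrouping as $\sum_{p\in P}\sum_{Q\ni p}$, and close via \eqref{locassum}, which is built exactly to absorb the $e^{\mu\,\diam}$ and $|X|^2$ factors. The only genuine (and harmless) deviation is in \eqref{eq:rightBranchSum}: the paper bounds $|Q\cup R|\le|Q|\cdot|R|$ and then splits into the two cases $\dist(Q\cup R,S)=\dist(R,S)$ and $\dist(Q\cup R,S)=\dist(Q,S)$, bounding each by $\zeta^2\sum_p e^{-\mu\dist(p,S)}$, whereas you bound $|Q\cup R|\le|Q|+|R|$ and use a single unconditional transfer estimate $e^{-\mu\dist(Q\cup R,S)}\le e^{\mu\diam(Q)}e^{\mu\diam(R)}e^{-\mu\dist(p,S)}$, which avoids the case split and puts one power of $e^{\mu\diam(R)}$ into the $R$-sum where it is harmless; both routes give the factor $2\zeta^2$.

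One small bookkeeping slip: in your bound for the inner $R$-sum in \eqref{eq:rightBranchSum} you write the exponent range as $a\in\{1,2\}$ and say the left-over $|R|$ ``makes $|Q|^2$''; in fact the $|Q|$-part has $a=0$ and the $|R|$-part has $a=1$, and for the $|R|$-part the $|R|$ is absorbed into the $R$-sum so the $Q$-sum carries only $|Q|^1$, which you then pad to $|Q|^2$ before invoking \eqref{locassum}. None of this affects validity since $|R|^a\le|R|^2$ and $|Q|\le|Q|^2$, but the precise accounting is $a\in\{0,1\}$ and the ``already present'' factor that doubles up is always $|Q|$, never $|R|$.
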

\begin{proof}
	\eqref{eq:middleSum}: 
	Observe $\sum_{Q:P \sim Q \sim S} \le \sum_{p \in P} \sum_{Q:p \in Q \sim S}$
	whenever the summand is nonnegative.
	Using assumption \eqref{locassum} and $\diam(Q) \ge \dist(p,S)$ when $p \in Q \sim S$,
	we have an upper bound as
	$\sum_{p} \sum_{Q: p \in Q \sim S} \norm{h_Q} e^{\mu\,\diam(Q) - \mu\,\dist(p,S)} \le \zeta \sum_p e^{-\mu\,\dist(p,S)}$.
	
	\eqref{eq:rightSum}: 
	Similarly, we use that 
	$\diam(Q) + \dist(Q,S) \ge \dist(p,S)$ when $p \in Q$
	by triangle inequality of the metric.

	\eqref{eq:middle2Sum}:
	Use \eqref{eq:middleSum} for the sum over $R$ and then use \eqref{eq:rightSum} for the sum over $Q$.
	
	\eqref{eq:rightBranchSum}:
	Use $|Q \cup R| \le |Q| \cdot |R|$ since  $Q\sim R$.
	We then separately bound two cases where (i) $\dist(Q \cup R, S) = \dist(R,S)$ and (ii) $\dist(Q \cup R, S) = \dist(Q,S)$.
	The sum of $(i)$ and $(ii)$ is an upper bound to the original sum.
	For case (i), we use \eqref{eq:rightSum} for the sum over $R$ to have an upper bound 
	$\zeta \sum_{Q: P \sim Q} |Q|^2 \norm{h_Q} e^{-\mu\,\dist(Q,S)}$, 
	and then use \eqref{eq:rightSum} again for the sum over $Q$ to have an upper bound
	$\zeta^2 |P| e^{-\mu\,\dist(P,S)}$.
	For case (ii), we use either \eqref{eq:middleSum} or \eqref{eq:rightSum} to sum over $R$,
	which gives $\zeta\sum_{Q:P \sim Q} |Q|^2 \norm{h_Q} e^{-\mu\,\dist(Q,S)}$, and then use \eqref{eq:rightSum}.		
\end{proof}

Now, we can use \cref{eq:middleSum} for the innermost sum in any odd-$k$-th line of \cref{sumbound},
and \cref{eq:middle2Sum} for that in any even-$k$-th line.
Once the innermost sum is bounded, we use \cref{eq:rightBranchSum} for $\lfloor (k-1)/2 \rfloor$ times.
This is the point at which the dependence on $|X|^2$ in assumption \eqref{locassum} is necessary.
Therefore, the $k$-th line is bounded by
\begin{align}
\frac{(2|t|)^k}{k!} \eta^{\lfloor k/2\rfloor} 2^{\lfloor (k-1)/2\rfloor} \zeta^k \sum_{x\in X} \exp(-\mu \,\dist(x,Y)).
\end{align}
Summing over $k$ we find that
\begin{align}
\label{finalbound}
C_B(X,t) \leq \frac{2}{\sqrt{\eta}}\norm{B} 
\left( \exp\left( \zeta |t| \sqrt{8 \eta} \right) - 1 \right)
\sum_{x\in X} \exp(-\mu\,\dist(x,Y)),
\end{align}
so that \cref{clrlemma} follows.

\subsection{Proof of \Cref{lem:LRB}} \label{app:pfLRB}

A slight modification of the previous section proves \cref{lem:LRB}.
Recall that $H_\Omega = \sum_{Z : Z\subseteq \Omega} h_Z$,
and it suffices to assume $X \subset \Omega$.
Then,
\begin{align}
\norm{
A_X(t;H_\Omega) - A_X(t;H)
}
&=
\norm{
	\int_0^t \rd s \partial_s (U^{H_\Omega}_{s} U^{H}_{t-s})^\dagger A_X U^{H_\Omega}_{s} U^{H}_{t-s}
}\nonumber\\
&\le
\int_0^{|t|} \rd s
\norm{
(U^{H}_{t-s})^\dagger[H - H_\Omega, (U^{H_\Omega}_s)^\dagger A_X U^{H_\Omega}_s] U^{H}_{t-s}
}\label{eq:AtOmegaAt}\\
&\le
\int_0^{|t|}\rd s
\sum_{Y: Y \sim \Omega^c}
\norm{[A_X(s; H_\Omega),h_Y]} \nonumber
\end{align}
(In the last inequality the sum over $Y$ could be further restricted to those with $Y \sim \Omega$,
but the present bound will be enough.)
The last line can be bounded by multiplying \cref{sumbound}
by $2\norm{B} = 2 \norm{h_Y}$ and summing over $Y$ such that $Y\sim \Omega^c$.
The innermost sum of \cref{sumbound} is modified to
\begin{align}
&\sum_{Z_k: (Z_{k-2} \cup Z_{k-1}) \sim Z_k} \norm{h_{Z_k}} 
\sum_{Y: Z_k \sim Y \sim \Omega^c } \norm{h_Y} & \text{if $k$ is odd,}\\
&\sum_{Z_{k-1}, Z_k: (Z_{k-3} \cup Z_{k-2}) \sim Z_{k-1} \sim Z_k} \norm{h_{Z_{k-1}}}\norm{h_{Z_k}}
\sum_{Y: Z_k \sim Y \sim \Omega^c } \norm{h_Y} & \text{if $k$ is even.}\nonumber
\end{align}
The sum over $Y$ here is bounded by \cref{eq:middleSum},
and the remaining sum is bounded by applying \cref{eq:rightSum} once or twice.
The net effect of the modification is that there is an extra factor of $\zeta$,
and that the distance is now measured to $\Omega^c$ instead of $Y$ before.
We conclude that
\begin{align}
\norm{
	A_X(t;H) - A_X(t;H_\Omega)
}
&\le
\frac{2\zeta|t|}{\sqrt{\eta}} \norm{A}
\left( \exp\left( \zeta |t| \sqrt{8 \eta} \right) - 1 \right)
\sum_{x\in X} \exp(-\mu\,\dist(x,\Omega^c)).
\end{align}
Since $\eta \le 1$, this proves a variant of \cref{lem:LRB} 
where the locality assumption is given by \cref{locassum}.

In \cref{lem:LRB} we assumed strictly local interactions such that $h_X = 0$ if $\diam(X)> 1$
in a $D$-dimensional lattice.
This strict locality condition implies \cref{locassum} with arbitrary $\mu > 0$
($\zeta$ can be estimated as a function of $\mu$),
and hence the bound is sufficient for \cref{thm:main}.
However, one can prove a stronger bound for the strictly local interactions.
Since $D_B(X,t) \le \norm{h_X} C_B(X,t)$, assuming $X \not \sim Y$, we have
\begin{align}
C_B(X,t) 
&\le
C_B(X,0) + 2 \sum_{Z:X \sim Z} \norm{h_Z} \int_0^{|t|} \rd s C_B(Z,s)\\
&\le
\sum_{k=1}^{\ell-1} 
\frac{(2|t|)^k}{k!}  
\left(\sum_{Z_1,\ldots,Z_k: \text{linked}} \prod_{j=1}^k \norm{h_{Z_j}} \right) 
C_B (Z_k,0) 
\nonumber \\
&\qquad
+
2^\ell \int_{\Delta^\ell} \rd^\ell \mathbf t 
\left(\sum_{Z_1,\ldots,Z_\ell: \text{linked}} \prod_{j=1}^\ell \norm{h_{Z_j}} \right) 
C_B (Z_\ell,t_\ell)
\label{eq:brr}\\
&\le
2 \norm{B} \frac{(2|t|)^\ell}{\ell !}
\sum_{Z_1,\ldots,Z_\ell: \text{linked}} \prod_{j=1}^\ell \norm{h_{Z_j}} 
\quad (\ell = \lfloor \dist(X,Y) \rfloor) \label{eq:brrl}
\end{align}
where the factor $|t|^k/k!$ 
is $\int_{\Delta^k} \rd \mathbf t = \int_0^{|t|} \rd t_1 \int_0^{t_1} \rd t_2 \cdots \int_0^{t_{k-1}} \rd t_k$ 
(the volume of a $k$-dimensional simplex $\Delta^k$),
``linked'' means that $X \sim Z_1 \sim Z_2 \sim \cdots \sim Z_k$,
and $t_\ell$ is the last component of $\mathbf t$.
Here, \cref{eq:brr} is valid for any integer $\ell \ge 1$,
but in \cref{eq:brrl} we set $\ell$ to the specific value.
\cref{eq:brrl} follows by \cref{eq:CBZ0} 
and a trivial bound $C_B(Z_\ell,t_\ell) \le 2\norm{B}$;
\cref{eq:CBZ0} forces $Z_k \sim Y$, but due to the locality
there is no nonzero ``link'' from $X$ to $Y$ 
using $\ell -1$ segments or less,
so the first $\ell -1 $ terms of \cref{eq:brr} are zero.

Similarly to \Cref{prop:setsums},
we see for arbitrary sets $P,Q,S$ of sites
\begin{align}
\sum_{Q:P\sim Q \sim S } \norm{h_Q} &\le \zeta_0 \sum_{p \in P} \delta_{\dist(p,S) \le 1},\\
\sum_{Q:P\sim Q } |Q| \norm{h_Q} \delta_{\dist(Q,S) \le d} &\le \zeta_0 \sum_{p \in P} \delta_{\dist(p,S) \le d+1}
\end{align}
where $\zeta_0 = \max_{x \in \Lambda} \sum_{Q \ni x} |Q| \norm{h_Q}$.
Note that $\zeta_0$ is bounded by the number of nonzero Hamiltonian terms that may act on a site $p$
and the number of sites in a ball of diameter 1.
We conclude that
\begin{align}
C_B(X,t) 
&\le 
2 \norm{B} \sum_{x \in X} \frac{(2 \zeta_0 |t|)^\ell }{ \ell !}, 
\\
\norm{[A_X(t),B_Y]}
&\le
2 \norm{A} \norm{B} |X| \frac{(2\zeta_0|t|)^\ell}{\ell!} \text{ where } \ell = \lfloor \dist(X,Y) \rfloor .
\end{align}
By manipulation analogous to \cref{eq:AtOmegaAt}, we also have
\begin{align}
\norm{	A_X(t;H) - A_X(t;H_\Omega)	} \le |X| \norm{A_X} \frac{(2 \zeta_0 |t|)^{\ell}}{\ell!}
\text{ where } \ell = \lfloor \dist(X,\Lambda \setminus \Omega)\rfloor .
\end{align}
This completes the proof of \Cref{lem:LRB}.

\subsection{Higher Order Commutators}

Finally, let us remark that even better bounds can be proven 
if one assumes a bound on higher-order commutators.  
For example, if we assume that
\begin{align}
\Vert [[h_X,h_Y],h_Z] \Vert \leq \eta' \norm{h_X} \cdot \norm{h_Y} \cdot \norm{h_Z},
\label{highercomm}
\end{align}
we can prove a better bound for sufficiently small $\eta'$.
This is done by an straightforward generalization of the above results:
In addition to quantities $C_B(X,t)$ and $D_B(X,t)$, 
define also a quantity $E_B(X,Y,t)$ as
\begin{align}
E_B(X,Y,t)=\Vert [[h_X(t),h_Y(t)],B] \Vert.
\end{align}
Then, just as we bound $C_B(X,t+\dt)-C_B(X,t)$ in terms of $D_B(X,t)$ above, we also bound 
$D_B(X,t+\dt)-D_B(X,t)$ in terms of $E_B(X,Y,t)$ summed over sets $Y$ that intersect $X$.
Then, we bound $E_B(X,Y,t+\dt)-E_B(X,Y,t)$ using \cref{highercomm}.
Extensions to even higher commutators follow similarly.
For such higher order commutators, 
a natural assumption to replace \cref{locassum} is that
\begin{align}
\sum_{X\ni x} \norm{h_X} |X|^\beta \exp(\mu \,\diam(X)) \leq \zeta < \infty
\end{align}
where $\beta$ is the order of the commutator that we bound 
(i.e, $\beta=2$ for \cref{commcond}, $\beta = 3$ for \cref{highercomm}, and so on).

\subsection{Strictly Local Hamiltonians}
\label{sl}
In this subsection, we consider strictly local interactions.
We assume that
\begin{align}
H=\sum_X h_X,
\end{align}
where $h_X$ is supported on set $X$ which we assume obeys
\begin{align}
\label{metricbound}
\diam(X)\leq 1.
\end{align}
Note that any bounded range interaction (for example, with terms supported on sets of diameter at most $R$ for some given $R$) 
can be written in this form by re-scaling the metric by a constant. 
Further, we assume a bound on the commutator
\begin{align}
\label{cbsl}
\Vert [h_X,h_Y] \Vert \leq 2K
\end{align}
for some constant $K$.

We make no assumption on the bound on the norm of $h_X$.  
However, we emphasize that this does not mean that we are considering unbounded operators.  
Rather, we assume that all $h_X$ have bounded norm 
(which, in conjunction with an assumption of finite system size, 
means that $H$ has bounded norm allowing us to use various analytic estimates similar to those above),
but all bounds will be uniform in the bound on the norm of $h_X$ as well as the system size.

Define $C_B(X,t)$ and $D_B(X,t)$ as above.
Let $\IS$ denote the collection of sets $Z$ such that $\diam(Z) \leq 1$.
We use \cref{int1} from before:
\begin{align}
\label{int1p}
C_B(X,t) \leq C_B(X,0)+\sum_{Z \in \IS: Z \sim X} 2 \int_0^{|t|} D_B(Z,s) \rd s,
\end{align}
where we  explicitly write the requirement that $Z\in \IS$
as well as a version of \cref{int2} that follows from \cref{cbsl}
\begin{align}
\label{int2p}
D_B(X,t) 
\leq 
D_B(X,0) +  
\sum_{Z \in \IS: Z \sim X} 2K\int_0^{|t|} C_B(Z \cup X,s) {\rm d}s.
\end{align}

We write $X \sim\sim Z$ if there exists $Y\in \IS$ such that $X\sim Y \sim Z$;
this is equivalent to requiring that $\dist(X,Z)\leq 1$.
We write $!X\sim\sim Z$ if it is not the case that $X \sim \sim Z$.
Thus,
\begin{align}
\label{recur2}
\dist(X,Y) > 1  \; \Longrightarrow \;
C_B(X,t)& \leq 4K \sum_{Z_1,Z_2 \in \IS: X \sim Z_1 \sim Z_2} \int_0^{|t|} C_B(Z_1 \cup Z_2,s) (|t|-s) \rd s \\ \nonumber
\dist(X,Y)\leq 1 \; \Longrightarrow \;
C_B(X,t)& \leq 2 \norm{B}.
\end{align}
Here the first case follows
by combining \cref{int1p,int2p} with the condition that $D_B(Z,0)=0$ for $Z\not\sim Y$.
The substitution of $D_B(Z,s)$ in \cref{int1p} by \cref{int2p} gives a double integral
but we can change the integration order to give $(|t|-s)$ factor.
The second case is trivially true for any choice of $X$.

Thus, iterating \cref{recur2} we find that for $\dist(X,Y)>1$ we have
\begin{align}
\nonumber &C_B(X,t)
\\ \nonumber
 &\leq  4K \sum_{Z_1,Z_2 \in \IS: X \sim Z_1 \sim Z_2} \int_0^{|t|} C_B(Z_1 \cup Z_2,s_1) (|t|-s_1) \rd s_1 \\ \nonumber
& \leq  8K\norm{B} \sum_{Z_1,Z_2 \in \IS: X \sim Z_1 \sim Z_2 \sim\sim Y} \int_0^{|t|} (|t|-s_1) \rd s_1
\\ \nonumber
&\quad +(4K)^2 \sum_{\begin{subarray}{l} Z_1,Z_2,Z_3,Z_4\in \IS :\\X\sim Z_1 \sim Z_2, (Z_1 \cup Z_2) \sim Z_3 \sim Z_4, \\ \\ !Z_2\sim\sim Y\end{subarray}}
\int_0^{|t|} (|t|-s_1) \int_{0}^{|s_1|} C_B(Z_3\cup Z_4,s_2) |s_1-s_2| \rd s_2 \rd s_1 \\ \nonumber
&\leq 
8K\norm{B}\sum_{Z_1,Z_2 \in \IS: X \sim Z_1 \sim Z_2 \sim\sim Y} \int_0^{|t|} (|t|-s_1) \rd s_1
\\ \nonumber
&\quad +2\cdot(4K)^2 \norm{B}\sum_{\begin{subarray}{l} Z_1,Z_2,Z_3,Z_4\in \IS :\\X\sim Z_1 \sim Z_2, (Z_1 \cup Z_2) \sim Z_3 \sim Z_4 \sim \sim Y, \\ \\ !Z_2\sim\sim Y\end{subarray}}
\int_0^{|t|}(|t|-s_1) \int_{0}^{|s_1|}  |s_1-s_2| \rd s_2 \rd s_1 \\ \nonumber
&\quad +(4K)^2 \sum_{\begin{subarray}{l} Z_1,Z_2,Z_3,Z_4,Z_5,Z_6\in \IS :\\X\sim Z_1 \sim Z_2, (Z_1 \cup Z_2) \sim Z_3 \sim Z_4, \\(Z_3 \cup Z_4) \sim Z_5\sim Z_6, \\ \\ !Z_2\sim\sim Y, !Z_4\sim\sim Y\end{subarray}}
\int_0^{|t|}(|t|-s_1) \int_{0}^{|s_1|}  |s_1-s_2| \int_0^{|s_2|} C_B(Z_5\cup Z_6,s_3)|s_3-s_2|  \rd s_3 \rd s_2 \rd s_1
 \\ \nonumber
&\le \cdots 
\end{align}
We continue recurring in this fashion, using \cref{recur2} to substitute for $C_B$.

Thus, we find that $C_B(X,t)$ is bounded by the sum, over $k>0$ and over sequences 
$X\sim Z_1 \sim Z_2, (Z_1 \cup Z_2) \sim Z_3 \sim Z_4, \cdots \sim Z_{2k} \sim \sim Y$ where $Z_j \in \IS$ such that 
for no $j<k$ do we have $Z_{2j} \sim \sim Y$, of $2\norm{B} (4K)^{k} |t|^{2k}/(2k)!$.
Since all terms in the sum are positive,
this is bounded by the sum over all sequences $X \sim Z_1  \sim \ldots \sim Z_{2k} \sim \sim Y$
of $2\norm{B}(4K)^{k} |t|^{2k}/(2k)!$,
i.e., we may remove the restriction $! Z_{2j} \sim\sim Y$.
Further, we may relax the restriction $(Z_{2j-1} \cup Z_{2j}) \sim Z_{2j+1}$ to $Z_{2j} \sim \sim Z_{2j+1}$.

For any graph with bounded degree $d$, 
this gives a Lieb-Robinson velocity $v_{LR}$ bounded by a constant times $d \sqrt{K}$.
Indeed, the sum is bounded by a sum over both even and odd length sequences
$X\sim Z_1 \sim Z_2 \sim \sim Z_3 \sim Z_4 \sim \sim Z_5 \ldots \sim Z_n \sim \sim Y$ with any $n \ge \dist(X,Y)$ of $2\norm{B} (2\sqrt{K})^n |t|^n/n!$.
This is the same sum as appears in the Lieb-Robinson bound for strictly decaying interactions,
whose strength is bounded in norm by $\sqrt{K}$.

Remark: As in the proof of the usual Lieb-Robinson bound, 
the convergence of the sum of the first $n$ terms of the sequence to $C_B(X,t)$ as $n\rightarrow \infty$ 
can be established by bounding the remainder term which is proportional 
to a sum of $C_B(Z_{2k-1}\cup Z_{2k},s_k)$, and using $C_B\leq 2 \norm{B}$.
In fact, 
since the sum over sequences $X \sim Z_1 \sim \cdots \sim Z_{2k} \sim \sim Y$ is empty if $3k+1 < \dist(X,Y)$,
the remainder for the sum of first $\ell=\lfloor \dist(X,Y) \rfloor$ terms is 
so small that we can conclude $v_{LR} = \calO(d\sqrt K)$ without dealing with an infinite series.
This was the proof method in \cref{app:pfLRB}.
One disadvantage of this simpler argument is that the resulting bound on the Lieb-Robinson velocity
is slightly worse than what is obtainable by the infinite series, 
particularly if the interaction graph is expanding.

\section{Analytic functions are efficiently computable}
\label{app:chebyshev}

Here we briefly review a polynomial approximation scheme~\cite{ATP}
to {\it analytic} functions
--- 
that have power series representations.%
\footnote{
An example of smooth but non-analytic function is $\exp(-z^{-2})$.
This function fails to be analytic at $z=0$.
It is infinitely differentiable at $z=0$,
with derivatives all zero, and hence the Taylor series is identically zero,
but the function is not identically zero around $z=0$.
If a real function is analytic at $x \in \RR$,
then its power series converges in an open neighborhood of $x$ in the complex plane
(analytic continuation).
}
For $\rho > 1$, define $E_\rho$ to be the {\it Bernstein ellipse},
the image of a circle $\{ z  \in \CC: |z| = \rho \}$
under the map $z \mapsto \frac12( z + z^{-1})$.
The Bernstein ellipse always encloses $[-1,1]$,
and collapses to $[-1,1] \subset \CC$ as $\rho \to 1$.
It is useful to introduce {\it Chebyshev polynomials} (of the first kind)
$T_m$ of degree $m \ge 0$, defined by the equation
\begin{align}
	T_m( \frac{z+z^{-1}}{2} ) = \frac{z^m + z^{-m}}{2}.
	\label{eq:chebyshev-def}
\end{align} 
Picking $z$ on the unit circle, we see that 
$T_m : [-1,1] \ni \cos \theta \mapsto \cos m\theta \in [-1,1]$.

\begin{lemma}\label{eq:chebyshev-approximation}
Let $f$ be a analytic function on the interior of $E_\rho$ for some $\rho>1$,
and assume $\sup_{z \in E_\rho} f(z) = M < \infty$.
Then, $f$ admits an approximate polynomial expansion in Chebyshev polynomials
such that
\begin{align}
\max_{x \in [-1,1]} 
\abs{f(x) - \sum_{j=0}^J a_j T_j(x) } \le \frac{2M}{\rho - 1} \rho^{-J} .
\end{align}
\end{lemma}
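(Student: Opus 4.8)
The plan is to pass from the real variable $x\in[-1,1]$ to a complex variable $z$ via the same substitution $x=\tfrac12(z+z^{-1})$ that defines the Bernstein ellipse and the Chebyshev polynomials, thereby turning the statement into a Cauchy estimate for Laurent coefficients. The map $z\mapsto\tfrac12(z+z^{-1})$ sends the circle $\{|z|=r\}$ onto $E_r$, sends the unit circle onto $[-1,1]$, and identifies $z$ with $1/z$. Hence $g(z):=f(\tfrac12(z+z^{-1}))$ is a well-defined analytic function on the open annulus $\{\,1/\rho<|z|<\rho\,\}$, and it satisfies the symmetry $g(z)=g(1/z)$.

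First I would expand $g$ in its Laurent series $g(z)=\sum_{m\in\ZZ}c_m z^m$, which converges uniformly on compact subsets of the annulus, and observe that $g(z)=g(1/z)$ forces $c_m=c_{-m}$ by uniqueness of Laurent coefficients. Restricting to the unit circle $z=e^{i\theta}$ and writing $x=\cos\theta$, the defining identity \eqref{eq:chebyshev-def} gives
\begin{align*}
f(x)=g(e^{i\theta})=c_0+\sum_{m\ge 1}c_m\bigl(z^m+z^{-m}\bigr)=c_0+\sum_{m\ge 1}2c_m\,T_m(x),
\end{align*}
so the Chebyshev coefficients of $f$ are $a_0=c_0$ and $a_j=2c_j$ for $j\ge 1$; in particular the degree-$J$ Chebyshev partial sum of $f$ is exactly the restriction to $z=e^{i\theta}$ of the truncated Laurent sum $\sum_{|m|\le J}c_m z^m$, which validates term-by-term comparison.

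Next I would bound the coefficients. For any $1<r<\rho$, the circle $\{|z|=r\}$ maps into the region bounded by $E_\rho$, on which $\abs{f}\le M$ by the maximum modulus principle together with the hypothesis $\sup_{E_\rho}\abs{f}=M$; hence $\abs{g(z)}\le M$ on $\{|z|=r\}$. The Cauchy formula $c_m=\frac{1}{2\pi i}\oint_{|z|=r}g(z)\,z^{-m-1}\,\rd z$ then yields $\abs{c_m}\le M r^{-m}$, and letting $r\uparrow\rho$ gives $\abs{c_m}\le M\rho^{-m}$ for all $m$; thus $\abs{a_j}\le 2M\rho^{-j}$ for $j\ge 1$. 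Finally, using $\abs{T_j(x)}\le 1$ on $[-1,1]$ and summing the geometric tail,
\begin{align*}
\max_{x\in[-1,1]}\abs{f(x)-\sum_{j=0}^J a_j T_j(x)}\le\sum_{j=J+1}^\infty\abs{a_j}\le 2M\sum_{j=J+1}^\infty\rho^{-j}=\frac{2M}{\rho-1}\,\rho^{-J},
\end{align*}
which is the claimed bound.

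The argument is essentially classical, so there is no deep obstacle; the only point requiring care is the step $\sup_{E_\rho}\abs f=M\Rightarrow\abs{g}\le M$ on $\{|z|=r\}$, i.e.\ invoking the maximum modulus principle on the closed region bounded by $E_\rho$. If one wishes to avoid assuming $f$ extends continuously to $E_\rho$, one simply runs the entire argument with $\rho$ replaced by an arbitrary $\rho'\in(1,\rho)$ (on whose interior $f$ is analytic and where $\abs f$ is bounded on the compact set $E_{\rho'}$) and then takes $\rho'\uparrow\rho$.
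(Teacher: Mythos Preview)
Your proof is correct and is essentially the same argument as the paper's: both pass to $g(z)=f(\tfrac12(z+z^{-1}))$ on the annulus, express the Chebyshev coefficients as contour integrals over $|z|=\rho$ (the paper via the cosine/Fourier form, you via the Laurent series with the symmetry $c_m=c_{-m}$), obtain the identical bound $|a_j|\le 2M\rho^{-j}$, and sum the geometric tail. Your treatment is in fact slightly more careful about the boundary issue (taking $r\uparrow\rho$ rather than integrating directly on $C_\rho$), but the substance is the same.
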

\begin{proof}
The series expansion is a disguised cosine series:
\begin{align}
	f(\cos \theta) 
	= \sum_{j=0}^\infty a_j T_j(\cos \theta) 
	= \sum_{j=0}^\infty a_j \cos( j \theta).
\end{align}
It is important that $\theta \mapsto f(\cos\theta)$ is a periodic smooth function,
and therefore its Fourier series converges to the function value.
The coefficients can be read off by
\begin{align}
	a_j &= \frac{1}{\pi} \int_0^{2\pi}\rd \theta f(\cos\theta) \cos j \theta & (j > 0) ,\\
	a_0 &= \frac{1}{2\pi} \int_0^{2\pi} \rd \theta f(\cos\theta)  &(j=0).
\end{align}
The assumption on the analyticity
means that $z \mapsto f(\frac 12 (z+z^{-1}))$ is analytic
on the annulus $\{ z \in \CC : \rho^{-1} < |z| < \rho \}$.
Thus, we can write the above integral as a contour integral along the circle
$C_\rho = \{ z \in \CC : |z| = \rho\}$,
and obtain a bound for $j > 0$
\begin{align}
|a_j |
= 
\abs{
\frac{1}{\i \pi} \int_{C_\rho} \frac{\rd z}{z} 
f\left(\frac{z+z^{-1}}{2}\right)\frac{z^j + z^{-j}}{2}
}
=
\frac{1}{\pi} \abs{
	\int_{C_\rho} \frac{\rd z}{z} ~f\left(\frac{z+z^{-1}}{2}\right) z^{-j}
} 
\le 
\frac{2M}{\rho^{j}}
\end{align}
where the second equality is due to the symmetry $z \leftrightarrow z^{-1}$ of the integrand.
For $j=0$, we know $\abs{a_0} \le M$.
Since $\abs{T_j(x)} \le 1$ for $x \in [-1,1]$, this completes the proof.
\end{proof}	
Therefore, a function on $[-1,1]$ that is analytic over $E_\rho$
can be computed to accuracy $\epsilon$ 
by evaluating a polynomial of degree $\calO(\log(1/\epsilon))$.

\section{Hamiltonian simulation by quantum signal processing and Qubitization}
\label{app:qspalg}

In this section, we outline Hamiltonian simulation of $e^{-itH}$ 
by quantum signal processing~\cite{QSP} and Qubitization~\cite{LC16} in three steps, 
and introduce a simple situational circuit optimization for constant factor improvements in gate costs. 

First, one assumes that the Hamiltonian $H$ acting on register $s$ is encoded in a certain standard-form:
\begin{align}
(\bra{G}_a\otimes \id_s)O(\ket{G}_a\otimes \id_s) = H/\alpha.
\label{eq:standard_form}
\end{align} 
Here, we assume access to a unitary oracle $O$ that acts jointly on registers $a,s$, and a unitary oracle $G$ that prepares some state $G\ket{0}_a=\ket{G}_a$ such that~\cref{eq:standard_form} is satisfied with some normalization constant $\alpha \ge \|H\|$. Note that $\alpha$ represents the quality of the encoding; a smaller $\alpha$ leads fewer overall queries to $O$ and $G$.

Second, the Qubitization algorithm queries $O$ and $G$ 
to construct a unitary $W$, the qubiterate, with eigenphases $\theta_\lambda=\sin^{-1}{(\lambda/\alpha)}$ directly related to eigenvalues of the Hamiltonian $H\ket{\lambda}=\lambda\ket{\lambda}$. 
In the case where $O^2=\id_{as}$, 
this is accomplished by a reflection about $\ket{G}_a$:
\begin{align}
W &= 
-\i\left((2\ket{G}\bra{G}_a -\id_a)\otimes\id_s\right) O.
\label{eq:qubitization}
\end{align} 
For every eigenstate $\ket{\lambda}$, the normalized states
\begin{align}
&\ket{G_\lambda}=\ket{G}_a\ket{\lambda}_s,
&
 \ket{G_\lambda^\perp}\propto (1-\bra{G_\lambda}W\ket{G_\lambda})\ket{G_\lambda},
 \\\nonumber
&\ket{G_{\lambda\pm}}=\frac{1}{\sqrt{2}}(\ket{G_\lambda}\pm \i\ket{G_\lambda^\perp})
&
\end{align}
are eigenstate of $W$ with eigenvalues 
\begin{align}
W\ket{G_{\lambda\pm}}=\mp e^{\pm \i \theta_\lambda}\ket{G_{\lambda\pm}},\quad
\theta_\lambda = \sin^{-1}{(\lambda/\alpha)}.
\end{align}

Third, the quantum signal processing algorithm queries the qubiterate 
to approximate a unitary $V$ which has the same eigenstates $\ket{G_{\lambda\pm}}$, 
but with eigenphases transformed as 
\begin{align}
\mp e^{\pm \i \theta_\lambda} \mapsto e^{-\i \alpha t\sin{\theta_\lambda}} = e^{-\i t \lambda}.
\end{align}
That is, $V$ always has both $\ket{G_\lambda}$ and $\ket{G_\lambda^\perp}$ in an eigenspace.
Therefore, the time evolution by $e^{-itH}$ is accomplished as follows:
\begin{align}
& 
V\ket{G}_a \ket{\lambda}_s
= V\frac{\ket{G_{\lambda+}}+\ket{G_{\lambda-}}}{\sqrt{2}}
= e^{-\i t \lambda} \ket{G}_a \ket{\lambda}_s
\\ \nonumber
&\Longrightarrow (\bra{G}_a\otimes \id_s)V(\ket{G}_a\otimes \id_s) = e^{-itH}.
\end{align}
The transformation from $W$ to $V$
is accomplished by a unitary sequence $V_{\vec\varphi}$
such that $\bra{+}_b (V_{\vec \varphi})_{bas} \ket{+}_b \simeq V$
where $b$ is a single-qubit ancilla, and $\ket \pm =\pm X \ket \pm$.
$V_{\vec \varphi}$ is a product of controlled-$W$ 
interspersed by single-qubit rotations,
parameterized by $\vec\varphi \in\mathbb R^N$ with $N$ even,
defined as follows:
\begin{align}
V_{\vec\varphi}
&=
V^\dag_{\pi+\varphi_N}V_{\varphi_{N-1}}
V^\dag_{\pi+\varphi_{N-2}}V_{\varphi_{N-3}}
\cdots
V^\dag_{\pi+\varphi_2}V_{\varphi_{1}},
\\\nonumber
V_{\varphi}
&=
(e^{-\i\varphi Z/2} \otimes \id_{as})
(\ket{+}\bra{+}_b\otimes \id_{as} + \ket{-}\bra{-}_b\otimes W)
(e^{\i\varphi Z/2}\otimes \id_{as}).
\end{align}
For every eigenstate $W\ket{\theta}_{as}=e^{\i\theta}\ket{\theta}_{as}$, 
the operator $V_{\varphi}$ on a given $\ket{\theta}_{as}$ 
introduces a phase kickback to the ancilla register $b$.
The net action on the ancilla $b$ given $\ket{\theta}_{as}$ is
\begin{align}
e^{-\i\varphi Z/2} \left( e^{\i\theta / 2} e^{-\i\theta X / 2} \right) e^{\i \varphi Z / 2}
=
e^{\i\theta / 2}e^{-\i \theta P_\varphi / 2}, \quad P_\varphi= X \cos \varphi + Y \sin \varphi.
\end{align}
Thus by multiplying out the single qubit-rotations,
\begin{align}
(\bra{+}_b \bra{\theta}_{as})V_{\vec\varphi}(\ket{+}_b \ket{\theta}_{as})
&= 
\bra{+}_b 
e^{-\i \theta P_{\varphi_N}/2}
e^{-\i \theta P_{\varphi_{N-1}}/2} 
\cdots 
e^{-\i \theta P_{\varphi_1}/2}
\ket{+}_b
\nonumber\\
&=
\bra{+}_b \left( 
\sum^{N/2}_{k=0}(c^\id_k \id + \i c^Z_k Z)\cos k\theta + \i(c^X_k X + c^Y_k Y) \sin k \theta
\right) \ket{+}_b,
\nonumber\\
&=
\sum^{N/2}_{k=0}c^\id_k\cos{k\theta} + \i c^X_k \sin{k\theta},
\label{eq:phaseTransf}
\end{align}
where $\{c^\id, c^X, c^Y, c^Z\}$ are real coefficients determined by $\vec \varphi$.
(For the second equality, it is useful to work out $N=2$ case.)
We would like \cref{eq:phaseTransf} to be $e^{-\i t \sin \theta}$,
or a good approximation thereof. 
The following Jacobi-Anger expansion suits this purpose.
\begin{align}
e^{-\i \alpha t \sin{\theta}}
=\left(J_0(\alpha t)
+2\sum^\infty_{k: \text{ even } >0} J_{k}(\alpha t)\cos{k\theta}\right)
-\i
\left(2\sum^\infty_{k: \text{ odd }>0}J_{k}(\alpha t)\sin{k\theta}\right)
\label{eq:JacobiAnger}
\end{align}
where $J_k$ is the Bessel function of the first kind.
Remark that the function $e^{-\i\theta} \mapsto e^{-\i\alpha t \sin \theta}$
sends both $e^{-\i\theta}$ and $- e^{\i\theta} = e^{\i(\theta +\pi)}$ to the same value,
and the same property holds in the right-hand side of \cref{eq:JacobiAnger} term-by-term.
If we keep the series up to order $N/2$,
the truncation error of this approximation is at most $2 \sum_{k > N/2} |J_{k}(\alpha t)|$.
In principle, the angles $\vec{\varphi}$ that generate the desired coefficients $\{c^\id,c^X\}$
may be precomputed by a classical polynomial-time algorithm~\cite{LowMethod,Haah2019product}
given $\alpha \ge \norm{H}$ and $t$.
This ultimately leads to an approximation of $e^{-\i \alpha t \sin{\theta}}$ 
with error
\begin{align}
\epsilon_\square=
\norm{
(\bra{+}_b \bra{G}_{a}\otimes \id_s)V_{\vec\varphi}(\ket{+}_b \ket{G}_{a}\otimes \id_s)-e^{-\i tH}
}
\le 
16\sum^\infty_{k=q}|J_{k}(\alpha t)|\le \frac{32(\alpha t)^q}{2^q q!}
,
\quad 
q = \frac{N}{2}+1.
\end{align}
(The extra factor of 8 in the estimate is mainly because
we cannot guarantee that the truncated series 
is exactly implemented by unitary $V_{\vec \varphi}$.)
When evaluating gate counts of this simulation algorithm, 
we will use placeholder values for $\vec{\varphi}$.
The algorithm requires a post-selection;
however, the success probability is $1-\calO(\epsilon_\square)$
since the post-selected operator is $\epsilon_\square$-close to a unitary.

\subsection{Encoding coefficients in reflections}

With these three steps, the remaining task is 
to construct the oracles $O$ and $G$ 
that encode the desired Hamiltonian. 
For a general Hamiltonian represented as a linear combination of $M$ Pauli operators $P_j$
\begin{align}
H=\sum_{j=0}^{M-1}\alpha_j P_j,\quad \alpha_j>0, \quad
\alpha=\sum_{j=0}^{M-1}\alpha_j,
\end{align}
the most straightforward approach defines
\begin{align}
O=\sum_{j=0}^{M-1}\ket{j}\bra{j}\otimes P_j,
\quad
\ket{G}_a=\sum_{j=0}^{M-1}\sqrt{\frac{\alpha_j}{\alpha}}\ket{j}_a.
\end{align}
It is easy to verify that 
$(\bra{G}_a\otimes \id_s)O(\ket{G}_a\otimes \id_s) = H/\alpha$, and $O^2=\id_{as}$, as desired. 
The gate complexity $O$ and $G$ are asymptotically similar: 
The control logic for $O$ may be constructed using $\calO(M)$ 
NOT, CNOT, and Toffoli gates~\cite{Childs2017}, 
and the creation of an arbitrary dimension $M$ quantum state requires 
$\calO(M)$ CNOT gates and arbitrary single-qubit rotations~\cite{shende2006synthesis}.
Each single-qubit rotation may then by approximated to error $\epsilon$ by a sequence of $\mathcal{O}(\log{1/\epsilon})$ Clifford+$T$ gates~\cite{Kliuchnikov2013Synthesis} -- if the overall algorithm uses $N$ single-qubit rotations, a triangle inequality bounds the total error to at most $N\epsilon$.

However, if many coefficients $\alpha_j$ of H are identical,
the use of arbitrary state preparation is excessively costly.
For instance, in the extreme case where all $\alpha_j$ are identical,
and $M$ is a power of $2$, $\log_2{M}$ Hadamard gates suffice to prepare $\ket{G}_a$ 
--- when $M$ is not a power of two, 
a uniform superposition over all states up to $M$ 
may still be prepared with cost $\calO(\log{M})$ 
by combining integer arithmetic with amplitude amplification.
Or, we can add extra Hamiltonian terms $\id$ to make the number of terms to be a power of 2;
this shifts energy level of the Hamiltonian and 
gives a time-dependent global phase factor to the time evolution unitary.
In another case where only a few $\alpha_j$ differ,
one may exploit a unary representation of the control logic~\cite{poulin2017fast} 
to accelerate the preparation of $\ket{G}$ at the cost of additional space overhead.

Rather than encoding coefficient information in the state $\ket{G}$,
one simple alternate approach is to encode coefficient information 
by replacing the operators $P_j$ with exponentials $U_j=e^{-\i P_j \cos^{-1}(\alpha_j)}$ 
by taking a linear combination of $\frac{U_j+U^\dag_j}{2}=\alpha_j P_j$.
However, as $U^2_j\neq\id$ in general, 
the downside of this approach is that $O^2\neq\id_{as}$,
which violates the prerequisite for the simple qubitization circuit of~\cref{eq:qubitization}.

We present a simple modification that allows us to encode coefficient information 
in unitary operators whilst maintaining the condition $O^2=\id_{as}$.
Consider the two-qubit circuit acting on register $c$.
\begin{align}
Q_j=(\id \otimes e^{\i \beta_j X})
\operatorname{SWAP}
(\id \otimes e^{-\i \beta_j X}), 
\quad
Q_j^2 = \id_{12}
,
\quad
\bra{00}Q_j\ket{00}= \cos^2 \beta_j
\end{align}
where $\beta_j>0$. Thus if we define
\begin{align}
O=\sum_{j=0}^{M-1}\ket{j}\bra{j} \otimes Q_j\otimes P_j,
\quad
\ket{G}_{ac} = \sum_{j=0}^{M-1} \frac{1}{\sqrt{M}} \ket{j}_a \ket{00}_c,
\end{align}
then $O^2=\id_{abc}$ and this encodes the Hamiltonian
\begin{align}
(\bra{G}_{ac}\otimes \id_s) O (\ket{G}_{ac}\otimes \id_s) 
= \frac{1}{M}\sum_{j=0}^{M-1}P_j \cos^2{\beta_j}.
\end{align}
This construction can be advantageous in the situation, 
such as in~\cref{eq:Hamiltonian} where most coefficients are $1$, 
and only a few are less than one 
--- whenever $\beta_j=0$, we replace $Q_j$ with the identity operator.

\bibliographystyle{alphaurl}
\bibliography{lhs-ref}
\end{document}